\theoremstyle{plain}
 \theoremstyle{remark} 
\newtheorem {theo} {\bf Theorem} [section]
\newtheorem {prop} [theo] {\bf Proposition}
\newtheorem {lem} [theo] {\bf Lemma}
\newtheorem{exam} {\bf Example}[section]
\newtheorem{appr} {\bf Approach}[section]
\newtheorem{rem}{\bf Remark}[section]
\numberwithin{equation}{section}
\begin{document}
\title[FROG-PR of analytic signals]{FROG-measurement  based  phase  retrieval  for analytic  signals}
\author{Youfa Li}
\address{College of Mathematics and Information Science\\
Guangxi University,  Nanning, China,530004 }
\email{youfalee@hotmail.com}
\author{Yaoshuai Ma}
\address{College of Mathematics and Information Science\\
Guangxi University,  Nanning, China,530004 }
\email{MYShuai@163.com}
\author{Deguang Han}
\address{
University of Central Florida,  Orlando, FL 32816}
\email{Deguang.Han@ucf.edu}
\thanks{Youfa Li is partially supported by Natural Science Foundation of China (Nos: 61961003, 61561006, 11501132),  Natural Science Foundation of Guangxi (Nos: 2018JJA110110, 2016GXNSFAA380049) and  the talent project of  Education Department of Guangxi Government  for Young-Middle-Aged backbone teachers.
Deguang Han  is partially supported by the NSF grant  DMS-1712602.
}
\keywords{Phase retrieval, FROG measurement,  analytic signal, determination approach.}
\subjclass[2010]{Primary 42C40; 65T60; 94A20}

\date{\today}

\begin{abstract}
While frequency-resolved optical gating (FROG) is widely used in characterizing the
ultrafast pulse in optics,  analytic signals are often considered in time-frequency analysis and signal processing, especially when  extracting instantaneous features of events.
In this paper we examine the
phase retrieval (PR) problem of analytic signals in $\Bbb{C}^N$ by their  FROG measurements. After establishing the ambiguity of
the FROG-PR of analytic signals, we found that the FROG-PR of analytic signals of even
lengths is different from that of analytic signals of odd lengths, and it is also different from the case of $B$-bandlimited signals with $B \leq N/2$. The existing
approach to bandlimited signals can be applied to analytic signals of odd lengths, but it does not apply to the even length case. With the help of two relaxed FROG-PR problems and a translation technique,
we  develop an approach  to FROG-PR for  the analytic signals of even lengths, and prove that  in this case the  generic analytic signals can be  uniquely (up to the ambiguity) determined  by their  $(3N/2+1)$ FROG measurements.
\end{abstract}
\maketitle

\section{Introduction}\label{jibenzhunbei}

The classical phase retrieval (PR)  is a nonlinear   problem  that seeks to
reconstruct   a signal   $\textbf{z}=(z_{0}, \ldots, z_{N-1})$
$\in \mathbb{C}^{N}$ (\emph{up to  the ambiguities})  from   the intensities  of  its   Fourier measurements
(c.f.\cite{Ba4,Fienup1,Fienup2,Fienup3,Gerchberg,reasonforPR})
\begin{align}\notag \begin{array}{lll}  b_{k}:=\big|\sum_{n=0}^{N-1}z_{n}e^{-2\pi \textbf{i}kn/N}\big|, \ k\in \Gamma. \end{array}\end{align}
Here the ambiguity (c.f. \cite{Plonka}) means that,
there exist other  signals in $\mathbb{C}^{N}$ such that they have the same
intensity measurements  as $\textbf{z}$.

 PR   has   been
widely applied   to   engineering   problems such as the  coherent diffraction imaging (\cite{reasonforPR}) and    quantum tomography (\cite{Heinosaarri}). To be adapted to more  applications,  it    has been investigated by non Fourier   measurements
such as frame intensity measurements (e.g. \cite{HanDe1,Ba2,Ba3,Saving,Flinth,Bodmann,Science61,BAS100,Xu1,HanDe3,Han,Xu2,Xu3}) and phaseless sampling (e.g. \cite{ QiyuPR,QiyuPR1,LSUN}). In this paper we  examine the  PR problem related to the FROG (frequency-resolved optical gating)  measurements (c.f. \cite{Eldar,21,22}) for a special type of signals.


Given a pair  $[N, L] \in \mathbb{N}^{2}$ such that $L\leq N$,
denote $r:=\lceil N/L\rceil$.
For $\textbf{z}=(z_{0}, \ldots, z_{N-1})\in \mathbb{C}^{N}$, $k=0,1,\cdots,N-1$ and  $m=0,1,\ldots, r-1$,
define $y_{k,m}:=z_{k}z_{k+mL}$.  Then
the {\bf  $[N, L]$-FROG measurement}  of $\textbf{z}$ at $(k, m)$ is  defined as
\begin{equation}\label{frogmeasuement}\begin{array}{lllll}
|\hat{y}_{k,m}|^{2}&=\big|\sum_{n=0}^{N-1}y_{n,m}e^{-\textbf{i}2\pi kn/N}\big|^{2}\\
&=\big|\sum_{n=0}^{N-1}z_{n}z_{n+mL}e^{-\textbf{i}2\pi kn/N}\big|^{2}.
\end{array}\end{equation}
 It follows from \cite[(3.1)]{Eldar} that
 \begin{equation}\label{FROGFREQ}
\hat{y}_{k,m} =\frac{1}{N}\sum_{l=0}^{N-1}\hat{z}_{l}\hat{z}_{k-l}w^{lm},
\end{equation}
where $w=e^{\textbf{i}2\pi/r}$.  Now the FROG-PR asks to determine $\textbf{z}$ by its FROG measurements
$\{|\hat{y}_{k,m}|^{2}\}$, up to the ambiguity which might be different depending on the type of signals we are working with.
This problem was recently investigated by T. Bendory, D. Edidin  and Y.C. Eldar  \cite{Eldar} for $B$-bandlimited signals in $\mathbb{C}^{N}$. When  $B\leq N/2$,  the ambiguity arises from arbitrary rotation,   arbitrary translation and reflection.

\begin{theo} \cite{Eldar}  Assume that $N/L \geq 4$ and $B\le N/2$. Then generic  $B$-bandlimited signals  in $\Bbb{C}^{N}$ are uniquely (up to their ambiguity)  determined  from $3B$ FROG measurements.
\end{theo}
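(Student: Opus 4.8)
\smallskip
\noindent\textit{Sketch of proof.}
The plan is to pass to the frequency side via \eqref{FROGFREQ}, recover a degree-$(B-1)$ polynomial encoding $\hat{\mathbf z}$ in two layers -- first from the $m=0$ measurements, which turn out to be ordinary one-dimensional phase-retrieval data, and then using a single extra modulation $m=1$ to discard the spurious factorization ambiguities -- and finally to close the argument with a genericity/dimension count.

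First I would normalize. Using the rotation, translation and reflection ambiguities one may assume, for a generic $B$-bandlimited $\mathbf z$, that $\hat z_l=0$ for $l\notin\{0,1,\dots,B-1\}$ and that $\hat z_0=|\hat z_0|>0$. Set $P(x)=\sum_{l=0}^{B-1}\hat z_l x^l$ and $w=e^{2\pi i/r}$ with $r=\lceil N/L\rceil$, which by hypothesis has multiplicative order $r\ge4$. Since $B\le N/2$, there is no cyclic wrap-around in \eqref{FROGFREQ}, so $\hat y_{k,m}=\tfrac1N\,[x^k]\bigl(P(x)\,P(w^m x)\bigr)$ exactly; in particular $\hat y_{k,m}=0$ unless $0\le k\le 2B-2$, so only $2B-1$ frequencies carry any information.

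Now the two layers. Since $y_{n,0}=z_n^2$, the $m=0$ FROG measurements are precisely the squared DFT-moduli of the pointwise square $(z_n^2)_n$, equivalently $|\hat y_{k,0}|^2=\tfrac1{N^2}\bigl|[x^k]P(x)^2\bigr|^2$: the moduli of the coefficients of $P(x)^2$. By the classical theory of one-dimensional Fourier phase retrieval, these moduli determine $P(x)^2$ up to an overall unimodular constant and up to replacing some of its roots by their reciprocal-conjugates; for generic $\mathbf z$ the roots of $P$ are simple and in ``general position'', so the roots of $P^2$ occur in $B-1$ coincident pairs and the admissible root-flips are exactly those flipping a subset of these $B-1$ values, yielding at most $2^{B-1}$ candidate polynomials $P$ (each up to sign). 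This uses the $2B-1$ measurements $|\hat y_{k,0}|^2$, $k=0,\dots,2B-2$. To finish, feed the candidates into the $m=1$ data $|\hat y_{k,1}|^2=\tfrac1{N^2}\bigl|[x^k]\bigl(P(x)P(wx)\bigr)\bigr|^2$: for a generic $\mathbf z$, a candidate $\widetilde P$ reproduces these values only when $\widetilde P(x)\in\{c\,P(x),\ c\,x^{B-1}\overline{P(1/\bar x)}\}$, i.e.\ modulo the genuine rotation/reflection ambiguity. Roughly $B+1$ further measurements with $m=1$ suffice, bringing the total to $3B$; the order bound $r\ge4$ is what guarantees that the maps $P(x)\mapsto P(w^m x)$ are sufficiently non-degenerate for the auxiliary polynomial separating the candidates not to vanish identically.

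The remaining, and hardest, step is the genericity bookkeeping. One must exhibit a proper Zariski-closed (hence Lebesgue-null) subset of the $B$-dimensional space of coefficient vectors $\hat{\mathbf z}$ outside which the above is valid: (a) $\hat z_0\neq0$, $P$ has simple roots, and no two roots of $P$ are reciprocal-conjugate or self-reciprocal-conjugate -- each an explicit algebraic non-vanishing condition; and (b) for every one of the finitely many non-genuine candidates $\widetilde P$, the polynomial expression (in the entries of $\mathbf z$) for ``$\widetilde P$ and $P$ give the same $m=1$ data'' is not the zero polynomial. Step (b) is the crux: it is a finite list of polynomial non-identities, and verifying that each one fails on the generic point -- which is where the multiplicative structure of $P(x)P(w^m x)$ must be used, rather than mere counting -- is the part I expect to require the most work.
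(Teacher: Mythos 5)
Your reduction of the $m=0$ data to classical one-dimensional phase retrieval is where the argument breaks. The identity $|\hat y_{k,0}|=\frac1N\bigl|[x^k]P(x)^2\bigr|$ is correct, but what it hands you are the moduli of the \emph{coefficients} of the self-convolution $P\cdot P$, not the autocorrelation $P(x)\,\overline{P(1/\bar x)}$ (equivalently, not $|P(e^{\textbf{i}\omega})|$). The classical theory \cite{Plonka} --- and its root-flip ambiguity --- applies only to the latter. Coefficient moduli of $P^2$ carry far less rigidity: writing them out, $|[x^0]P^2|=|\hat z_0|^2$, $|[x^1]P^2|=2|\hat z_0||\hat z_1|$, and for $k\ge2$, $|[x^k]P^2|=\bigl|2\hat z_0\hat z_k+(\hbox{terms in }\hat z_1,\dots,\hat z_{k-1})\bigr|$ is a single real equation, i.e.\ one circle in the complex plane, for each new unknown $\hat z_k$. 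So after normalizing $\hat z_0,\hat z_1$ the $m=0$ layer leaves a \emph{continuum} of candidates (roughly a $(B-2)$-torus), not a finite list of $2^{B-1}$ root-flipped polynomials, and your second layer cannot be run as a finite elimination. (Dually: if you insist on viewing $|\hat y_{k,0}|$ as Fourier magnitudes, the signal being phase-retrieved is the time-domain sequence $(z_n^2)_n$, whose associated polynomial $\sum_n z_n^2x^n$ has degree $N-1$ with generic simple roots; its flips have nothing to do with $B$, and one would still face a pointwise square root afterwards.)

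The argument in \cite{Eldar}, which this paper recalls via the pyramid structure \eqref{odd-pyramid}, is organized the other way around: it is iterative across frequencies rather than layered across modulations. One uses the rotation and translation ambiguities to fix $\arg(\hat z_0),\arg(\hat z_1)\in\{0,\pi\}$, and then, for each $k\ge 2$, the row $\hat y_{k,m}=\frac1N\bigl(\hat z_0\hat z_k(1+w^{km})+\sum_{l=1}^{k-1}\hat z_l\hat z_{k-l}w^{lm}\bigr)$ is read, for three suitable values of $m$, as three circle equations with pairwise distinct centers in the single unknown $\hat z_k$; Lemma \ref{lemmm2.1} then gives a unique intersection point for generic signals. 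This is what produces the count of roughly three measurements per frequency, hence $3B$, and it is also where the hypothesis $N/L\ge4$ enters, to guarantee enough admissible modulations $m$ with $1+w^{km}\ne0$ and noncollinear centers. If you want to salvage a two-layer scheme, you would at minimum have to replace the first layer by one that already uses several values of $m$ per frequency.
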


 One of the key steps in the proof  in  \cite[section 3.2]{Eldar}
for the FROG-PR is to assign   $0$
or $\pi$ to   both  the two phases   $\arg(\hat{z}_{0})$ and $\arg(\hat{z}_{1})$. Such an assignment  holds due to the above ambiguity.
In this paper we are interested in investigating analytic signals whose ambiguity varies depending on the lengths of the signals. 


Let $\textbf{x}\in \mathbb{R}^N$  be a real-valued signal with {\bf
discrete Fourier transform  (DFT)} $\widehat{\textbf{x}}=(\hat{x}_{0}, \ldots, \hat{x}_{N-1})$. By L. Marple \cite{analtyic},  the analytic signal $A\textbf{x}=(Ax_{0},
\ldots, $
$Ax_{N-1})$ corresponding to $\textbf{x}$ is defined through its DFT $\widehat{A\textbf{x}}=(\hat{Ax}_{0},
\ldots, \hat{Ax}_{N-1})$, where   for  even length  $N$,
\begin{align}\label{up1}\hat{Ax}_{k}=
\left\{\begin{array}{lll}
\hat{x}_{0},&k=0, \\
2\hat{x}_{k},&1\le k \le N/2-1,\\
\hat{x}_{N/2},&k=N/2,\\
0,&N/2+1 \le k \le N-1,
\end{array}\right.
\end{align}
and for odd length  $N$,
\begin{align} \hat{Ax}_{k}=
\left\{\begin{array}{lll}
\hat{x}_{0},&k=0, \\
2 \hat{x}_{k},&1\le k \le (N-1)/2,\\
0,&(N+1)/2 \le k \le N-1.
\end{array}\right.
\end{align}
By   \cite{analtyic}, the real part  $\Re(A\textbf{x})=\textbf{x}$ and the imaginary part  $\Im(A\textbf{x})$
is the discrete Hilbert transform of $\textbf{x}$. Moreover, the inner product   $
   \langle\Re(A\textbf{x}), \Im(A\textbf{x})\rangle=0$.


Analytic signals form an important class of signals that  have been   widely used in time-frequency analysis and signal processing, especially in  extracting instantaneous features
(e.g. \cite{analyticsignal3,analyticsigna1,analyticsigna2,analyticsignal4}).  In order to examine the phase retrieval (up to ambiguity) problem for analytic signals from their FROG measurements, we need first to identify their ambiguity of FROG-PR measurements, which will be described in  Proposition \ref{even} and Proposition \ref{oddlest}. A key point is that the FROG-PR  ambiguity of analytic signals of even lengths is  different from that of analytic signals of odd lengths, and consequently they are different from that of $B$-bandlimited signals in $\mathbb{C}^{N}$ \cite{Eldar} with $B \leq N/2$. In the next section we will explain while the approach from \cite{Eldar} can be applied to the signals of odd lengths,  even length analytic signals are quite different from the odd length ones and  it requires a different approach for such signals. So we only need to focus on analytic signals of even length\textbf{s}. The following is  the main result (see Theorem \ref{twobridges}) of this paper.

\begin{theo} Assume that $N$ is even, $L$ is odd and $r=\lceil N/L\rceil\geq 5$. Then generic analytic signals in $\mathbb{C}^{N}$  can be determined  uniquely up to the ambiguity in Proposition \ref{even} by their  $(3N/2+1)$ $[N, L]$-FROG measurements.
\end{theo}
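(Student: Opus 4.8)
The plan is to reduce the even-length analytic FROG-PR problem to the bandlimited machinery of \cite{Eldar} by exploiting the special frequency support of analytic signals, but with a translation trick to overcome the obstruction caused by the ``half-weight'' coefficient at $k=N/2$. An even-length analytic signal $A\textbf{x}$ has $\widehat{A\textbf{x}}$ supported on $\{0,1,\ldots,N/2\}$, so at first glance it looks $B$-bandlimited with $B=N/2+1$; the trouble is that $B>N/2$, so the ambiguity analysis and the phase-assignment step of \cite[\S3.2]{Eldar} (which crucially sets $\arg(\hat z_0)$ and $\arg(\hat z_1)$ to $0$ or $\pi$ using translation/reflection freedom) no longer applies verbatim, and indeed Proposition~\ref{even} shows the ambiguity genuinely differs. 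So I would first carefully record, from the FROG identity \eqref{FROGFREQ}, exactly which bilinear products $\hat z_l \hat z_{k-l}$ appear for $m=0$ and for one nonzero value of $m$; because $L$ is odd and $r=\lceil N/L\rceil\ge 5$, the roots of unity $w^{lm}$ separate enough Fourier modes that the measured quantities determine the products $\hat z_l\overline{\hat z_{l'}}$ (equivalently $\hat z_l \hat z_{k-l}$) in a band around the ``edge'' of the support.

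Second, I would introduce the two relaxed FROG-PR problems alluded to in the abstract: one relaxation drops the analyticity/support constraint near $k=N/2$ and treats the signal as genuinely $N/2$-bandlimited (length $N/2$ support), to which Theorem~1.1 (the \cite{Eldar} result, with $B=N/2$, noting $r\ge 5\ge 4$) applies and yields determination of the ``bulk'' coefficients $\hat z_0,\ldots,\hat z_{N/2-1}$ up to rotation, translation, and reflection from $3\cdot(N/2)=3N/2$ measurements. The second relaxation isolates the edge coefficient $\hat z_{N/2}$: using $O(1)$ additional FROG measurements (this is where the ``$+1$'' in $3N/2+1$ comes from) one extracts the product of $\hat z_{N/2}$ with an already-determined neighboring coefficient, hence $\hat z_{N/2}$ itself relative to the bulk. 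The translation technique enters to reconcile the ambiguity group of the relaxed bandlimited problem (which includes arbitrary cyclic translation) with the smaller ambiguity group of the genuine analytic problem in Proposition~\ref{even}: a generic analytic signal has a distinguished frequency support $\{0,\ldots,N/2\}$, so only those translations preserving this support (essentially none, or only the trivial one modulo the allowed rotation) are legitimate, pinning down the alignment between the bulk reconstruction and the edge coefficient.

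Third, I would run the genericity argument: the set of $\textbf{x}\in\mathbb R^N$ for which some nonvanishing/nondegeneracy condition fails (e.g. $\hat z_{N/2-1}=0$, or a resultant-type polynomial in the FROG data vanishing, or the linear system used to propagate phases from the edge inward being rank-deficient) is a proper real-algebraic subvariety, hence of measure zero; on its complement every step above is valid. Assembling: $3N/2$ measurements handle the bandlimited bulk via \cite{Eldar}, one extra batch of $O(1)$ measurements fixes $\hat z_{N/2}$, the translation normalization collapses the reconstruction to the single analytic signal modulo exactly the ambiguity of Proposition~\ref{even}, and a dimension count confirms $3N/2+1$ suffices.

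The main obstacle I expect is the interface between the two relaxations — specifically, proving that the edge coefficient $\hat z_{N/2}$ is actually recoverable (in phase, not just modulus) and correctly aligned with the bulk, rather than introducing a new independent sign/phase ambiguity. The bulk reconstruction from the bandlimited relaxation is only determined up to a translation, and translation acts on $\hat z_{N/2}$ by a phase $e^{-2\pi i\, t\, (N/2)/N}=(-1)^t$; ensuring that the additional FROG measurements tie this sign down unambiguously — and that this is exactly the residual freedom catalogued in Proposition~\ref{even}, no more and no less — is the delicate point, and it is presumably why the precise constant is $3N/2+1$ and why the hypothesis forces $N$ even with $L$ odd (so that $r$ is even or the relevant root of unity $w^{N/2}=\pm1$ behaves as needed). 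Verifying the genericity of the nondegeneracy conditions is routine by comparison.
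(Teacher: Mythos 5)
Your high-level ingredients (two relaxed problems, a translation normalization at the end, genericity via proper algebraic subvarieties) match the paper's strategy in spirit, but the core reduction you propose does not work, and it skips over exactly the step that the paper identifies as the genuinely new difficulty. You suggest treating the signal as $N/2$-bandlimited and invoking the result of \cite{Eldar} to recover the ``bulk'' $\hat z_0,\ldots,\hat z_{N/2-1}$ from $3N/2$ measurements. The obstruction is that you do not possess the FROG measurements of any $N/2$-bandlimited signal: the $k=0$ row of the system \eqref{even-pyramid} reads $|\hat y_{0,m}|=\frac1N\bigl|\hat z_0^2+\hat z_{N/2}^2 w^{Nm/2}\bigr|$, so the edge coefficient contaminates the very first row, and the recursion of \cite[\S 3.2]{Eldar} --- which starts by reading off $|\hat z_0|$ from $k=0$ and then propagates through $k=1,2,\ldots$ --- cannot be initialized. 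From the $m=0,1$ measurements at $k=0$ one only obtains the unordered pair $\{|\hat z_0|,|\hat z_{N/2}|\}$ (the solutions of \eqref{pp4} come in two swapped families), and deciding which element is $|\hat z_0|$ is the central new problem. The paper resolves it with Approach \ref{bvc} and Theorems \ref{yinli2.1} and \ref{auxiliarytheorem}: for a generic signal, the wrong assignment $|\hat{\tilde z}_0|=|\hat z_{N/2}|$ is shown to be inconsistent with five measurements at $k=2$, via a polynomial-in-$w^m$ argument that needs $\#\Lambda\ge 5$ (hence $r\ge 5$, not merely $r\ge 4$). Your proposal contains no mechanism for this disambiguation, and without it the bulk reconstruction never gets off the ground.

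Two smaller corrections. First, once $\hat{\tilde z}_0$ is fixed (up to sign), the paper recovers $\hat z_{N/2}$ inside the same $k=2,\ldots,N/2$ recursion (Approach \ref{approach22}), not by a separate $O(1)$ batch; the ``$+1$'' in $3N/2+1$ is an artifact of the count in \eqref{FORGMEASUREMNTS} ($3$ measurements at $k\le 1$, five at $k=2$, two at $k=3$, three each for $4\le k\le N/2$), not of an extra edge-recovery step. Second, your translation step is essentially right in spirit --- the paper's Theorem \ref{lemma2.6} shows the relaxed problem \eqref{fromone} is solved up to arbitrary rotation, translation and reflection, and Step 3 of Approach \ref{APPROACH3} applies the $(-\arg(\hat{\tilde z}_{N/2})/\pi)$-translation to force $\hat{\mathring z}_{N/2}\in\mathbb R$ and hence analyticity, after which Proposition \ref{remm} collapses the residual freedom to the group of Proposition \ref{even} --- but this endgame presupposes that the full vector $(\hat{\tilde z}_0,\ldots,\hat{\tilde z}_{N/2})$ has already been produced, which returns you to the gap above.
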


%


Note that for \textbf{generic}  analytic signals in $\Bbb{C}^N$ of even lengths, the  bandlimit is $B = N/2 +1$. The above theorem tells us that 
such an analytic signal can be uniquely (up to the ambiguity) determined  by $3B - 2$ number of FROG measurements. Moreover,  the procedures for
such a determination will be provided by Approach \ref{APPROACH3}.

\vspace{3mm}

\noindent{\bf Outline of the paper.} In Section 2 we first present a characterization of analytic signals, and then establish the ambiguity of FROG-PR based on the parity of the lengths of the analytic signals. A rationale is provided at the end of this section to explain why we need a different approach for the even length case. Section 3 is devoted to presenting the step-by-step approach that eventually led to the main theorem. The technical proofs for most of the theoretical preparations that led to the main result (Theorem \ref{twobridges} ) are presented in Section 4. At the end of the paper, we point out that our approach fails for the case when $L$ is even, and further investigation is needed to address this case.

%

\section{The ambiguity of FROG-PR  of  analytic signals}

We first fix a few standard notations.
A complex   number  $0\neq z=\Re(z)+\textbf{i}\Im(z)\in \mathbb{C}$  can be  denoted by   $|z|e^{\textbf{i}\arg(z)}$,  where    $\textbf{i}$, $|z|$ and  $\arg(z)$ are   the imaginary unit,   modulus and     phase, respectively.
The  conjugation  of a complex vector   $\textbf{z}=(z_{0}, \ldots,$
$ z_{N-1})\in \mathbb{C}^{N}$
is denoted by
$\bar{\textbf{z}}:=(\bar{z}_{0}, \ldots, \bar{z}_{N-1}),$
where $\bar{z}_{k}$ is the complex conjugation    of $z_{k}.$
Throughout this paper, the signal $\textbf{z}$ is   $N$-periodic, namely,
$z_{\ell}=z_{\ell+N}$ for all $\ell \in \mathbb{Z}.$
From now on, its DFT is defined  by
$\widehat{\textbf{z}}:=(\hat{z}_{0}, \ldots, \hat{z}_{N-1})$ with
$\hat{z}_{k}= \sum_{n=0}^{N-1}z_{n}e^{-2\pi \textbf{i}kn/N}.$
And through the inverse discrete Fourier transform (IDFT),
$\textbf{z}$ can be reconstructed by $z_{k}= \frac{1}{N}\sum_{n=0}^{N-1}\hat{z}_{n}e^{2\pi \textbf{i}kn/N}.$
For any $\textbf{z}\in \mathbb{C}^{N}$,   its $\theta$-rotation  $\textbf{z}^{\theta}_{ro}:=e^{\textbf{i}\theta}\textbf{z}$,
reflection  $\textbf{z}_{ref}:=(\overline{z_{0}}, \overline{z_{-1}},  \ldots, \overline{z_{-(N-1)}})$   and
for any  $\gamma\in \mathbb{R}$, its $\gamma$-translation  $\textbf{z}^{\gamma}_{tr}$
 is defined through   $\widehat{\textbf{z}^{\gamma}_{tr}}=(\hat{z}_{0}, \hat{z}_{1}e^{\textbf{i}2\pi\gamma/N},
 \ldots, $
 $\hat{z}_{N-1}e^{\textbf{i}2\pi(N-1)\gamma/N})$. We say that  $\textbf{z}$
 is $B$-bandlimited if $\widehat{\textbf{z}}$ contains $N-B$ consecutive zeros (c.f. \cite{Eldar}).
Denote by $\#\Lambda$
the cardinality  of a set  $\Lambda$, and  by $\lceil x\rceil$
the smallest number that is not  smaller than $x\in \mathbb{R}$.

The following characterizes analytic signals,  and its proof will be  presented in section \ref{appendix1}.

\begin{prop}\label{characterizationeven}
Suppose that $\textbf{z} \in \mathbb{C}^{N}$.  Denote  the Cartesian product of sets by $\times$.
Then  $\textbf{z}$ is   analytic   if and only if one of the following two items   holds:\\
(i) for even length  $N$, $\widehat{\textbf{z}}\in \mathbb{R} \times \mathbb{C}^{N/2-1}\times \mathbb{R}\times \overbrace{\{0\}\times \ldots\times \{0\}}^{(N/2-1) \hbox{copies}}$;\\
 (ii) for odd length  $N$, $\widehat{\textbf{z}}\in \mathbb{R} \times \mathbb{C}^{(N-1)/2}\times \overbrace{\{0\}\times \ldots\times \{0\}}^{[(N-1)/2] \hbox{copies}}$.
\end{prop}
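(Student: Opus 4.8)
The plan is to prove both equivalences by directly unwinding the definition of the analytic-signal operator $A$, combined with the standard fact that a vector $\textbf{x}\in\mathbb{C}^{N}$ is real-valued if and only if its DFT is Hermitian-symmetric, i.e. $\hat{x}_{N-k}=\overline{\hat{x}_{k}}$ for all $k$ (indices mod $N$); in particular this forces $\hat{x}_{0}\in\mathbb{R}$, and when $N$ is even also $\hat{x}_{N/2}\in\mathbb{R}$. I will carry out the even case in detail, the odd case being the same argument with the middle frequency $k=N/2$ simply absent.

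For the ``only if'' direction, suppose $\textbf{z}=A\textbf{x}$ with $\textbf{x}\in\mathbb{R}^{N}$. Reading the coordinates of $\widehat{\textbf{z}}$ off the defining relation~\eqref{up1}: $\hat{z}_{0}=\hat{x}_{0}\in\mathbb{R}$ and $\hat{z}_{N/2}=\hat{x}_{N/2}\in\mathbb{R}$ by Hermitian symmetry of $\widehat{\textbf{x}}$; the entries $\hat{z}_{k}=2\hat{x}_{k}$ for $1\le k\le N/2-1$ are unconstrained complex numbers; and $\hat{z}_{k}=0$ for $N/2+1\le k\le N-1$. This is precisely the asserted membership of $\widehat{\textbf{z}}$ in $\mathbb{R}\times\mathbb{C}^{N/2-1}\times\mathbb{R}\times\{0\}^{N/2-1}$.

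For the ``if'' direction, assume $\widehat{\textbf{z}}$ has that form, and build a real preimage $\textbf{x}$ by prescribing its DFT: set $\hat{x}_{0}:=\hat{z}_{0}$ and $\hat{x}_{N/2}:=\hat{z}_{N/2}$ (both real by hypothesis), $\hat{x}_{k}:=\hat{z}_{k}/2$ for $1\le k\le N/2-1$, and complete the upper half by the forced relation $\hat{x}_{N-k}:=\overline{\hat{z}_{k}}/2$ for $1\le k\le N/2-1$. By construction $\widehat{\textbf{x}}$ satisfies $\hat{x}_{N-k}=\overline{\hat{x}_{k}}$ for every $k$ (the cases $k=0$ and $k=N/2$ using the reality hypothesis, the rest by design), hence $\textbf{x}\in\mathbb{R}^{N}$; and substituting this $\textbf{x}$ into~\eqref{up1} returns $\widehat{A\textbf{x}}=\widehat{\textbf{z}}$ coordinate by coordinate, so $\textbf{z}=A\textbf{x}$ is analytic. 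The odd case is handled identically, and is in fact slightly simpler, as there is no $k=N/2$ coordinate: one sets $\hat{x}_{0}:=\hat{z}_{0}$, $\hat{x}_{k}:=\hat{z}_{k}/2$ for $1\le k\le (N-1)/2$, and $\hat{x}_{N-k}:=\overline{\hat{z}_{k}}/2$ for the same range.

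I do not expect a serious obstacle: the proposition is essentially a restatement of the definition of $A$ together with the DFT characterization of real signals. The one place that needs care is the treatment of the edge frequencies $k=0$ (and $k=N/2$ when $N$ is even), where $A$ acts as the identity rather than as multiplication by $2$; there, reality of $\hat{z}_{0}$ (and $\hat{z}_{N/2}$) is simultaneously forced by Hermitian symmetry on the ``only if'' side and is exactly what is needed to reconstruct $\textbf{x}$ on the ``if'' side. A minor secondary point is keeping the index arithmetic mod $N$ straight when verifying Hermitian symmetry of the reconstructed upper half of $\widehat{\textbf{x}}$.
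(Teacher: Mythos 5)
Your proof is correct and follows essentially the same route as the paper: the forward direction is read off from the definition of $A$, and the converse is proved by prescribing the DFT of a real preimage $\textbf{x}$ exactly as in the paper's displayed formula for $\hat{x}_{k}$ and checking Hermitian symmetry. The only cosmetic difference is that the paper also writes down the imaginary part $\textbf{y}$ and verifies its reality by the same IDFT computation that you replace with the cited standard fact.
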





As a consequence of Proposition \ref{characterizationeven}, we get
 \begin{prop}\label{remm}
 Suppose that $\textbf{z} \in \mathbb{C}^{N}$ is  analytic.
  Then the following two items  hold.\\
  (i)  If $N$ is even and $\widehat{\textbf{z}}=(\hat{z}_{0}, \ldots, \hat{z}_{N/2}, 0, \ldots, 0)$
satisfies $\hat{z}_{N/2}\neq0$, then  the  $\gamma$-translation $\textbf{z}^{\gamma}_{tr}$ is still analytic if and only if $\gamma\in \mathbb{Z}.$\\
 (ii)
If  $\hat{z}_{0}\neq0$,   then  its $\theta$-rotation  $e^{\textbf{i}\theta}\textbf{z}$
is still  analytic if and only if $\theta=k\pi$ for $k\in \mathbb{Z}.$

 \end{prop}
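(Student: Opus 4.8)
The plan is to deduce both parts directly from the DFT‑level characterization of analytic signals in Proposition \ref{characterizationeven}, exploiting the fact that translation and rotation act on $\widehat{\textbf{z}}$ by coordinatewise multiplication by unimodular constants, and therefore preserve the support pattern of the DFT.

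For part (i), I would first observe that $\widehat{\textbf{z}^{\gamma}_{tr}}$ is obtained from $\widehat{\textbf{z}}=(\hat{z}_{0},\ldots,\hat{z}_{N/2},0,\ldots,0)$ by multiplying the $k$‑th coordinate by $e^{\textbf{i}2\pi k\gamma/N}$. In particular the last $N/2-1$ coordinates stay equal to $0$, and the $0$‑th coordinate is unchanged (its factor being $1$), hence remains real. Thus, by Proposition \ref{characterizationeven}(i), the only condition left for $\textbf{z}^{\gamma}_{tr}$ to be analytic is that its $(N/2)$‑th coordinate $\hat{z}_{N/2}e^{\textbf{i}\pi\gamma}$ be real. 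Since $\hat{z}_{N/2}\in\mathbb{R}\setminus\{0\}$ by hypothesis, this is equivalent to $e^{\textbf{i}\pi\gamma}\in\mathbb{R}$, i.e.\ $\sin(\pi\gamma)=0$, i.e.\ $\gamma\in\mathbb{Z}$; and conversely $\gamma\in\mathbb{Z}$ clearly makes $\textbf{z}^{\gamma}_{tr}$ analytic, which gives the claimed equivalence.

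For part (ii), I would use that the rotation satisfies $\widehat{e^{\textbf{i}\theta}\textbf{z}}=e^{\textbf{i}\theta}\widehat{\textbf{z}}$, so once again the blocks of zeros required by Proposition \ref{characterizationeven} (in either the even or the odd length case) are automatically inherited from $\textbf{z}$, and one only needs the coordinates that must be real to remain real after multiplication by $e^{\textbf{i}\theta}$. In both cases the $0$‑th coordinate $\hat{z}_{0}$ must be real; since $\hat{z}_{0}\neq0$ this forces $e^{\textbf{i}\theta}\in\mathbb{R}$, i.e.\ $\theta\in\pi\mathbb{Z}$. Conversely, if $\theta=k\pi$ then $e^{\textbf{i}\theta}=\pm1$, which keeps $\hat{z}_{0}$ real (and, in the even length case, keeps $\hat{z}_{N/2}$ real) and does not disturb the zero blocks, so $e^{\textbf{i}\theta}\textbf{z}$ is analytic. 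Hence $e^{\textbf{i}\theta}\textbf{z}$ is analytic if and only if $\theta=k\pi$ for some $k\in\mathbb{Z}$.

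Both arguments are short consequences of the characterization, so I do not expect a genuine obstacle. The only points requiring slight care are to verify that all the passive conditions (the vanishing coordinates, and in (i) the reality of $\hat{z}_{0}$) really are preserved under translation and rotation, and to note that the nonvanishing hypotheses $\hat{z}_{N/2}\neq0$ in (i) and $\hat{z}_{0}\neq0$ in (ii) are exactly what upgrades the reality requirement into the sharp conclusion on $\gamma$ (resp.\ $\theta$); without them the constraint would be vacuous.
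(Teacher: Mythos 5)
Your proposal is correct and follows essentially the same route as the paper, which states Proposition \ref{remm} as a direct consequence of the DFT-level characterization in Proposition \ref{characterizationeven} without writing out the details; you have simply made explicit the observation that translation and rotation act by unimodular factors, preserve the zero block, and that the nonvanishing real coordinates ($\hat{z}_{N/2}$ in (i), $\hat{z}_{0}$ in (ii)) force the factor $e^{\textbf{i}\pi\gamma}$ (resp.\ $e^{\textbf{i}\theta}$) to be real.
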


Proposition \ref{remm} (i) implies that in the even length case,   the  non-integer translation does not
inherit the analytic property. Moreover, the following example shows  that
 if $\gamma\not\in \mathbb{Z}$, then the $\gamma$-translation $\textbf{z}^{\gamma}_{tr}$
does not necessarily  have the same FROG measurements as $\textbf{z}$.

\begin{exam}\label{kewangyouhua}
Let
$\textbf{x}=(0.3252, -0.7549, 1.3703,
-1.7115)\in \mathbb{R}^{4}$.
Then its analytic signal is given by  $\textbf{z}=A\textbf{x}=(0.3252 - 0.4783\textbf{i},  -0.7549 - 0.5226\textbf{i},   1.3703 + 0.4783\textbf{i},  -1.7115 + 0.5226\textbf{i})$.  By DFT, we have  $\widehat{\textbf{z}}=(-0.7710, -2.0902 - 1.9132\textbf{i},   4.1619,$
$   0)$.
Choose $\gamma=\frac{2}{\pi}$. Then we have $\widehat{\textbf{z}^{\gamma}_{tr}}=(-0.7710,   0.4805-2.7925\textbf{i},  -1.7320 + 3.7844\textbf{i},$
$   0)$ and  $\textbf{z}^{\gamma}_{tr}=(-0.5056 + 0.2480\textbf{i},   0.9384 - 0.8260\textbf{i},  -0.7459 + 1.6442\textbf{i},  -0.4579 - 1.0662\textbf{i})$. Take  $L=1$ for example. By direct  calculation,
the FROG measurement of $\textbf{z}$ at $(0, 0)$ is  $|\hat{y}_{0,0}|^{2}=20.0614$
while that of $\textbf{z}^{\gamma}_{tr}$ is $|\hat{y^{\gamma}}_{0,0}|^{2}=17.9335$.
Thus the $\textbf{z}$ and $\textbf{z}^{\gamma}_{tr}$ do not  have the same FROG measurements.
\end{exam}

The following two propositions establish the ambiguity of FROG-PR of analytic signals.

\begin{prop}\label{even}
Suppose that   $\textbf{z} \in \mathbb{C}^{N}$ is   analytic   such that  $N$ is even. Then its  $\pi$-rotation $-\textbf{z}$,
integer-translation $\textbf{z}^{l}_{tr}$ with $l\in \mathbb{Z}$, and reflection  $\textbf{z}_{ref}$
are all analytic. Moreover, they have the same FROG measurements as $\textbf{z}$.
\end{prop}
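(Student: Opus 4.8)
The plan is to check the two assertions separately: (a) each of $-\textbf{z}$, $\textbf{z}^{l}_{tr}$ with $l\in\mathbb{Z}$, and $\textbf{z}_{ref}$ is again analytic, and (b) each of them produces the same $[N,L]$-FROG measurements as $\textbf{z}$. For (a) I would work entirely on the Fourier side and apply Proposition \ref{characterizationeven}(i): for even $N$ a vector $\textbf{w}$ is analytic exactly when $\hat w_{0}\in\mathbb{R}$, $\hat w_{N/2}\in\mathbb{R}$, and $\hat w_{k}=0$ for $N/2+1\le k\le N-1$. For (b) I would compute straight from the defining sum $\hat y_{k,m}=\sum_{n=0}^{N-1}z_{n}z_{n+mL}e^{-\textbf{i}2\pi kn/N}$ in \eqref{frogmeasuement}, tracking how each operation transforms this sum under a relabelling of the summation index; in each case the sum is merely multiplied by a unimodular constant, so $|\hat y_{k,m}|^{2}$ is preserved. (The FROG-invariance claims are short direct computations; the substantive content is the analyticity part.)

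For (a): since $\widehat{-\textbf{z}}=-\widehat{\textbf{z}}$, the three defining conditions hold for $-\textbf{z}$ verbatim. For the $l$-translation, $\widehat{\textbf{z}^{l}_{tr}}=(\hat z_{0},\hat z_{1}e^{\textbf{i}2\pi l/N},\ldots,\hat z_{N-1}e^{\textbf{i}2\pi(N-1)l/N})$, so the $0$-th entry is unchanged and the last $N/2-1$ entries stay $0$; the only point needing a word is the middle entry $\hat z_{N/2}e^{\textbf{i}\pi l}=(-1)^{l}\hat z_{N/2}\in\mathbb{R}$, which is where $l\in\mathbb{Z}$ is used. For the reflection, I would first record the identity $\widehat{\textbf{z}_{ref}}_{k}=\overline{\hat z_{k}}$, obtained from $\widehat{\textbf{z}_{ref}}_{k}=\sum_{n}\overline{z_{-n}}\,e^{-\textbf{i}2\pi kn/N}$ by the change of variable $n\mapsto -n$; since conjugation fixes $\mathbb{R}$ and fixes $0$, the three conditions again survive.

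For (b): for $-\textbf{z}$ one has $(-z_{n})(-z_{n+mL})=z_{n}z_{n+mL}$, so the measurements are literally identical. For $\textbf{z}^{l}_{tr}$ with $l\in\mathbb{Z}$ the translation is the cyclic shift $(\textbf{z}^{l}_{tr})_{n}=z_{n+l}$, and the change of variable $n'=n+l$ gives $\hat y^{l}_{k,m}=e^{\textbf{i}2\pi kl/N}\hat y_{k,m}$. For $\textbf{z}_{ref}$, write $(\textbf{z}_{ref})_{n}(\textbf{z}_{ref})_{n+mL}=\overline{z_{-n}\,z_{-n-mL}}$; the change of variable $n\mapsto -n$ turns the FROG sum into $\overline{\sum_{j}z_{j}z_{j-mL}e^{-\textbf{i}2\pi kj/N}}$, and a further index shift by $mL$ identifies this with $e^{\textbf{i}2\pi kmL/N}\,\overline{\hat y_{k,m}}$. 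In all three cases $|\hat y_{k,m}|^{2}$ is unchanged.

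I do not expect a genuine obstacle here; the statement is essentially a bookkeeping exercise. The step most prone to index or sign slips is the reflection: one must use that reflection on the time side becomes plain conjugation (with no index reversal) on the Fourier side, and at the same time keep the spurious $-mL$ offset that appears inside the FROG sum under control until it is absorbed into the harmless phase $e^{\textbf{i}2\pi kmL/N}$. It is also worth flagging, as a bridge to the later sections, why only \emph{integer} translations appear in this ambiguity: the requirement $\hat z_{N/2}\in\mathbb{R}$ forces $\hat z_{N/2}e^{\textbf{i}\pi\gamma}\in\mathbb{R}$, hence $e^{\textbf{i}\pi\gamma}=\pm1$, hence $\gamma\in\mathbb{Z}$ — which is exactly the obstruction quantified in Proposition \ref{remm}(i) and the reason the even-length case will later need a treatment of its own.
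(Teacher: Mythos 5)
Your proposal is correct and follows essentially the same route as the paper: analyticity is checked on the Fourier side via Proposition \ref{characterizationeven}(i) (the paper routes $-\textbf{z}$ and $\textbf{z}^{l}_{tr}$ through Proposition \ref{remm}, which encapsulates the same computation, including the key point that $\hat z_{N/2}e^{\textbf{i}\pi l}=(-1)^{l}\hat z_{N/2}$ forces $l\in\mathbb{Z}$), and the FROG-invariance is exactly the content of \cite[Proposition 2.2]{Eldar}, which the paper cites and you verify directly. Your index computations for the translation and reflection cases (the phases $e^{\textbf{i}2\pi kl/N}$ and $e^{\textbf{i}2\pi kmL/N}\overline{\hat y_{k,m}}$) are correct, so the only difference is that your write-up is self-contained where the paper defers to a citation.
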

\begin{proof}
By Proposition  \ref{remm}, both  $-\textbf{z}$ and  $\textbf{z}^{l}_{tr}$
are analytic. By Proposition \ref{characterizationeven} (i), $\textbf{z}_{ref}$
is also analytic.
Moreover, by \cite[Proposition 2.2]{Eldar} they have the same FROG measurements as $\textbf{z}$.
\end{proof}



\begin{prop}\label{oddlest}
Suppose that   $\textbf{z} \in \mathbb{C}^{N}$ is   analytic   such that  $N$ is odd. Then its   $\pi$-rotation  $-\textbf{z}$,
any translation  $\textbf{z}^{\gamma}_{tr}$ with $\gamma\in \mathbb{R}$, and reflection  $\textbf{z}_{ref}$
are all analytic. Moreover, they have the same FROG measurements as $\textbf{z}$.
\end{prop}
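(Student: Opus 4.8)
The plan is to treat the three operations one at a time: in each case I would first verify analyticity via the frequency-domain characterization of Proposition~\ref{characterizationeven}(ii), and then verify that all FROG measurements are preserved.

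\emph{Analyticity.} Since $N$ is odd, Proposition~\ref{characterizationeven}(ii) says $\textbf{z}$ is analytic precisely when $\hat z_0\in\mathbb{R}$ and $\hat z_k=0$ for $(N+1)/2\le k\le N-1$. The $\pi$-rotation has DFT $-\widehat{\textbf{z}}$, which plainly still satisfies both conditions (this is also Proposition~\ref{remm}(ii) with $\theta=\pi$). The $\gamma$-translation has DFT $(\hat z_0,\hat z_1 e^{\textbf{i}2\pi\gamma/N},\dots,\hat z_{N-1}e^{\textbf{i}2\pi(N-1)\gamma/N})$: its zeroth entry is unchanged, hence still real, while each trailing zero entry is multiplied by a unimodular number and so stays zero. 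This is exactly where the odd case departs from the even case --- for even $N$ there is an extra index $k=N/2$ at which analyticity requires $\hat z_{N/2}\in\mathbb{R}$, and after translation that entry becomes $\hat z_{N/2}e^{\textbf{i}\pi\gamma}$, which is real only for $\gamma\in\mathbb{Z}$ (cf. Proposition~\ref{remm}(i) and Example~\ref{kewangyouhua}); for odd $N$ there is no such index, so no constraint on $\gamma$ arises. For the reflection I would record the identity $\widehat{\textbf{z}_{ref}}=\overline{\widehat{\textbf{z}}}$ (componentwise), obtained from the substitution $n\mapsto -n$ in the DFT sum together with the $N$-periodicity of $\textbf{z}$; conjugation keeps $\hat z_0$ real and keeps the trailing zeros zero, so $\textbf{z}_{ref}$ is analytic.

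\emph{FROG invariance.} For the $\pi$-rotation, the reflection, and integer translations this is \cite[Proposition 2.2]{Eldar} --- the same step used to prove Proposition~\ref{even} --- and can also be checked in one line from \eqref{FROGFREQ}, since replacing $\hat z_l$ by $-\hat z_l$ leaves each product $\hat z_l\hat z_{k-l}$ unchanged and reflection conjugates $\hat y_{k,m}$ up to a unimodular factor. The one genuinely new point is the \emph{fractional} translation. Here I would exploit that, for odd $N$, an analytic signal is bandlimited to the $\lceil N/2\rceil$ frequencies $\{0,\dots,(N-1)/2\}$, so a term $\hat z_l\hat z_{k-l}$ in \eqref{FROGFREQ} can be nonzero only when both $l$ and $(k-l)\bmod N$ lie in $\{0,\dots,(N-1)/2\}$; since $2\cdot\tfrac{N-1}{2}=N-1<N$, such an $l$ forces $k-l$ to coincide with its residue without any circular wraparound, i.e. $k=l+(k-l)$ in $\{0,\dots,N-1\}$. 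Hence, for the $\gamma$-translated signal the corresponding product of DFT coefficients is $(\hat z_l e^{\textbf{i}2\pi l\gamma/N})(\hat z_{k-l}e^{\textbf{i}2\pi(k-l)\gamma/N})=\hat z_l\hat z_{k-l}\,e^{\textbf{i}2\pi k\gamma/N}$, independent of the summation index $l$, so $\hat y^\gamma_{k,m}=e^{\textbf{i}2\pi k\gamma/N}\hat y_{k,m}$ and $|\hat y^\gamma_{k,m}|^2=|\hat y_{k,m}|^2$.

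The statement carries no deep difficulty, but the point that must be handled with care --- and the one worth stating explicitly --- is exactly this fractional-translation claim: unlike the even case, where Example~\ref{kewangyouhua} shows it can fail, the absence of a ``Nyquist'' index $N/2$ when $N$ is odd both preserves analyticity under arbitrary real translations and, by keeping $2(B-1)=N-1<N$, prevents the circular wraparound in \eqref{FROGFREQ} that would otherwise destroy the cancellation of the translation phase. This parity asymmetry is precisely what forces the even-length case to be treated by the different method developed later in the paper.
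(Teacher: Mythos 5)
Your proof is correct. The paper in fact states Proposition \ref{oddlest} without any proof, implicitly reusing the ingredients of the proof of Proposition \ref{even} (analyticity via Proposition \ref{characterizationeven} and FROG invariance via \cite[Proposition 2.2]{Eldar}); your argument supplies exactly those details, and your explicit no-wraparound computation for fractional translations --- the only point where the odd case genuinely differs from the even one --- is the right justification and matches the calculation the paper later carries out in the proof of Theorem \ref{lemma2.6} for the system with $k\ge 1$.
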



By Propositions \ref{even} and \ref{oddlest},  the FROG-PR  ambiguity of analytic signals of odd lengths is essentially
different from that of analytic signals of even lengths, and  the ambiguity of analytic signals for the even length case
is also different from ambiguity of signals investigated in \cite[Proposition 2.2]{Eldar}.   
Therefore it is expected that different approaches might be needed for analytic signals of different lengths. Next we  explain that while  the FROG-PR of analytic signals of odd lengths
can be achieved through the similar procedures as those in \cite[section 3.2]{Eldar}, such  procedures do not hold for analytic signals of even lengths.
\vspace{3mm}

{\bf Case I: $N$ is odd.} Suppose that $\textbf{z} \in \mathbb{C}^{N}$ is analytic  and  $N$ is odd.
Through the direct calculation, the equation system \eqref{FROGFREQ} enjoys the following "pyramid" structure
w.r.t   variables $\hat{z}_{0}, \ldots, \hat{z}_{(N-1)/2}$:
\begin{center}
\begin{gather}
\hat{z}_{0}^{2},0,0,\cdots,0,0,\cdots,0  \nonumber \\
\hat{z}_{0}\hat{z}_{1},\hat{z}_{1}\hat{z}_0,0,\cdots,0,0,\cdots,0   \nonumber\\
\hat{z}_{0}\hat{z}_{2},\hat{z}_{1}^{2},\hat{z}_{2}\hat{z}_{0},\cdots,0,0,\cdots,0 \nonumber \\
\vdots\nonumber\\
\hat{z}_{0}\hat{z}_{(N-1)/2},\hat{z}_{1}\hat{z}_{(N-1)/2-1},\hat{z}_{2}\hat{z}_{(N-1)/2-2},\cdots,\hat{z}_{(N-1)/2}\hat{z}_{0},0,\cdots,0  \label{odd-pyramid}\\
0,\hat{z}_{1}\hat{z}_{(N-1)/2},\hat{z}_{2}\hat{z}_{(N-1)/2-1},\cdots,\hat{z}_{(N-1)/2}\hat{z}_{1},0,\cdots,0   \nonumber\\
\vdots \nonumber\\
0,0,0,\cdots,\hat{z}_{(N-1)/2-1}\hat{z}_{(N-1)/2},\hat{z}_{(N-1)/2}\hat{z}_{(N-1)/2-1},0,\cdots,0 \nonumber\\
0,0,0,\cdots,0,\hat{z}^{2}_{(N-1)/2},0,\cdots,0.\nonumber
\end{gather}
\end{center}
It takes the identical form as that in \cite[(3.4)]{Eldar} for $B$-bandlimited signals  where  $B\leq N/2$.
Therefore,  by letting $\arg(\hat{z}_{0})\in \{0, \pi\}$ and  using  the similar procedures in \cite[section 3.2]{Eldar}, we can determine   $\textbf{z}$ up to the ambiguity in
Proposition \ref{oddlest} by FROG measurements. Thus  this paper will be only focused the case when $N$  is even.
\vspace{2mm}

 {\bf Case II:  $N$ is even.}  Suppose that $\textbf{z} \in \mathbb{C}^{N}$ such that $N$ is even. Then the corresponding  system \eqref{FROGFREQ} has  the following  structure w.r.t   variables $\hat{z}_{0}, \ldots, \hat{z}_{N/2}$:
\begin{center}
\begin{gather}
\hat{z}_{0}^{2},0,0,\cdots,\hat{z}_{N/2}^{2},0,\cdots,0 \nonumber \\
\hat{z}_{0}\hat{z}_{1},\hat{z}_{1}\hat{z}_0,0,\cdots,0,0,\cdots,0 \nonumber \\
\hat{z}_{0}\hat{z}_{2},\hat{z}_{1}^{2},\hat{z}_{2}\hat{z}_{0},\cdots,0,0,\cdots,0 \nonumber \\
\vdots\nonumber\\
\hat{z}_{0}\hat{z}_{N/2-1},\hat{z}_{1}\hat{z}_{N/2-2},\hat{z}_{2}\hat{z}_{N/2-3},\cdots,0,0,\cdots,0   \label{even-pyramid}\\
\hat{z}_{0}\hat{z}_{N/2},\hat{z}_{1}\hat{z}_{N/2-1},\hat{z}_{2}\hat{z}_{N/2-2},\cdots,\hat{z}_{N/2}\hat{z}_{0},0,\cdots,0   \nonumber\\
0,\hat{z}_{1}\hat{z}_{N/2},\hat{z}_{2}\hat{z}_{N/2-1},\cdots,\hat{z}_{N/2}\hat{z}_{1},0,\cdots,0   \nonumber\\
\vdots \nonumber\\
0,0,\cdots,0,\hat{z}_{N/2-1}\hat{z}_{N/2},\hat{z}_{N/2}\hat{z}_{N/2-1},0,\cdots,0.  \nonumber
\end{gather}
\end{center}
Note that if $\hat{z}_{N/2}\neq0$,  then \eqref{even-pyramid} (actually not a pyramid structure) takes the different form from \eqref{odd-pyramid}.
The  procedures for  FROG-PR in \cite[section 3.2]{Eldar} do not hold for $\textbf{z}$.
Firstly, in \cite[section 3.2]{Eldar}, for a $B$-bandlimited signal    $\textbf{y}\in \mathbb{C}^{N}$ such that $B\leq N/2$, $\arg(\hat{y}_{0})$ and $\arg(\hat{y}_{1})$
can be assigned arbitrarily. Such an assignment does not  hold for the  analytic signal $\textbf{z}$  since
Proposition \ref{even} implies that the  arbitrary translation and rotation do not necessarily  lead to the ambiguity of  FROG-PR.
Secondly, unlike those  in \eqref{odd-pyramid} and \cite[(3.4)]{Eldar}, the first row in \eqref{even-pyramid}
is involved with the  two variables $\hat{z}_{0}, \hat{z}_{N/2}\in \mathbb{R}$. Then $\hat{z}_{0}$ can not be determined by just letting $\arg(\hat{z}_{0})\in \{0, \pi\}$. Actually, it will be clear  in Theorems \ref{auxiliarytheorem} and \ref{yinli2.1} that the  determination of
$\hat{z}_{0}$ is absolutely not trivial when $|\hat{z}_{0}|\neq |\hat{z}_{N/2}|$. On the other hand,
from the perspective of $L$ in \eqref{frogmeasuement}, the problem of  FROG-PR of analytic signals of even lengths is also
different from that of bandlimited signals in \cite{Eldar} since it is required in this paper that $L$ needs to be odd.
More details will be included in section \ref{future}.
\section{The Main Results}\label{section3}

The aim of this section is to establish the uniqueness results for  the  FROG-PR problem for   analytic signals in
$\mathbb{C}^{N}$ by the $[N, L]$-FROG measurements, where   $N$ and $L$ are   respectively even and  odd such that
 $r=\lceil N/L\rceil\geq 5$. This will be achieved by introducing
a series of  approaches/algorithms for the determination of such analytic signals.


From now on
the DFT of an analytic signal  $\textbf{z}\in \mathbb{C}^{N}$ is denoted by $\widehat{\textbf{z}}=(\hat{z}_{0}, \ldots, \hat{z}_{N/2},$
$ 0, \ldots, 0)$.
It follows from  \eqref{FROGFREQ} and  $\hbox{supp}(\widehat{\textbf{z}})\subseteq\{0, \ldots, N/2\}$
that, the  FROG-PR of $\textbf{z}$  is equivalent to finding  an analytic signal $\tilde{\textbf{z}}\in \mathbb{C}^{N}$
such that its DFT $\widehat{\tilde{\textbf{z}}}=(\hat{\tilde{z}}_{0}, \ldots,  \hat{\tilde{z}}_{N/2}, \overbrace{0, \ldots, 0}^{N/2-1})$ satisfies the following conditions:
\begin{align}\label{FROGFREQ123489}\left\{\begin{array}{lllll}
 |\hat{y}_{0,m}|&=&\frac{1}{N}\big|\hat{\tilde{z}}^{2}_{0}+\hat{\tilde{z}}^{2}_{N/2}w^{Nm/2}\big|, \quad  (\ref{FROGFREQ123489} \hbox{A})\\
 |\hat{y}_{k,m}|&=&\frac{1}{N}\big|\sum_{l=0}^{k}\hat{\tilde{z}}_{l}\hat{\tilde{z}}_{k-l}w^{lm}\big|, k=1,\cdots, N/2, \quad  (\ref{FROGFREQ123489} \hbox{B})\\
 |\hat{y}_{k,m}|&=&\frac{1}{N}\big|\sum_{l=k-N/2}^{N/2}\hat{\tilde{z}}_{l}\hat{\tilde{z}}_{k-l}w^{lm}\big|, k= N/2+1,\cdots, N-1, \quad  (\ref{FROGFREQ123489} \hbox{C})\\
 &&m=0,1,\cdots,r-1, \end{array}\right. 
  \end{align}
where  $\{|\hat{y}_{k,m}|^{2}\}$ are  the $[N, L]$-FROG measurements of $\textbf{z}$.
Clearly, $\widehat{\textbf{z}}$ satisfies \eqref{FROGFREQ123489}.
\vspace{3mm}


We outline  below  the four key steps that  we will use to find such a  $\tilde{\textbf{z}}$.
\vspace{3mm}

\begin{itemize}


\item[(i)] It follows from Proposition \ref{characterizationeven} (i) that $\hat{z}_{0}$ is real-valued.
Subsection \ref{determinationofz0} will be used  to determine   $\hat{\tilde{z}}_{0}\in \mathbb{R}$  up to a sign.


\item[(ii)] It follows from (\ref{FROGFREQ123489}\hbox{B}) that     $|\hat{\tilde{z}}_{1}|:=\frac{N|\hat{y}_{1,0}|}{2|\hat{\tilde{z}}_{0}|}$.
Setting $\hat{\tilde{z}}_{0}=\epsilon\hat{z}_{0}$  with $\epsilon\in \{1, -1\}$  and $\hat{\tilde{z}}_{1}=\frac{N|\hat{y}_{1,0}|}{2|\hat{\tilde{z}}_{0}|}$,
Subsection \ref{determinationofz1} will be devoted to finding   a  signal $\tilde{\textbf{z}}\in \mathbb{C}^{N}$
such that its DFT $\widehat{\tilde{\textbf{z}}}=(\hat{\tilde{z}}_{0}, \hat{\tilde{z}}_{1}, \ldots,  \hat{\tilde{z}}_{N/2}, \overbrace{0, \ldots, 0}^{N/2-1})=
(\epsilon\hat{z}_{0}, \frac{N|\hat{y}_{1,0}|}{2|\hat{\tilde{z}}_{0}|},  \ldots,  \hat{\tilde{z}}_{N/2}, \overbrace{0, \ldots, 0}^{N/2-1})$ satisfies
\begin{align}\label{clearly}\left\{\begin{array}{lllll}
 |\hat{y}_{k,m}|&=&\frac{1}{N}\big|\sum_{l=0}^{k}\hat{\tilde{z}}_{l}\hat{\tilde{z}}_{k-l}w^{lm}\big|, k=\textcolor[rgb]{0.00,0.07,1.00}{2},\cdots, N/2,\\
 |\hat{y}_{k,m}|&=&\frac{1}{N}\big|\sum_{l=k-N/2}^{N/2}\hat{\tilde{z}}_{l}\hat{\tilde{z}}_{k-l}w^{lm}\big|, k= N/2+1,\cdots, N-1,\\
 &&m=0,1,\cdots,r-1. \end{array}\right.  
  \end{align}
Clearly,  \eqref{clearly} is the relaxed form of  \eqref{FROGFREQ123489}.


\item[(iii)] Based on the result of  (ii) , subsection \ref{subsectionadd} concerns on the solution to the  following equation system  w.r.t    $\tilde{\textbf{z}}\in \mathbb{C}^{N}$
such that its DFT $\widehat{\tilde{\textbf{z}}}=(\hat{\tilde{z}}_{0}, \ldots,  \hat{\tilde{z}}_{N/2}, \overbrace{0, \ldots, 0}^{N/2-1})$:
\begin{align}\label{fromone}\left\{\begin{array}{lllll}
 |\hat{y}_{k,m}|&=&\frac{1}{N}\big|\sum_{l=0}^{k}\hat{\tilde{z}}_{l}\hat{\tilde{z}}_{k-l}w^{lm}\big|, k=\textcolor[rgb]{0.00,0.07,1.00}{1},\cdots, N/2,\\
 |\hat{y}_{k,m}|&=&\frac{1}{N}\big|\sum_{l=k-N/2}^{N/2}\hat{\tilde{z}}_{l}\hat{\tilde{z}}_{k-l}w^{lm}\big|, k= N/2+1,\cdots, N-1,\\
 &&m=0,1,\cdots,r-1. \end{array}\right.  
  \end{align}
Clearly,  \eqref{fromone}  is also the relaxed form of  \eqref{FROGFREQ123489} but  \textbf{less}  relaxed than  \eqref{clearly} since it also contains the $k=1$ case.


\item[(iv)] Having  $\tilde{\textbf{z}}\in \mathbb{C}^{N}$ satisfying  \eqref{fromone} at hand, the procedure in  (translation technique-based)  Approach \ref{APPROACH3} in  subsection  \ref{section2.5}  will allow us to determine  the solution to \eqref{FROGFREQ123489}.
\end{itemize}

\subsection{Auxiliary results}

\begin{lem}\label{lemmm2.1} (\cite{Eldar})
Consider the equation system  w.r.t $z\in \mathbb{C}$:
\begin{align} \label{fangcheng1}   \left\{
\begin{aligned}
\big|z+v_{1}\big|=n_{1}, \\
\big|z+v_{2}\big|=n_{2}, \\
\big|z+v_{3}\big|=n_{3},
\end{aligned}
\right.
\end{align}
where $n_{1},n_{2},n_{3}\geq 0$,  and  $v_{1},v_{2},v_{3}\in \mathbb{C}$ are pairwise  distinct. Suppose that  there exists a solution $\mathring{z} = a+\textbf{i}b$ to \eqref{fangcheng1}. If
 \begin{align}\label{condition}  \Im\Big\{\frac{v_{1}-v_{2}}{v_{1}-v_{3}}\Big\}\ne0,\end{align} then $\mathring{z}$ is the unique  solution and it is  given through
\begin{gather}\label{qiujie1}
\begin{pmatrix}a \\b\end{pmatrix}=\frac{1}{2}\left(\begin{array}{cccccccccc}c & -d \\ e & -f\end{array}\right)^{-1}\begin{pmatrix}n_{1}^{2}-n_{2}^{2}+|v_{2}|^{2}-|v_{1}^{2}| \\ n_{1}^{2}-n_{3}^{2}+|v_{3}|^{2}-|v_{1}^{2}| \end{pmatrix},
\end{gather} where $v_{1}-v_{2}=c-\textbf{i}d$ and $v_{1}-v_{3}=e-\textbf{i}f$.
\end{lem}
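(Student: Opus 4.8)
The plan is to remove the single nonlinear quantity $|z|^2$ from \eqref{fangcheng1} by squaring and differencing, reducing the problem to a $2\times 2$ real linear system whose coefficient matrix is exactly the one displayed in \eqref{qiujie1}. Since every $n_j\ge 0$, the system \eqref{fangcheng1} is equivalent to its squared form; writing $z=a+\textbf{i}b$ and expanding $|z+v_j|^2=|z|^2+2\Re(\bar z v_j)+|v_j|^2$, this reads $|z|^2+2\Re(\bar z v_j)+|v_j|^2=n_j^2$ for $j=1,2,3$. Subtracting the $j=1$ equation from the $j=2$ and $j=3$ equations eliminates $|z|^2$ and yields
\begin{equation}\label{proofprop-lin}
2\Re\big(\bar z(v_1-v_2)\big)=n_1^2-n_2^2+|v_2|^2-|v_1|^2,\qquad 2\Re\big(\bar z(v_1-v_3)\big)=n_1^2-n_3^2+|v_3|^2-|v_1|^2.
\end{equation}

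Next I would separate real and imaginary parts. With $v_1-v_2=c-\textbf{i}d$ and $v_1-v_3=e-\textbf{i}f$ one computes $\Re\big(\bar z(v_1-v_2)\big)=ac-bd$ and $\Re\big(\bar z(v_1-v_3)\big)=ae-bf$, so \eqref{proofprop-lin} is precisely $\left(\begin{smallmatrix}c&-d\\ e&-f\end{smallmatrix}\right)\left(\begin{smallmatrix}a\\ b\end{smallmatrix}\right)=\tfrac12\left(\begin{smallmatrix}n_1^2-n_2^2+|v_2|^2-|v_1|^2\\ n_1^2-n_3^2+|v_3|^2-|v_1|^2\end{smallmatrix}\right)$. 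The determinant of the coefficient matrix is $de-cf$. On the other hand $\dfrac{v_1-v_2}{v_1-v_3}=\dfrac{(c-\textbf{i}d)(e+\textbf{i}f)}{e^2+f^2}$ has imaginary part $\dfrac{cf-de}{|v_1-v_3|^2}$, so the hypothesis \eqref{condition} is equivalent to $de-cf\ne 0$, i.e. to the invertibility of $\left(\begin{smallmatrix}c&-d\\ e&-f\end{smallmatrix}\right)$.

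To finish, I would note that any solution of \eqref{fangcheng1} necessarily satisfies the linear system \eqref{proofprop-lin} (the differencing step is only an implication), while under \eqref{condition} that linear system has exactly one solution, namely the right-hand side of \eqref{qiujie1}. Hence \eqref{fangcheng1} has at most one solution; since $\mathring z$ is assumed to exist, it is the unique solution and it is given by \eqref{qiujie1}. The distinctness of the $v_j$ enters only in ensuring $v_1-v_2,v_1-v_3\neq 0$, and is in any case forced by \eqref{condition}. I do not expect a genuine obstacle here: the proof is a short computation, and the only delicate points are bookkeeping ones — matching the sign convention $v_1-v_2=c-\textbf{i}d$ so that the matrix comes out exactly as in \eqref{qiujie1}, and remembering that squaring is reversible for the full system but differencing is not, so uniqueness must be argued by passing from \eqref{fangcheng1} to \eqref{proofprop-lin} and not conversely.
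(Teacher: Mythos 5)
Your proof is correct. The paper itself gives no proof of this lemma --- it is quoted from \cite{Eldar} --- and your squaring-and-differencing argument is the standard one underlying that reference: the computation reducing \eqref{fangcheng1} to the $2\times 2$ linear system, the identification of the hypothesis \eqref{condition} with the nonvanishing of the determinant $de-cf$, and the observation that uniqueness follows because differencing is an implication rather than an equivalence are all accurate (your phrase ``subtracting the $j=1$ equation from the $j=2$ and $j=3$ equations'' describes the subtraction in the opposite order from what your displayed equations actually do, but the displayed equations are the correct ones and match \eqref{qiujie1}).
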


Comparing with Lemma \ref{lemmm2.1}, the following system of two equations has more than one solution.

\begin{lem}\label{lemmm2.2}
Consider the equation system  w.r.t $z\in \mathbb{C}$:
\begin{align} \label{fangcheng2}   \left\{
\begin{aligned}
\big|z+mv_{1}\big|=n_{1}, \\
\big|z+mv_{2}\big|=n_{2},
\end{aligned}
\right.
\end{align}
where $n_{1},n_{2}\geq 0$,   $v_{1},v_{2}\in \mathbb{C}$ are   distinct and $0\neq m\in \mathbb{C}$. Suppose that  there exists a solution $\mathring{z} = a+\textbf{i}b$ to \eqref{fangcheng2}.

(i) If     $m, v_{1},v_{2}\in \mathbb{R}$, then  the other solution  is  $\bar{\mathring{z}}$ and it is  given through
 \begin{align}\label{qiujie2}\left\{\begin{array}{lllll}
 a&=&(n_{1}^{2}-n_{2}^{2})/[2m(v_{1}-v_{2})],\\
 |b|&=&\sqrt{n_{1}^{2}-(a+mv_{1})^{2}}.
  \end{array}\right.\end{align}

(ii) If     $m\in \mathbb{C}\setminus\mathbb{R}$ and   $v_{1},v_{2}\in \mathbb{R}$, then there exits another solution
$z'\neq\bar{\mathring{z}}$. Moreover,   $\mathring{z}$ and $z'$ can be given by
\begin{align}\label{qiujie3}
\mathring{z}=m(\mathring{a}+\textbf{i}\mathring{b}),z'=m(\mathring{a}-\textbf{i}\mathring{b})
\end{align}
where $\mathring{a},\mathring{b} \in \mathbb{R}$ are given by
\begin{align}\left\{\begin{array}{lllll}
\mathring{a}&=&(|\frac{n_{1}}{m}|^{2}-|\frac{n_{2}}{m}|^{2})/[2(v_{1}-v_{2})],\\
|\mathring{b}|&=&\sqrt{|\frac{n_{1}}{m}|^{2}-(a+v_{1})^{2}}.
  \end{array}\right.\end{align}
\end{lem}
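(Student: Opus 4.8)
The plan is to read each equation $|z+mv_j|=n_j$ geometrically, as the statement that $z$ lies on the circle centered at $-mv_j$ with radius $n_j$, and then to exploit the symmetry of the two centers. Squaring, I would write $|z+mv_j|^2=|z|^2+2\Re(\bar z\,mv_j)+|m|^2|v_j|^2=n_j^2$ for $j=1,2$. Subtracting the two equations cancels the quadratic term $|z|^2$ and leaves a single \emph{linear} (real-affine) constraint on $z$; the remaining task is to intersect this line with one of the circles to locate the two points and to record how they are related. Since $v_1\neq v_2$ and $m\neq0$ the two centers are distinct, so the circles are non-identical and meet in at most two points.

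For part (i), with $m,v_1,v_2\in\mathbb{R}$ the centers $-mv_1,-mv_2$ both lie on the real axis, so the whole system is invariant under complex conjugation $z\mapsto\bar z$. Hence if $\mathring z=a+\textbf{i}b$ solves it, so does $\bar{\mathring z}=a-\textbf{i}b$, and by the two-point bound these exhaust the solutions. To get the explicit formulas I would write $z=a+\textbf{i}b$, so $|z+mv_j|^2=(a+mv_j)^2+b^2=n_j^2$; subtracting $j=2$ from $j=1$ gives a linear equation that determines $a$, and substituting back into the $j=1$ circle yields $|b|=\sqrt{n_1^2-(a+mv_1)^2}$, with the sign of $b$ free, which is exactly the conjugate pair.

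For part (ii) the key device is the substitution $w=z/m$, legitimate because $m\neq0$. Since $|z+mv_j|=|m|\,|w+v_j|$, the system becomes $|w+v_j|=n_j/|m|=|n_j/m|$ for $j=1,2$, and now the centers $-v_1,-v_2$ are \emph{real}. Thus the $w$-problem is precisely of the type handled in part (i): its two solutions form a conjugate pair $w=\mathring a\pm\textbf{i}\mathring b$ with $\mathring a,\mathring b\in\mathbb{R}$ given by the part-(i) formulas applied with radii $|n_j/m|$ and centers $v_j$. Undoing the substitution gives $\mathring z=m(\mathring a+\textbf{i}\mathring b)$ and $z'=m(\mathring a-\textbf{i}\mathring b)$. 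Finally, to see $z'\neq\bar{\mathring z}$, I would compare $\bar{\mathring z}=\bar m(\mathring a-\textbf{i}\mathring b)$ with $z'=m(\mathring a-\textbf{i}\mathring b)$: because $m\in\mathbb{C}\setminus\mathbb{R}$ we have $m\neq\bar m$, so the two points differ (as long as $\mathring a-\textbf{i}\mathring b\neq0$). This is the feature that separates the non-real-$m$ case from part (i): the governing symmetry is the rotated conjugation $z\mapsto m\,\overline{(z/m)}$ rather than ordinary conjugation.

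The step I expect to require the most care is the degenerate bookkeeping rather than the algebra. One must confirm that the assumed solution is genuinely one of \emph{two} distinct points: when the two circles are tangent (equivalently $b=0$, resp.\ $\mathring b=0$, in the appropriate coordinate) the ``other'' solution collapses onto $\mathring z$, and the free sign in $|b|$ must be read accordingly. I would therefore state the nondegeneracy that the circles meet transversally, and under it verify that the conjugation symmetry in part (i), and the rotated conjugation in part (ii), each produce a second point distinct from the first. Establishing that it is this rotated symmetry, and not plain conjugation, that controls the non-real-$m$ case is the conceptual heart of the lemma, and is exactly what later forces the even-length analysis to diverge from the bandlimited treatment.
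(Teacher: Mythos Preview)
Your proposal is correct and follows essentially the same route as the paper: for (i) square and subtract to pin down $a$ and then $|b|$, and for (ii) substitute $w=z/m$ to reduce to the real-centered case of (i), then undo the substitution and use $m\neq\bar m$ to conclude $z'\neq\bar{\mathring z}$. Your additional bookkeeping on the tangent/degenerate case (where $b=0$ or $\mathring b=0$, so the two solutions coalesce and the inequality $z'\neq\bar{\mathring z}$ can fail trivially) is in fact more careful than the paper, which simply asserts the inequality from $m\in\mathbb{C}\setminus\mathbb{R}$ without flagging that caveat.
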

\begin{proof}
Item  (i)  is  derived from the last paragraph of the proof of \cite[Lemma 3.2]{Eldar}. We just need to prove item (ii).
Denote   $\frac{z}{m}=\mathring{a}+\textbf{i}\mathring{b}$. Then it follows from $v_{1},v_{2}\in \mathbb{R}$ that
\eqref{fangcheng2} is equivalent to
\begin{align}\label{fhhh1} \left\{
\begin{aligned}
(\mathring{a}+v_{1})^{2}+\mathring{b}^{2}=|\frac{n_{1}}{m}|^{2}, \\
(\mathring{a}+v_{2})^{2}+\mathring{b}^{2}=|\frac{n_{2}}{m}|^{2}.
\end{aligned}
\right.
\end{align}
 Through the  direct calculation we have
$\mathring{a}=(\big|\frac{n_{1}}{m}\big|^{2}-\big|\frac{n_{2}}{m}\big|^{2})/[2(v_{1}-v_{2})].$  On the other hand, if $(\mathring{a}, \mathring{b})$ is the solution to \eqref{fhhh1}, then the other solution
is $(\mathring{a}, -\mathring{b})$.
%
%
%
Stated another way, the solutions to \eqref{fangcheng2} are  $\mathring{z}=m(\mathring{a}+\textbf{i}\mathring{b})$ and $z'=m(\mathring{a}-\textbf{i}\mathring{b})$. Since $m\in \mathbb{C}\setminus\mathbb{R}$, then $z'\neq
\bar{\mathring{z}}.$
\end{proof}

To be clear in Remarks \ref{remark31} and \ref{remark22} that, the following lemma will be needed for the existence of the FROG measurements in Approaches  \ref{bvc} and \ref{approach22}.

\begin{lem}\label{lemma233}
If $5\leq r\in \mathbb{N}$, then  there exist $m_{1}, \ldots, m_{5}\subseteq\{0, 1, \ldots, r-1\}$
such that (1) $1+w^{2m_{l}}\ne0$  ($l=1, \ldots, 5$) and (2) $\frac{w^{m_{1}}}{1+w^{2m_{1}}}\ne\frac{1}{2}$, where $w=e^{\frac{\textbf{i}2\pi}{r}}$. Moreover, (3) there exists $i \in \{1,2,\ldots,r-1\}$ such that   $\frac{w^{i}+w^{2i}}{1+w^{3i}}\ne1$,
\end{lem}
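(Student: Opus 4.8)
The statement is a finite combinatorial fact about the $r$-th roots of unity $w^j = e^{\mathbf{i}2\pi j/r}$, so the plan is to analyze each of the three conditions separately and count how many values of the index can fail, then use $r \geq 5$ to guarantee enough valid choices remain. First I would handle condition (1): $1 + w^{2m} = 0$ forces $w^{2m} = -1$, i.e. $2m \equiv r/2 \pmod r$ — this is possible only if $r \equiv 0 \pmod 4$, and even then it excludes at most two residues $m \in \{0,\dots,r-1\}$. So at least $r - 2 \geq 3$ indices survive condition (1); for those we may safely form the quotient $w^m/(1+w^{2m})$. Next, for condition (2), I would observe that $\frac{w^m}{1+w^{2m}} = \frac12$ is equivalent (multiplying through, using $w^{-m}$) to $w^{-m} + w^{m} = 2$, i.e. $2\cos(2\pi m/r) = 2$, which happens only for $m \equiv 0 \pmod r$. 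Hence among the surviving indices only $m=0$ can violate (2), so we can certainly pick five distinct indices $m_1,\dots,m_5$ in $\{0,\dots,r-1\}$ satisfying (1), and reorder so that $m_1 \neq 0$, giving (2).

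For condition (3) I would argue similarly: $\frac{w^i + w^{2i}}{1 + w^{3i}} = 1$ rearranges to $w^i + w^{2i} = 1 + w^{3i}$, i.e. $w^{2i}(1 + w^i) = 1 + w^{3i} = (1+w^i)(1 - w^i + w^{2i})$ — wait, more cleanly, $w^i + w^{2i} - 1 - w^{3i} = -(1-w^i)(1+w^{2i}) \cdot(\text{something})$; rather than force the factorization I would just note $w^i + w^{2i} = 1 + w^{3i}$ can be rewritten as $w^i(1+w^i) = 1 + w^{3i}$, and since $1 + w^{3i} = 1 + (w^i)^3$ factors as $(1+w^i)(1 - w^i + w^{2i})$, the equation becomes either $1 + w^i = 0$ or $w^i = 1 - w^i + w^{2i}$, i.e. $w^{2i} - 2w^i + 1 = 0$, i.e. $(w^i-1)^2 = 0$, i.e. $w^i = 1$. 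So $\frac{w^i+w^{2i}}{1+w^{3i}} = 1$ only when $w^i \in \{-1, 1\}$, i.e. $i \equiv 0 \pmod r$ or (if $r$ even) $i \equiv r/2 \pmod r$. Since $r \geq 5$ there is certainly some $i \in \{1, 2, \dots, r-1\}$ with $w^i \notin \{-1,1\}$ — for instance $i=1$ always works because $w = e^{\mathbf{i}2\pi/r} \neq \pm 1$ for $r \geq 3$ — and for that $i$ we also need $1 + w^{3i} \neq 0$ so the quotient is defined, but one can simply take $i$ with $w^i$ a primitive-ish root avoiding both $w^{3i} = -1$ and $w^i = \pm 1$, which is possible since each forbidden condition kills $O(1)$ residues and $r \geq 5$.

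\textbf{Main obstacle.} The only real care needed is bookkeeping: making sure that after removing the (at most four or so) bad residues coming from the various denominators vanishing and from the excluded equalities, at least five distinct indices remain for $m_1,\dots,m_5$, and at least one remains for $i$. Since $r \geq 5$ and the total number of excluded residues is bounded by a small constant independent of $r$, this is where the hypothesis $r \geq 5$ is used, and it is comfortably enough; the rest is the elementary root-of-unity algebra sketched above. I expect the write-up to be short, with the factorization $1 + t^3 = (1+t)(1 - t + t^2)$ doing the main work for part (3) and the identity $2\cos\theta = 2 \iff \theta \in 2\pi\mathbb{Z}$ doing it for part (2).
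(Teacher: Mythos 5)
Your proposal follows essentially the same route as the paper's proof: elementary root-of-unity algebra reducing condition (2) to $(w^{m}-1)^{2}=0$ (your $2\cos(2\pi m/r)=2$ is the same identity) and condition (3), via the factorization $1+t^{3}=(1+t)(1-t+t^{2})$, to $w^{i}\in\{1,-1\}$, combined with a count of how many residues each degenerate equation can exclude. The algebra is correct throughout. However, two of your counting steps do not close as written. First, for condition (1) you assert that ``at least $r-2\geq 3$ indices survive'' and then immediately pick \emph{five} distinct surviving indices; the bound $3$ obviously does not yield $5$. The fix is already contained in your own observation: exclusion can occur only when $r\equiv 0\pmod 4$, hence only when $r\geq 8$, in which case at least $r-2\geq 6$ indices survive; and when $4\nmid r$ nothing is excluded and all $r\geq 5$ indices survive. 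You need to say this explicitly rather than quote the bound $r-2\geq 3$. (The paper handles this by splitting into $r$ odd, $r=6$, and $r\geq 8$ even.)

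Second, for condition (3) your final count is ``each forbidden condition kills $O(1)$ residues and $r\geq 5$, which is comfortably enough.'' Made precise, the forbidden sets are $\{i:\ w^{3i}=-1\}$ (at most $3$ residues, since $6i/r$ must lie in $\{1,3,5\}$) and $\{i:\ w^{i}=-1\}$ (at most $1$ residue), against only $r-1\geq 4$ candidates; so ``at most $4$ bad out of at least $4$'' does not by itself produce a survivor when $r=5$. This is repaired by one additional observation: $w^{i}=-1$ implies $w^{3i}=-1$, so the second forbidden set is contained in the first and at most $3$ residues are excluded in total, leaving at least one good $i$ since $r-1\geq 4$. (Alternatively, check $r=5$ directly, where neither equation has a solution — this is what the paper does.) With these two one-line patches your argument is complete and matches the published proof in substance.
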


\begin{proof}
The proof will be given in section \ref{kkeke}.
\end{proof}

The following theorem will be used  in section \ref{subsectionadd}.






\begin{theo}\label{auxiliarytheorem}
Suppose that $\textbf{z}\in \mathbb{C}^{N}$ is a  generic  analytic signal such that $N$ is even, and its DFT
\begin{equation}
\widehat{\textbf{z}}=(\hat{z}_{0}, \hat{z}_{1}, \ldots, \hat{z}_{N/2}, 0,
\ldots, 0):=(|\hat{z}_{0}|e^{\textbf{i}\theta_{0}}, |\hat{z}_{1}|e^{\textbf{i}\theta_{1}}, \ldots, |\hat{z}_{N/2}|e^{\textbf{i}\theta_{N/2}}, 0, \ldots, 0)
\end{equation}
satisfies $|\hat{z}_{0}|\neq0 $ and $|\hat{z}_{1}|\neq0.$
Consider  the following  equation system  w.r.t $\tilde{\textbf{z}}\in \mathbb{C}^{N}$ such that $\widehat{\tilde{\textbf{z}}}=(\hat{\tilde{z}}_{0}, \ldots,  \hat{\tilde{z}}_{N/2}, \overbrace{0, \ldots, 0}^{N/2-1})$:
\begin{align}\label{formore1234}\left\{\begin{array}{lllll}
 |\hat{y}_{k,m}|&=&\frac{1}{N}\big|\sum_{l=0}^{k}\hat{\tilde{z}}_{l}\hat{\tilde{z}}_{k-l}w^{lm}\big|, k=1,\cdots, N/2,\\
 |\hat{y}_{k,m}|&=&\frac{1}{N}\big|\sum_{l=k-N/2}^{N/2}\hat{\tilde{z}}_{l}\hat{\tilde{z}}_{k-l}w^{lm}\big|, k= N/2+1,\cdots, N-1,\\
 &&m\in \Lambda, \end{array}\right.  
  \end{align}
where   $|\hat{y}_{k,m}|^{2}$ is the $[N, L]$-FROG measurement
of $\textbf{z}$,  $r=\lceil N/L \rceil\geq5$ and $\Lambda\subseteq\{0, 1, \ldots, r-1\}$ is arbitrary  such that  $\#\Lambda=5$. If  $|\hat{\tilde{z}}_{0}|\neq |\hat{z}_{0}|$ then  $\tilde{\textbf{z}}$ does not satisfy  \eqref{formore1234}.
\end{theo}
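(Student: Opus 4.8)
The plan is to show that the first two rows of \eqref{formore1234} --- the ones for $k=1$ and $k=2$ --- already force $|\hat{\tilde z}_0|=|\hat z_0|$, which is exactly the contrapositive of the assertion. The comparison signal will be $\textbf z$ itself: since \eqref{formore1234} is precisely the subsystem of \eqref{FROGFREQ123489} given by parts (B) and (C) with $m$ restricted to $\Lambda$, and $\widehat{\textbf z}$ satisfies \eqref{FROGFREQ123489}, the numbers $|\hat y_{k,m}|$ are simultaneously realized by $\hat{\tilde z}$ and by $\hat z$. So I will read off, row by row, what \eqref{formore1234} imposes on $\hat{\tilde z}_0,\hat{\tilde z}_1,\hat{\tilde z}_2$ and match it against the corresponding data for $\hat z_0,\hat z_1,\hat z_2$.

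Row $k=1$ first. For each $m$ the inner sum is $\hat{\tilde z}_0\hat{\tilde z}_1(1+w^m)$, so $N|\hat y_{1,m}|=|\hat{\tilde z}_0|\,|\hat{\tilde z}_1|\,|1+w^m|$, and likewise $N|\hat y_{1,m}|=|\hat z_0|\,|\hat z_1|\,|1+w^m|$. Since $w^m=-1$ for at most one $m\in\{0,\dots,r-1\}$, there is an $m\in\Lambda$ with $1+w^m\neq0$, and for it $|\hat{\tilde z}_0|\,|\hat{\tilde z}_1|=|\hat z_0|\,|\hat z_1|=:P$. As $|\hat z_0|,|\hat z_1|\neq0$ we have $P>0$; in particular $\hat{\tilde z}_0,\hat{\tilde z}_1\neq0$, and it suffices to prove $|\hat{\tilde z}_1|=|\hat z_1|$, since then $|\hat{\tilde z}_0|=P/|\hat{\tilde z}_1|=P/|\hat z_1|=|\hat z_0|$.

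Row $k=2$ next. Write $A:=\hat{\tilde z}_0\hat{\tilde z}_2$, $B:=\hat{\tilde z}_1^2$, $A_0:=\hat z_0\hat z_2$, $B_0:=\hat z_1^2$. The inner sum equals $A(1+w^{2m})+Bw^m$; factoring out $w^m$ (of modulus $1$) and using $w^{-m}+w^m=2\cos(2\pi m/r)$ rewrites the equation as
\[ \bigl|s_m A+B\bigr|=\bigl|s_m A_0+B_0\bigr|,\qquad s_m:=2\cos(2\pi m/r)\in\mathbb{R},\quad m\in\Lambda . \]
Squaring, every $s_m$ is a root of the real polynomial
\[ Q(s):=\bigl(|A|^2-|A_0|^2\bigr)s^2+2\bigl(\Re(A\overline B)-\Re(A_0\overline{B_0})\bigr)s+\bigl(|B|^2-|B_0|^2\bigr), \]
of degree at most $2$. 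But $s_m=s_{m'}$ only if $m\equiv\pm m'\wmod{r}$, so the fibres of $m\mapsto s_m$ on $\{0,\dots,r-1\}$ have size at most $2$, whence the five numbers $\{s_m:m\in\Lambda\}$ contain at least $\lceil 5/2\rceil=3$ distinct values; this forces $Q\equiv0$, and in particular its constant term vanishes. Thus $|B|^2=|B_0|^2$, i.e. $|\hat{\tilde z}_1|^2=|\hat z_1|^2$, which is what remained to be shown.

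The genuinely delicate part is the $k=2$ step together with the counting behind it, and it is this counting that pins down $\#\Lambda=5$: four frequencies can collapse to just two distinct values $s_m$ (two pairs $\{m,\,r-m\}$), too few to kill a quadratic, while five frequencies always produce at least three. The exclusion of the at most one frequency with $1+w^m=0$ is harmless, and the argument uses genericity of $\textbf z$ only via the stated hypotheses $|\hat z_0|,|\hat z_1|\neq0$. I expect getting the fibre-size versus degree bookkeeping exactly right --- and being careful that only rows $k=1$ and $k=2$ enter --- to be the main thing to nail down.
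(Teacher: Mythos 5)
Your proof is correct, and it shares the paper's overall skeleton: argue by contraposition, use the $k=1$ row to tie $|\hat{\tilde{z}}_{0}||\hat{\tilde{z}}_{1}|$ to $|\hat{z}_{0}||\hat{z}_{1}|$, then use the five $k=2$ equations to force a polynomial identity whose vanishing coefficient pins down $|\hat{\tilde{z}}_{1}|$. Where you genuinely differ is in how that polynomial is built. The paper squares the $k=2$ equation and multiplies by $w^{2m}$ to obtain a complex polynomial of degree at most $4$ in $x=w^{m}$ (equations \eqref{987}--\eqref{989}), then uses that the five roots of unity $w^{m}$, $m\in\Lambda$, are distinct to force all coefficients to vanish, the coefficient of $x^{2}$ giving $\lambda=1$. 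You instead factor out $w^{m}$ and work with the real quadratic $Q(s)$ evaluated at $s_{m}=2\cos(2\pi m/r)$; since the fibres of $m\mapsto s_{m}$ have size at most two, five frequencies yield at least three distinct values of $s_{m}$, which already annihilates a quadratic. This buys a cleaner argument: the degree drops from $4$ to $2$, it becomes transparent why $\#\Lambda=5$ suffices (only three distinct cosine values are actually needed), and the conjugation bookkeeping of \eqref{987}--\eqref{988} is avoided. A further point in your favour: the paper's $k=1$ step invokes the measurement $|\hat{y}_{1,0}|$, implicitly assuming $0\in\Lambda$, whereas the theorem allows $\Lambda$ to be an arbitrary $5$-subset of $\{0,\dots,r-1\}$; your choice of any $m\in\Lambda$ with $1+w^{m}\neq0$ (which exists because at most one $m$ can have $w^{m}=-1$) repairs this. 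The only cosmetic slip is that the vanishing constant term of $Q$ reads $|\hat{\tilde{z}}_{1}|^{4}=|\hat{z}_{1}|^{4}$ rather than the squares you wrote, but the conclusion $|\hat{\tilde{z}}_{1}|=|\hat{z}_{1}|$, and hence $|\hat{\tilde{z}}_{0}|=|\hat{z}_{0}|$, is unaffected.
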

\begin{proof}
Suppose that $\hat{\tilde{z}}_{0}=\lambda|\hat{z}_{0}|e^{\textbf{i}\alpha}$ such that $1\neq\lambda \geq 0$ and $\alpha\in \mathbb{R}$ is arbitrary. If $\hat{\tilde{z}}_{0}=0$ or $\hat{\tilde{z}}_{1}=0$, then it is easy to check that
FROG measurements at $(1, 0)$ of $\tilde{\textbf{z}}$ and those of  $\textbf{z}$ are not  identical, and the proof is concluded.
Otherwise, we have $\lambda>0$ and
%
%
%
%
\begin{equation}\label{98nbv}
\frac{2}{N}\big|\hat{z}_{0}\hat{z}_{1}\big|=|\hat{y}_{1,0}|=\frac{2}{N}\big|\hat{\tilde{z}}_{0}\hat{\tilde{z}}_{1}\big|,
\end{equation}
which implies 
$
|\hat{\tilde{z}}_{1}|=\big|\frac{\hat{z}_{1}}{\lambda}\big|.$ Now let $\hat{\tilde{z}}_{1}=|\frac{\hat{z}_{1}}{\lambda}|e^{\textbf{i}\theta}$ such that $\theta\in\mathbb{R} $ is arbitrary. If the FROG measurements at $(2, s)$ of $\tilde{\textbf{z}}$ and $\textbf{z}$
are not identical,  then the proof is concluded, where  $s\in \Lambda$. Otherwise,  we have
\begin{equation}\label{hxm}
\frac{1}{N}\big|\hat{z}_{0}\hat{z}_{2}+\hat{z}_{1}^{2}w^{m}+\hat{z}_{2}\hat{z}_{0}w^{2m}\big|=
|\hat{y}_{2,m}|=\frac{1}{N}\big|\hat{\tilde{z}}_{0}\hat{\tilde{z}}_{2}
+\hat{\tilde{z}}_{1}^{2}w^{m}+\hat{\tilde{z}}_{2}\hat{\tilde{z}}_{0}w^{2m}\big|, m\in \Lambda.
\end{equation}
Since $\hat{\tilde{z}}_{0}=\lambda|\hat{z}_{0}|e^{\textbf{i}\alpha}$ and  $\hat{\tilde{z}}_{1}=|\frac{\hat{z}_{1}}{\lambda}|e^{\textbf{i}\theta}$,
then \eqref{hxm} can be expressed as
\begin{align}\label{986}
\big|\hat{z}_{0}\hat{z}_{2}(1+w^{2m})+\hat{z}_{1}^{2}w^{m}\big|^{2}=\Big|\lambda|\hat{z}_{0}|e^{\textbf{i}\alpha}\hat{\tilde{z}}_{2}(1+w^{2m})+|\frac{\hat{z}_{1}}{\lambda}|^{2}e^{\textbf{i}2\theta}w^{m}\Big|^{2}.
\end{align}
It is easy to check that  \eqref{986} is equivalent to
\begin{align}\label{987}\begin{array}{llll}
0=&(\lambda^{2}|\hat{\tilde{z}}_{2}|^{2}-|\hat{z}_{2}|^{2})|\hat{z}_{0}|^{2}|1+w^{2m}|^{2}+(\frac{1}{\lambda^{4}}-1)|\hat{z}_{1}|^{4} \\
  &+2\Re\Big\{\overline{(1+w^{2m})}w^{m}(\overline{\lambda|\hat{z}_{0}|e^{\textbf{i}\alpha}\hat{\tilde{z}}_{2}}|\frac{\hat{z}_{1}}{\lambda}|^{2}e^{\textbf{i}2\theta}
  -\overline{\hat{z}_{0}\hat{z}_{2}}\hat{z}_{1}^{2})\Big\},
\end{array}\end{align}
where  $m\in \Lambda$.
Multiplying both sides of \eqref{987} by $w^{2m}$   leads to
\begin{align}\label{988}\begin{array}{llll}
0=(\lambda^{2}|\hat{\tilde{z}}_{2}|^{2}-|\hat{z}_{2}|^{2})\hat{z}_{0}^{2}(1+w^{2m})^{2}+(\frac{1}{\lambda^{4}}-1)|\hat{z}_{1}|^{4}w^{2m} \\
+\big[(\overline{\lambda|\hat{z}_{0}|e^{\textbf{i}\alpha}\hat{\tilde{z}}_{2}}|\frac{\hat{z}_{1}}{\lambda}|^{2}e^{\textbf{i}2\theta}
-\lambda|\hat{z}_{0}|e^{\textbf{i}\alpha}\hat{\tilde{z}}_{2}\overline{|\frac{\hat{z}_{1}}{\lambda}|^{2}e^{\textbf{i}2\theta}})
  -(\overline{\hat{z}_{0}\hat{z}_{2}}\hat{z}_{1}^{2}-\hat{z}_{0}\hat{z}_{2}\overline{\hat{z}_{1}^{2}})\big](1+w^{2m})w^{m}.
\end{array}\end{align}
Consider the following equation w.r.t $x$:
\begin{align}\label{989}\begin{array}{llll}
0=&(\lambda^{2}|\hat{\tilde{z}}_{2}|^{2}-|\hat{z}_{2}|^{2})\hat{z}_{0}^{2}(1+x^{2})^{2}+(\frac{1}{\lambda^{4}}-1)|\hat{z}_{1}|^{4}x^{2} \\
&+\big[(\overline{\lambda|\hat{z}_{0}|e^{\textbf{i}\alpha}\hat{\tilde{z}}_{2}}|\frac{\hat{z}_{1}}{\lambda}|^{2}e^{\textbf{i}2\theta}
-\lambda|\hat{z}_{0}|e^{\textbf{i}\alpha}\hat{\tilde{z}}_{2}\overline{|\frac{\hat{z}_{1}}{\lambda}|^{2}e^{\textbf{i}2\theta}})
  -(\overline{\hat{z}_{0}\hat{z}_{2}}\hat{z}_{1}^{2}-\hat{z}_{0}\hat{z}_{2}\overline{\hat{z}_{1}^{2}})\big](1+x^{2})x.
\end{array}\end{align}
Clearly, if the polynomial on the right-hand side of  \eqref{989} is not a zero polynomial, then it has  at most $4$ solutions.
By  \eqref{988},   $w^{m}(m\in \Lambda)$ are the $\#\Lambda$ solutions to  \eqref{989}. Since $\#\Lambda\geq5 $, then all the coefficients in \eqref{989} are zero. That is,
$$ \left\{
\begin{aligned}
&(\lambda^{2}|\hat{\tilde{z}}_{2}|^{2}-|\hat{z}_{2}|^{2})\hat{z}_{0}^{2}=0, \\
&(\frac{1}{\lambda^{4}}-1)|\hat{z}_{1}|^{4}=0, \\
&\big(\overline{\lambda|\hat{z}_{0}|e^{\textbf{i}\alpha}\hat{\tilde{z}}_{2}}|\frac{\hat{z}_{1}}{\lambda}|^{2}e^{\textbf{i}2\theta}
-\lambda|\hat{z}_{0}|e^{\textbf{i}\alpha}\hat{\tilde{z}}_{2}\overline{|\frac{\hat{z}_{1}}{\lambda}|^{2}e^{\textbf{i}2\theta}}\big)
  -\big(\overline{\hat{z}_{0}\hat{z}_{2}}\hat{z}_{1}^{2}-\hat{z}_{0}\hat{z}_{2}\overline{\hat{z}_{1}^{2}}\big)=0.
\end{aligned}
\right.
$$
Since $|\hat{z}_{1}|\neq0$ then $\frac{1}{\lambda^{4}}-1=0$ and $\lambda=1$, which    contradicts with
 the previous assumption  $1\neq\lambda \geq 0$.
This completes the proof.
\end{proof}

\subsection{Determination of $\hat{\tilde{z}}_{0}$}\label{determinationofz0}
In what follows, we establish an approach to determining $\hat{\tilde{z}}_{0}$ in \eqref{FROGFREQ123489}   up to a sign.
Its theoretic guarantee will be presented in Theorem \ref{yinli2.1}.
\begin{appr}\label{bvc}
\textbf{Input}: $[N, L]$-FROG measurements $\Big\{|\hat{y}_{0,0}|^{2}, |\hat{y}_{0,1}|^{2}, |\hat{y}_{1,0}|^{2}, |\hat{y}_{2,i_{2}^{(q)}}|^{2} \big|:i_{2}^{(1)}=0, w^{2i_{2}^{(q)}}\ne-1, \frac{w^{i_{2}^{(2)}}}{1+w^{2i_{2}^{(2)}}}\ne\frac{1}{2},0\le i_{2}^{(q)}\le r-1,q=1,2, \ldots,5\Big\}$.
\%\% $w=e^{\textbf{i}2\pi/r}$ and  $r=\lceil N/L\rceil.$
\newline
\textbf{Step 1}: If $|\hat{y}_{0,1}|=0$, then $\hat{\tilde{z}}_{0}=\sqrt{\frac{N|\hat{y}_{0,0}|}{2}}$ and we terminate the program.
If not, then  conduct  Step 2 and Step 3 to find $\hat{\tilde{z}}_{0}$.
\newline
\textbf{Step 2}: $\hat{\tilde{z}}_{0}\gets \sqrt{\frac{N(|\hat{y}_{0,0}|+|\hat{y}_{0,1}|)}{2}}$; $\hat{\tilde{z}}_{1}\gets \frac{N|\hat{y}_{1,0}|}{2\hat{\tilde{z}}_{0}}$;
\newline
\textbf{Step 3}:
If the following equation system  w.r.t $\hat{\tilde{z}}_{2}$:
\begin{align}\label{check123}
\frac{1}{N}|\hat{\tilde{z}}_{0}\hat{\tilde{z}}_{2}(1+w^{2i_{2}^{(q)}})+\hat{\tilde{z}}_{1}^{2}w^{i_{2}^{(q)}}|=
|\hat{y}_{2,i_{2}^{(q)}}|, q=1,2,\ldots,5
\end{align}
does not have  a solution, then  $\hat{\tilde{z}}_{0}\gets\sqrt{\frac{N(|\hat{y}_{0,0}|-|\hat{y}_{0,1}|)}{2}}$.\\
\textbf{Output}: $\hat{\tilde{z}}_{0}$.
\end{appr}

\begin{rem}\label{remark31}(1) By Lemma \ref{lemma233}, the requirement: $w^{2i_{2}^{(q)}}\ne-1, \frac{w^{i_{2}^{(2)}}}{1+w^{2i_{2}^{(2)}}}\ne\frac{1}{2}$
in Approach \ref{bvc} can be satisfied.

(2) Clearly, \eqref{check123} is equivalent to  the condition that $\hat{\tilde{z}}_{2}$ lies in the $5$  circles on $\mathbb{C}$: $$\Big\{z: \Big|z-(-\hat{\tilde{z}}_{1}^{2}w^{i_{2}^{(q)}}/(\hat{\tilde{z}}_{0}(1 +w^{2i_{2}^{(q)}})))\Big|=N|\hat{y}_{2,i_{2}^{(q)}}|\Big\}, \ q=1,2 \ldots,5.$$
By the correlations among the five  circles, it is easy to  check that whether
\eqref{check123} has a solution.
\end{rem}

The following theorem states that   $\hat{\tilde{z}}_{0}$ can be determined (up to a sign) by Approach \ref{bvc}.

\begin{theo}\label{yinli2.1}
Suppose  that  $\textbf{z}\in \mathbb{C}^{N}$ (with $N$ being  even)  is a generic   analytic  signal  such that  its DFT $\widehat{\textbf{z}}=(\hat{z}_{0}, \hat{z}_{1},$
$ \ldots, \hat{z}_{N/2},
\ldots, 0)$ satisfies  $|\hat{z}_{0}|\neq0 $ and $|\hat{z}_{1}|\neq0.$
  If $L$ is odd and  $r=\lceil N/L\rceil\geq5$,
  then $\hat{\tilde{z}}_{0}$ in \eqref{FROGFREQ123489} can be determined up to a sign by Approach \ref{bvc}.
\end{theo}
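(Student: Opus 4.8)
The plan is to analyze Approach \ref{bvc} case by case according to the structure of the first FROG-row equation $(\ref{FROGFREQ123489}\hbox{A})$ at $m=0,1$, and to show that the approach always outputs a value of $|\hat{z}_0|$ (equivalently $\hat{\tilde z}_0$ up to sign), using Theorem \ref{auxiliarytheorem} as the key uniqueness engine. First I would record what the measurements $|\hat y_{0,0}|$ and $|\hat y_{0,1}|$ tell us. From $(\ref{FROGFREQ123489}\hbox{A})$ with $w=e^{\mathbf{i}2\pi/r}$ and recalling that $L$ is odd so that $r=\lceil N/L\rceil$ and $w^{Nm/2}$ is a genuine $r$-th-root-of-unity power, one has $N|\hat y_{0,m}|=\big|\hat z_0^2+\hat z_{N/2}^2\,w^{Nm/2}\big|$. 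Since $\hat z_0,\hat z_{N/2}\in\mathbb{R}$ (Proposition \ref{characterizationeven}(i)), for $m=0$ this gives $N|\hat y_{0,0}|=|\hat z_0^2+\hat z_{N/2}^2|=\hat z_0^2+\hat z_{N/2}^2$ (both squares nonnegative), and for $m=1$, $N|\hat y_{0,1}|=|\hat z_0^2+\hat z_{N/2}^2 w^{N/2}|$. The point of the two subcases in Approach \ref{bvc} is the dichotomy "$|\hat y_{0,1}|=0$'' versus "$|\hat y_{0,1}|\ne 0$''.

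In the first subcase $|\hat y_{0,1}|=0$: I would argue that for a generic analytic signal this forces $\hat z_{N/2}^2 w^{N/2}=-\hat z_0^2$; since $\hat z_0^2>0$ is real and $w^{N/2}$ has modulus one, genericity (or $r\nmid N/2$ together with reality of $\hat z_{N/2}^2$) forces $w^{N/2}=\pm1$ and $\hat z_{N/2}^2=\pm\hat z_0^2$, and combined with $N|\hat y_{0,0}|=\hat z_0^2+\hat z_{N/2}^2$ one solves $\hat z_0^2=\tfrac{N|\hat y_{0,0}|}{2}$, exactly Step 1's output. (If instead $\hat z_{N/2}=0$, then $|\hat y_{0,1}|=\hat z_0^2\ne 0$, so this subcase cannot occur for generic signals with $\hat z_{N/2}=0$; I would note this so the genericity hypothesis is doing the expected work.) In the second subcase $|\hat y_{0,1}|\ne 0$: here $w^{N/2}\ne\pm 1$ in the relevant generic configuration, and the two real equations $N|\hat y_{0,0}|=\hat z_0^2+\hat z_{N/2}^2$ and $N^2|\hat y_{0,1}|^2=\hat z_0^4+\hat z_{N/2}^4+2\hat z_0^2\hat z_{N/2}^2\cos(\pi N/r)$ still only pin down the unordered pair $\{\hat z_0^2,\hat z_{N/2}^2\}$ up to the symmetric functions, leaving two candidates for $\hat z_0^2$, namely $\tfrac{N(|\hat y_{0,0}|\pm|\hat y_{0,1}|)}{2}$ after a short manipulation. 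This is why Steps 2–3 are needed: they test which of the two candidates is consistent with the $k=2$ FROG equations \eqref{check123}.

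The crux is Step 3's correctness, and this is where Theorem \ref{auxiliarytheorem} enters. Setting $\hat{\tilde z}_0=\sqrt{N(|\hat y_{0,0}|+|\hat y_{0,1}|)/2}$ and $\hat{\tilde z}_1=N|\hat y_{1,0}|/(2\hat{\tilde z}_0)$, I would show: if this choice equals $|\hat z_0|$ then \eqref{check123} has a solution for $\hat{\tilde z}_2$ (take $\hat{\tilde z}_2$ equal to the true $\hat z_2$, modulo the ambiguity, which satisfies the five circle equations by construction since the FROG measurements come from $\textbf{z}$); and conversely, if $\hat{\tilde z}_0\ne|\hat z_0|$, then \eqref{check123} has no solution — this is precisely the content of Theorem \ref{auxiliarytheorem} applied with $\Lambda=\{i_2^{(1)},\ldots,i_2^{(5)}\}$ (which has five elements and satisfies the required side conditions $w^{2i_2^{(q)}}\ne -1$ and $w^{i_2^{(2)}}/(1+w^{2i_2^{(2)}})\ne\tfrac12$ by Lemma \ref{lemma233}), noting that if $\hat{\tilde z}_2$ solved \eqref{check123} then the triple $(\hat{\tilde z}_0,\hat{\tilde z}_1,\hat{\tilde z}_2)$ would reproduce the FROG measurements at $(1,0)$ and $(2,m)$, $m\in\Lambda$, contradicting the theorem since $|\hat{\tilde z}_0|\ne|\hat z_0|$. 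Hence the test in Step 3 distinguishes the two candidates: if a solution exists, $\hat{\tilde z}_0=\sqrt{N(|\hat y_{0,0}|+|\hat y_{0,1}|)/2}$ is (up to sign) correct; if not, the other candidate $\sqrt{N(|\hat y_{0,0}|-|\hat y_{0,1}|)/2}$ must be correct, which is exactly what Approach \ref{bvc} outputs. I would close by observing that the sign of $\hat z_0$ is genuinely undetermined — it corresponds to the $\pi$-rotation ambiguity of Proposition \ref{even} — so "up to a sign'' is the best possible, and the proof is complete.

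The main obstacle I anticipate is the bookkeeping in the $|\hat y_{0,1}|\ne 0$ subcase: one must verify that the two algebraic candidates for $\hat z_0^2$ coming from the pair of equations in $(\hat z_0^2,\hat z_{N/2}^2)$ are exactly $N(|\hat y_{0,0}|\pm|\hat y_{0,1}|)/2$ — this needs the reality of $\hat z_0,\hat z_{N/2}$ and a careful use of the fact that $|a+b\,w^{N/2}|$ with $a,b\ge0$ real equals $|a-b|$ only when $w^{N/2}$ has the wrong sign, which is ruled out generically — and that no degenerate coincidence (e.g.\ $|\hat z_0|=|\hat z_{N/2}|$, or $\hat{\tilde z}_1=0$) slips through; these are handled by the genericity hypothesis and the explicit side conditions on $\Lambda$ built into the input of Approach \ref{bvc}. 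Everything else reduces to invoking Theorem \ref{auxiliarytheorem} and Lemma \ref{lemma233}.
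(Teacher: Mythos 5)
Your proposal is correct and follows essentially the same route as the paper: reduce the $k=0$, $m=0,1$ equations (using $w^{N/2}=-1$, a consequence of $L$ being odd) to the two candidates $\sqrt{N(|\hat{y}_{0,0}|\pm|\hat{y}_{0,1}|)/2}$ for $|\hat{\tilde z}_0|$, handle the $|\hat{y}_{0,1}|=0$ case directly, and invoke Theorem \ref{auxiliarytheorem} with $\#\Lambda=5$ (side conditions from Lemma \ref{lemma233}) to show the wrong candidate makes \eqref{check123} unsolvable. The only difference is cosmetic: you also spell out the converse direction (that the correct candidate does admit a solution to \eqref{check123}), which the paper leaves implicit via Proposition \ref{lemma31}, and you attribute $w^{N/2}=\pm1$ to genericity where it is really forced by the parity of $L$.
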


\begin{proof}
The proof will be presented in section \ref{proofof34}.
\end{proof}

\subsection{Finding   $\tilde{\textbf{z}}\in \mathbb{C}^{N}$
such that its DFT $\widehat{\tilde{\textbf{z}}}=(\epsilon\hat{z}_{0}, \frac{N|\hat{y}_{1,0}|}{2|\hat{\tilde{z}}_{0}|}, \hat{\tilde{z}}_{2} \ldots,  \hat{\tilde{z}}_{N/2}, 0, \ldots, 0)$
satisfies \eqref{clearly}}\label{determinationofz1}
Recall that $\hat{\tilde{z}}_{0}$ in \eqref{FROGFREQ123489} can be determined (up to a sign) by Approach \ref{bvc}, and  it follows from \ref{FROGFREQ123489}(B) that  $|\hat{\tilde{z}}_{1}|:=\frac{N|\hat{y}_{1,0}|}{2|\hat{\tilde{z}}_{0}|}$.
This subsection concerns on the solution to \eqref{clearly} w.r.t  $\tilde{\textbf{z}}$. For convenience, \eqref{clearly} is stated again as follows,
\begin{align}\label{localequation}\left\{\begin{array}{lllll}
 |\hat{y}_{k,m}|&=&\frac{1}{N}\big|\sum_{l=0}^{k}\hat{\tilde{z}}_{l}\hat{\tilde{z}}_{k-l}w^{lm}\big|, k=2,\cdots, N/2,\\
 |\hat{y}_{k,m}|&=&\frac{1}{N}\big|\sum_{l=k-N/2}^{N/2}\hat{\tilde{z}}_{l}\hat{\tilde{z}}_{k-l}w^{lm}\big|, k= N/2+1,\cdots, N-1,\\
 &&m=0,1,\cdots,r-1, \end{array}\right.  
  \end{align}
where $\hat{\tilde{z}}_{0}=\epsilon\hat{z}_{0}$ with $\epsilon\in \{1, -1\}$ and  $\hat{\tilde{z}}_{1}:=\frac{N|\hat{y}_{1,0}|}{2|\hat{\tilde{z}}_{0}|}$. In what follows, we establish an approach to  find such a
$\tilde{\textbf{z}}$  by
the $3N/2+1$ measurements: \begin{align}\label{FORGMEASUREMNTS}\begin{array}{lll}\Big\{|\hat{y}_{0,0}|^{2},|\hat{y}_{0,1}|^{2},|\hat{y}_{1,0}|^{2},|\hat{y}_{2,i_{2}^{(q)}}|^{2} ,|\hat{y}_{3,0}|^{2},|\hat{y}_{3,i_{3}}|^{2}
  ,|\hat{y}_{k,i_{2}^{(p)}}|^{2}\big|: i_{2}^{(1)}=0, w^{2i_{k}^{(q)}}\ne-1,
\frac{w^{i_{2}^{(2)}}}{1+w^{2i_{2}^{(2)}}}\ne\frac{1}{2},\\\frac{w^{i_{3}}+w^{2i_{3}}}{1+w^{3i_{3}}}
  \ne1,w^{ki_{k}^{(p)}}\ne-1, i_{k}^{(1)}=0,i_{k}^{(2)}+i_{k}^{(3)}\ne r, 0\le i_{2}^{(q)}\le r-1,1\le i_{3}\le r-1,\\0\le i_{k}^{(p)}\le r-1,
4\le k\le N/2, p=1,2,3,q=1,2,3,4,5\Big\}.
  \end{array}\end{align}
The  theoretic guarantee for such an approach  will be given in  Theorem \ref{localtheorem}. The existence of the above  measurements is addressed in the following remark.

 \begin{rem} \label{remark22} (1)
Recall that the FROG measurements in \eqref{FORGMEASUREMNTS} need to satisfy the requirements:
$$w^{2i_{k}^{(q)}}\ne-1,
\frac{w^{i_{2}^{(2)}}}{1+w^{2i_{2}^{(2)}}}\ne\frac{1}{2},  \frac{w^{i_{3}}+w^{2i_{3}}}{1+w^{3i_{3}}}
  \ne1, $$
which is guaranteed by  Lemma \ref{lemma233}.

(2)  According to  the
analysis  in \cite[Page 1038]{Eldar}, for any $k\in \{4, \ldots, N/2\}$ there exist $\{i_{k}^{(p)}:p=1,2,3\}\subset\{0,1,\ldots,r-1\}$ such that
$w^{ki_{k}^{(p)}}\ne-1, i_{k}^{(1)}=0 $ and $ i_{k}^{(2)}+i_{k}^{(3)}\ne r$.

 \end{rem}


\begin{appr}\label{approach22}
\textbf{Input:}
FROG measurements in \eqref{FORGMEASUREMNTS}, $\hat{\tilde{z}}_{0}$ (derived from Approach \ref{bvc})
and $\hat{\tilde{z}}_{1}= \frac{N|\hat{y}_{1,0}|}{2|\hat{\tilde{z}}_{0}|}$.\\
\textbf{Step 1}: By Lemma \ref{lemmm2.2} (i), choose  a solution $\hat{\tilde{z}}_{2}$ to
$$ \left\{
  \begin{aligned}
  |\hat{y}_{2,0}|&=\frac{1}{N}\big|2\hat{\tilde{z}}_{0}\hat{\tilde{z}}_{2}+\hat{\tilde{z}}_{1}^{2}\big|, \\
  |\hat{y}_{2,i_{2}^{(2)}}|&=\frac{1}{N}\big|\hat{\tilde{z}}_{0}\hat{\tilde{z}}_{2}+\hat{\tilde{z}}_{1}^{2}w^{i_{2}^{(2)}}
  +\hat{\tilde{z}}_{2}\hat{\tilde{z}}_{0}w^{2i_{2}^{(2)}}\big|.
  \end{aligned}
  \right.
  $$
\textbf{Step 2}: Given  $(\hat{\tilde{z}}_{0}, \hat{\tilde{z}}_{1}, \hat{\tilde{z}}_{2})$. Use  Lemma  \ref{lemmm2.2} (ii) and Lemma  \ref{lemmm2.1} to find the solution to
the system w.r.t $\hat{\tilde{z}}_{3},\hat{\tilde{z}}_{4}$:
\begin{align}\left\{\begin{array}{llllll}
  |\hat{y}_{3,0}|&=\frac{1}{N}\big|2\hat{\tilde{z}}_{0}\hat{\tilde{z}}_{3}+2\hat{\tilde{z}}_{1}\hat{\tilde{z}}_{2}\big|, \\
  |\hat{y}_{3,i_{3}}|&=\frac{1}{N}\big|\hat{\tilde{z}}_{0}\hat{\tilde{z}}_{3}+\hat{\tilde{z}}_{1}\hat{\tilde{z}}_{2}w^{i_{3}}
  +\hat{\tilde{z}}_{1}\hat{\tilde{z}}_{2}w^{2i_{3}}+\hat{\tilde{z}}_{3}\hat{\tilde{z}}_{0}w^{3i_{3}}\big|,\\
   |\hat{y}_{4,i_{4}^{(p)}}|&=\frac{1}{N}\big|\hat{\tilde{z}}_{0}\hat{\tilde{z}}_{4}+\hat{\tilde{z}}_{1}\hat{\tilde{z}}_{3}w^{i_{4}^{(p)}}+\hat{\tilde{z}}_{2}^{2}
  w^{2i_{4}^{(p)}}+\hat{\tilde{z}}_{3}\hat{\tilde{z}}_{1}w^{3i_{4}^{(p)}}+\hat{\tilde{z}}_{4}\hat{\tilde{z}}_{0}w^{4i_{4}^{(p)}}\big|,p=1,2,3.
 \end{array}\right.
 \end{align}
\textbf{Step 3}:  Given  $(\hat{\tilde{z}}_{0}, \hat{\tilde{z}}_{1}, \hat{\tilde{z}}_{2}, \hat{\tilde{z}}_{3}, \hat{\tilde{z}}_{4})$.
Use  Lemma  \ref{lemmm2.1}  to find iteratively  the solution to
the system w.r.t $\hat{\tilde{z}}_{j}, j\geq5$:
  \begin{align}
  |\hat{y}_{j,i_{j}^{(p)}}|&=\frac{1}{N}\big|\hat{\tilde{z}}_{0}\hat{\tilde{z}}_{j}+\hat{\tilde{z}}_{1}\hat{\tilde{z}}_{j-1}w^{i_{j}^{(p)}}+\ldots+
  \hat{\tilde{z}}_{j}\hat{\tilde{z}}_{0}w^{ji_{j}^{(p)}}\big|,p=1,2,3.
  \end{align}
\textbf{Output:} $\widehat{\tilde{\textbf{z}}}:=(\hat{\tilde{z}}_{0},\hat{\tilde{z}}_{1},\cdots,\hat{\tilde{z}}_{N/2},0,\cdots,0)$ and its IDFT $\tilde{\textbf{z}}$.
\end{appr}

The following theorem guarantees that $\tilde{\textbf{z}}$ derived from  Approach \ref{approach22}
is the solution to \eqref{localequation}.

\begin{theo}\label{localtheorem}
Suppose that   $\textbf{z}\in \mathbb{C}^{N}$ is  a generic   analytic signal
such that its  DFT
\begin{align} \label{ccfurrt1} \widehat{\textbf{z}}=(\hat{z}_{0}, \hat{z}_{1}, \ldots, \hat{z}_{N/2}, 0,
\ldots, 0):=(|\hat{z}_{0}|e^{\textbf{i}\theta_{0}}, |\hat{z}_{1}|e^{\textbf{i}\theta_{1}}, \ldots, |\hat{z}_{N/2}|e^{\textbf{i}\theta_{N/2}}, 0,
\ldots, 0).\end{align}
Let  $\hat{\tilde{z}}_{0}=\epsilon\hat{z}_{0}$ and $\hat{\tilde{z}}_{1}=\frac{N|\hat{y}_{1,0}|}{2|\hat{\tilde{z}}_{0}|}$ with $\epsilon\in \{1, -1\}$.
 Then the solutions to  \eqref{localequation}
are the vector  $(\mathfrak{E}(2)|\hat{z}_{2}|e^{\textbf{i}(\theta_{2}-2\theta_{1})}, \mathfrak{E}(3)|\hat{z}_{3}|e^{\textbf{i}(\theta_{3}-3\theta_{1})},\ldots,$
$\mathfrak{E}(N/2)|\hat{z}_{N/2}|e^{\textbf{i}[\theta_{N/2}-(N/2)\theta_{1}]})$  and
its complex conjugate,
where $\mathfrak{E}(k)$ takes $\epsilon$ and $1$ for $k$ being even and odd, respectively.
Moreover, one of the two solutions can be determined through Approach \ref{approach22}.
\end{theo}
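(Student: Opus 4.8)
The plan is to split the claim into two parts: first, that the displayed vector $\widehat{\tilde{\textbf{z}}}^{(1)}$ (completed with the prescribed entries $\hat{\tilde z}_0=\epsilon\hat z_0=\mathfrak{E}(0)|\hat z_0|e^{\textbf{i}\theta_0}$ and $\hat{\tilde z}_1=|\hat z_1|=\mathfrak{E}(1)|\hat z_1|e^{\textbf{i}(\theta_1-\theta_1)}$, so that $\hat{\tilde z}_k^{(1)}=\mathfrak{E}(k)|\hat z_k|e^{\textbf{i}(\theta_k-k\theta_1)}$ for $0\le k\le N/2$) and its entrywise conjugate are both solutions of \eqref{localequation}; second, that \eqref{localequation} has no other solution and that Approach \ref{approach22} returns one of the two. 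That $\widehat{\tilde{\textbf{z}}}^{(1)}$ is a solution rests on the observation that $\mathfrak{E}(l)\mathfrak{E}(k-l)$ does not depend on $l$ (it is $1$ when $k$ is even, since the two factors share a parity, and $\epsilon$ when $k$ is odd), whence $\hat{\tilde z}_l^{(1)}\hat{\tilde z}_{k-l}^{(1)}=c_k\,e^{-\textbf{i}k\theta_1}\,\hat z_l\hat z_{k-l}$ for a unimodular constant $c_k$ independent of $l$; therefore $|\sum_l\hat{\tilde z}_l^{(1)}\hat{\tilde z}_{k-l}^{(1)}w^{lm}|=|\sum_l\hat z_l\hat z_{k-l}w^{lm}|=N|\hat y_{k,m}|$ for every row $k=1,\dots,N-1$. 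For the conjugate $\widehat{\tilde{\textbf{z}}}^{(2)}:=\overline{\widehat{\tilde{\textbf{z}}}^{(1)}}$: it keeps the prescribed first two entries (they are real), and since entrywise conjugation of a DFT is the DFT of the reflected signal, it solves \eqref{localequation} by the reflection invariance of FROG measurements \cite[Proposition 2.2]{Eldar}. The genericity hypothesis will be used below only to impose finitely many non-vanishing conditions on $\widehat{\textbf{z}}$ beyond $\hat z_0,\hat z_1\neq0$, each of which I note to be a proper Zariski condition.

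For uniqueness and the validity of Approach \ref{approach22}, the organizing fact is that, after pulling out a unimodular factor and dividing the $k$-th row of \eqref{localequation} by $2\cos(\pi km/r)$ (nonzero by the admissibility $w^{km}\neq-1$) and by $\hat{\tilde z}_0$, it takes the form $\big|\hat{\tilde z}_k+\hat{\tilde z}_0^{-1}\sum_{l=1}^{k-1}\rho_{l,m}\,\hat{\tilde z}_l\hat{\tilde z}_{k-l}\big|=(\text{nonnegative})$ with \emph{real} coefficients $\rho_{l,m}$ symmetric under $l\leftrightarrow k-l$. Hence: for $k=2$ the translate $-\hat{\tilde z}_0^{-1}\rho_{1,m}\hat{\tilde z}_1^2$ is real (as $\hat{\tilde z}_0,\hat{\tilde z}_1\in\mathbb{R}$), so Lemma \ref{lemmm2.2}(i) — applied to $m=0$ and one $m$ with $\frac{w^m}{1+w^{2m}}\neq\frac12$, which exists by Lemma \ref{lemma233} — forces $\hat{\tilde z}_2$ into a conjugate pair $\{\hat{\tilde z}_2^{(1)},\overline{\hat{\tilde z}_2^{(1)}}\}$. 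Having committed to $\hat{\tilde z}_2=\hat{\tilde z}_2^{(1)}$, for $k=3$ the translate is a varying real multiple of the fixed number $c:=\hat{\tilde z}_1\hat{\tilde z}_2^{(1)}/\hat{\tilde z}_0$, so the three circles have collinear centers, condition \eqref{condition} fails, and Lemma \ref{lemmm2.2}(ii) — legitimate since $c\notin\mathbb{R}$ generically and $\frac{w^{i_3}+w^{2i_3}}{1+w^{3i_3}}\neq1$ — leaves exactly two candidates $c(\mathring a\pm\textbf{i}\mathring b)$ for $\hat{\tilde z}_3$, one of which is $\hat{\tilde z}_3^{(1)}$. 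For $k\geq4$ the sum $\sum_{l=1}^{k-1}\rho_{l,m}\hat{\tilde z}_l\hat{\tilde z}_{k-l}$ is a real combination of at least two already-determined products that are $\mathbb{R}$-linearly independent for a generic signal, so with indices $i_k^{(1)}=0,i_k^{(2)},i_k^{(3)}$ satisfying $w^{ki_k^{(p)}}\neq-1$, $i_k^{(2)}+i_k^{(3)}\neq r$ (available by \cite[Page 1038]{Eldar}) the three centers are non-collinear and Lemma \ref{lemmm2.1} determines $\hat{\tilde z}_k$ uniquely from the lower entries, \emph{provided a solution exists}. Running this at $k=4$ for each candidate for $\hat{\tilde z}_3$: the true $\hat{\tilde z}_3^{(1)}$ admits the companion $\hat{\tilde z}_4^{(1)}$ (a solution, by the first part) and is recovered; the spurious candidate yields no common point and is discarded. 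Iterating Lemma \ref{lemmm2.1} for $k=5,\dots,N/2$ recovers $\hat{\tilde z}_5^{(1)},\dots,\hat{\tilde z}_{N/2}^{(1)}$ with no further branching, while the rows $k>N/2$ involve only these entries and play no role in the uniqueness argument (they are met by both solutions by the first part). Thus the solution set of \eqref{localequation} is exactly $\{\widehat{\tilde{\textbf{z}}}^{(1)},\widehat{\tilde{\textbf{z}}}^{(2)}\}$, and Approach \ref{approach22} returns the one picked in its Step 1.

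I expect the main obstacle to be the elimination of the spurious $\hat{\tilde z}_3$ at $k=4$: one must show that, for a generic analytic $\textbf{z}$, substituting $\hat{\tilde z}_3'=c(\mathring a-\textbf{i}\mathring b)$ into the three $k=4$ circle equations destroys their concurrency, so that Lemma \ref{lemmm2.1}'s hypothesis fails and that branch is killed (and, should it survive $k=4$, it is killed at the first subsequent $k$ by the same mechanism). Concurrency of the three perturbed circles is a single polynomial identity in the fixed data $w,i_4^{(p)}$ and the entries $\hat z_0,\dots,\hat z_4$, so it suffices to exhibit one analytic signal (for the chosen indices) where it fails, confining the exceptional set to a proper Zariski-closed subset. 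I would establish this, together with the subsidiary genericity claims used above ($c\notin\mathbb{R}$, and $\mathbb{R}$-linear independence of the relevant pairs of products for $k\geq4$), by a single explicit numerical example in the spirit of Example \ref{kewangyouhua}.
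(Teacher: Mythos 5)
Your proposal is correct and follows essentially the same route as the paper: verify that the displayed vector and its conjugate solve \eqref{localequation} via the parity identity for $\mathfrak{E}(l)\mathfrak{E}(k-l)$, then march up $k=2,3,4,\ldots,N/2$ using Lemma \ref{lemmm2.2}(i) at $k=2$, Lemma \ref{lemmm2.2}(ii) at $k=3$, and Lemma \ref{lemmm2.1} for $k\ge4$, with the spurious $\hat{\tilde{z}}_3$ branch eliminated by the non-existence of a compatible $\hat{\tilde{z}}_4$. The one step you flag as the ``main obstacle'' --- killing that spurious branch at $k=4$ --- is exactly where the paper cites \cite[Lemma 4.2]{Eldar} (together with the analysis on p.~1037 of that reference for the two $k=3$ candidates), so your proposed Zariski-plus-explicit-witness argument is a self-contained substitute for an already-available citation rather than a different method.
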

\begin{proof}
The proof will be presented in subsection \ref{proofoftheorem2.5}.
\end{proof}


\subsection{On the solution to \eqref{fromone}}\label{subsectionadd}
As noted at the beginning of section \ref{section3},
our final  approach for FROG-PR is derived from the two relaxed FROG-PR problems in \eqref{clearly} (\eqref{localequation}) and \eqref{fromone}.
Recall that \eqref{clearly} has been addressed in Theorem \ref{localtheorem}. We now address
\eqref{fromone} with the following theorem. 

\begin{theo}\label{lemma2.6}
Suppose that $\textbf{z}\in \mathbb{C}^{N}$ is a generic   analytic signal such that $N$ is even, and its DFT
\begin{equation}
\widehat{\textbf{z}}=(\hat{z}_{0}, \hat{z}_{1}, \ldots, \hat{z}_{N/2}, 0,
\ldots, 0):=(|\hat{z}_{0}|e^{\textbf{i}\theta_{0}}, |\hat{z}_{1}|e^{\textbf{i}\theta_{1}}, \ldots, |\hat{z}_{N/2}|e^{\textbf{i}\theta_{N/2}}, 0, \ldots, 0)
\end{equation}
satisfies $|\hat{z}_{0}|\neq0 $ and $|\hat{z}_{1}|\neq0.$
Moreover, consider  the system w.r.t $\tilde{\textbf{z}}\in \mathbb{C}^{N}$ such that $\hbox{supp}(\widehat{\tilde{\textbf{z}}})\subseteq
\{0, \ldots, N/2\}$:
\begin{align}\label{formore12345}\left\{\begin{array}{lllll}
 |\hat{y}_{k,m}|&=&\frac{1}{N}\big|\sum_{l=0}^{k}\hat{\tilde{z}}_{l}\hat{\tilde{z}}_{k-l}w^{lm}\big|, k=1,\cdots, N/2,\\
 |\hat{y}_{k,m}|&=&\frac{1}{N}\big|\sum_{l=k-N/2}^{N/2}\hat{\tilde{z}}_{l}\hat{\tilde{z}}_{k-l}w^{lm}\big|, k= N/2+1,\cdots, N-1,\\
 &&m=0,1,\cdots,r-1, \end{array}\right.  
  \end{align}
where   $\{|\hat{y}_{k,m}|^{2}\}$ are the $[N, L]$-FROG measurements
of $\textbf{z}$ and $r=\lceil N/L\rceil\geq5$.
 Then $\tilde{\textbf{z}}$ is uniquely determined up to  the arbitrary rotation, arbitrary translation  and reflection.
 Moreover, if requiring that the phases of the  first two components of $\widehat{\tilde{\textbf{z}}}$ be zeros,  then $\tilde{\textbf{z}}$
 can be determined (up to the reflection) such that its DFT is
  \begin{align}\label{cvb} \widehat{\tilde{\textbf{z}}}=\big(|\hat{z}_{0}|, |\hat{z}_{1}|, \mathfrak{E}(2)|\hat{z}_{2}|e^{\textbf{i}(\theta_{2}-2\theta_{1})},\ldots,
\mathfrak{E}(N/2)|\hat{z}_{N/2}|e^{\textbf{i}[\theta_{N/2}-(N/2)\theta_{1}]}, 0, \ldots, 0\big),\end{align}
where $\mathfrak{E}(k)$ takes $\hbox{sgn}(\hat{z}_{0})$ and $1$ for $k$ being even and odd, respectively.
\end{theo}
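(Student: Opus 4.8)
The plan is to combine Theorem \ref{localtheorem} (which solves the more relaxed system \eqref{clearly}, i.e.\ the $k\geq 2$ equations) with the newly available $k=1$ equations, together with Theorem \ref{auxiliarytheorem} to pin down $|\hat{\tilde z}_0|$. First I would observe that rotation, translation, and reflection are genuine symmetries of \eqref{formore12345}: this follows from \cite[Proposition 2.2]{Eldar} together with the fact that these operations preserve the support condition $\hbox{supp}(\widehat{\tilde{\textbf z}})\subseteq\{0,\ldots,N/2\}$ (for rotation and reflection this is immediate; for translation note that multiplying $\hat{\tilde z}_l$ by $e^{\mathbf{i}2\pi l\gamma/N}$ does not move the support). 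So the content is the \emph{uniqueness} up to these three operations, and then the normalization statement \eqref{cvb}.

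For the uniqueness, I would first use the rotation and translation freedom to normalize any solution $\tilde{\textbf z}$ so that $\theta_0' := \arg(\hat{\tilde z}_0) = 0$ and $\theta_1' := \arg(\hat{\tilde z}_1) = 0$ (rotation kills $\theta_0$, then translation by a suitable $\gamma$ adjusts $\arg(\hat{\tilde z}_1)$ to $0$ without changing $\arg(\hat{\tilde z}_0)$, since the translation multiplies $\hat{\tilde z}_l$ by $e^{\mathbf{i}2\pi l\gamma/N}$, which is trivial at $l=0$). With this normalization $\hat{\tilde z}_0$ and $\hat{\tilde z}_1$ are nonnegative reals. The $k=1$ FROG equation at $m=0$ gives $|\hat{\tilde z}_0\hat{\tilde z}_1| = \tfrac{N}{2}|\hat y_{1,0}| = |\hat z_0 \hat z_1|$, so $\hat{\tilde z}_0 \hat{\tilde z}_1 = |\hat z_0||\hat z_1|$. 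Now invoke Theorem \ref{auxiliarytheorem}, whose hypotheses ($\Lambda$ of size $5$ inside $\{0,\ldots,r-1\}$, generic $\textbf z$ with $|\hat z_0|,|\hat z_1|\neq 0$) are met because $r\geq 5$: it forces $|\hat{\tilde z}_0| = |\hat z_0|$, hence $\hat{\tilde z}_0 = |\hat z_0|$ and $\hat{\tilde z}_1 = |\hat z_1|$. So every normalized solution has its first two Fourier coefficients fixed to $|\hat z_0|,|\hat z_1|$.

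Once $\hat{\tilde z}_0 = |\hat z_0| = \epsilon \hat z_0$ (with $\epsilon = \hbox{sgn}(\hat z_0)$) and $\hat{\tilde z}_1 = |\hat z_1| = \tfrac{N|\hat y_{1,0}|}{2|\hat{\tilde z}_0|}$ are in hand, the remaining equations of \eqref{formore12345} are exactly the system \eqref{clearly} (= \eqref{localequation}) with this specific choice of $\hat{\tilde z}_0,\hat{\tilde z}_1$. Theorem \ref{localtheorem} then tells us that the solutions for $(\hat{\tilde z}_2,\ldots,\hat{\tilde z}_{N/2})$ are exactly the vector $(\mathfrak{E}(2)|\hat z_2|e^{\mathbf{i}(\theta_2-2\theta_1)},\ldots,\mathfrak{E}(N/2)|\hat z_{N/2}|e^{\mathbf{i}[\theta_{N/2}-(N/2)\theta_1]})$ and its complex conjugate, with $\mathfrak{E}(k) = \epsilon = \hbox{sgn}(\hat z_0)$ for even $k$ and $1$ for odd $k$. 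The two options are interchanged by the reflection $\tilde{\textbf z}\mapsto \tilde{\textbf z}_{ref}$, which conjugates all Fourier coefficients; since $\hat{\tilde z}_0,\hat{\tilde z}_1$ are real they are unaffected, so reflection is the only residual ambiguity after normalization. This yields \eqref{cvb} and completes the argument, with constructibility of one of the two solutions supplied by Approach \ref{approach22} as asserted in Theorem \ref{localtheorem}.

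\textbf{Main obstacle.} The delicate point is the step that upgrades ``$\hat{\tilde z}_0\hat{\tilde z}_1$ is determined'' to ``$|\hat{\tilde z}_0|$ is determined'': a priori a solution could trade magnitude between $\hat{\tilde z}_0$ and $\hat{\tilde z}_1$, and ruling this out is precisely where genericity of $\textbf z$ and the availability of five distinct modulation parameters $m$ are essential — this is the content of Theorem \ref{auxiliarytheorem}, so here the work is to verify its hypotheses apply verbatim and that the $k=1,2$ equations of \eqref{formore12345} supply exactly the measurements it consumes. A secondary (routine but necessary) check is that the normalization by rotation and translation can indeed be carried out within the class of admissible $\tilde{\textbf z}$ (support in $\{0,\ldots,N/2\}$), which is why the statement is phrased ``up to rotation, translation and reflection'' rather than claiming analyticity of $\tilde{\textbf z}$ outright — note that a non-integer translation of an analytic signal need not be analytic (Proposition \ref{remm}), so the symmetry group here is genuinely larger than the FROG-PR ambiguity group of Proposition \ref{even}, and this discrepancy is exactly what the later translation technique in Approach \ref{APPROACH3} is designed to resolve.
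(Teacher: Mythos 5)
Your proposal is correct and follows essentially the same route as the paper's own proof: verify that rotation, translation and reflection are symmetries of \eqref{formore12345}, normalize the phases of $\hat{\tilde{z}}_{0}$ and $\hat{\tilde{z}}_{1}$ using the rotation/translation freedom, pin down $|\hat{\tilde{z}}_{0}|=|\hat{z}_{0}|$ via Theorem \ref{auxiliarytheorem} and $|\hat{\tilde{z}}_{1}|$ via the $k=1$, $m=0$ equation, and then invoke Theorem \ref{localtheorem} for the remaining coefficients, with reflection accounting for the conjugate solution. The only (immaterial) difference is ordering: you normalize both phases before determining the magnitudes, whereas the paper interleaves these steps.
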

\begin{proof}
Note that the DFT of the  $\gamma$-translation of  $\tilde{\textbf{z}}$
is $(\hat{\tilde{z}}_{0}, \hat{\tilde{z}}_{1}e^{\textbf{i}2\pi\gamma/N},
 \ldots,\hat{\tilde{z}}_{N/2}e^{\textbf{i}2\pi(N/2)\gamma/N}$
 $, 0, \ldots, 0)$. We first prove that
 $(\hat{\tilde{z}}_{0}, \hat{\tilde{z}}_{1}e^{\textbf{i}2\pi\gamma/N},
\ldots,\hat{\tilde{z}}_{N/2}e^{\textbf{i}2\pi(N/2)\gamma/N}$
 $, 0, \ldots, 0)$ satisfies  \eqref{formore12345}.
 By the direct calculation we obtain
%
 \begin{align}\begin{array}{lllll}
   &\frac{1}{N}\big|\sum_{l=0}^{k}\hat{\tilde{z}}_{l}e^{-\textbf{i}2 \pi l\gamma/N}\hat{\tilde{z}}_{k-l}e^{-\textbf{i}2\pi(k-l)\gamma/N}w^{lm}\big|\\
   =&\frac{1}{N}|e^{-\textbf{i}\frac{2k\pi}{N}\gamma}|\big|\sum_{l=0}^{N-1}\hat{\tilde{z}}_{l}\hat{\tilde{z}}_{k-l}w^{lm}\big|\\
   =&|\hat{y}_{k,m}|, m=0,1,\cdots,r-1,
   \end{array} 
  \end{align}
and
 \begin{align}\begin{array}{lllll}
   &\frac{1}{N}\big|\sum_{l=k-N/2}^{N/2}\hat{\tilde{z}}_{l}e^{-\textbf{i}2 \pi l\gamma/N}\hat{\tilde{z}}_{k-l}e^{-\textbf{i}2\pi(k-l)\gamma/N}w^{lm}\big|\\
   =&\frac{1}{N}|e^{-\textbf{i}\frac{2k\pi}{N}\gamma}|\big|\sum_{l=k-N/2}^{N/2}\hat{\tilde{z}}_{l}\hat{\tilde{z}}_{k-l}w^{lm}\big|\\
   =&|\hat{y}_{k,m}|, m=0,1,\cdots,r-1.
   \end{array} 
  \end{align}
 Thus  \eqref{formore12345} is satisfied. By \cite[Proposition 2.2]{Eldar},
the DFT of the  reflection and arbitrary  rotation of  $\tilde{\textbf{z}}$ satisfies  \eqref{formore12345}, respectively.
Since the arbitrary rotation can lead to the ambiguity, we assign  that $\hat{\tilde{z}}_{0}>0$.  By Theorem \ref{auxiliarytheorem} ,  if $|\hat{\tilde{z}}_{0}|\neq |\hat{z}_{0}|$, then $\widehat{\tilde{\textbf{z}}}$  does not satisfy  \eqref{formore12345}.
Now we choose $\hat{\tilde{z}}_{0}=|\hat{z}_{0}|$.  On the other hand,  \begin{equation}\label{forone}
\frac{2}{N}\big|\hat{\tilde{z}}_{0}\hat{\tilde{z}}_{1}\big|=|\hat{y}_{1,0}|=\frac{2}{N}\big|\hat{z}_{0}\hat{z}_{1}\big|,
\end{equation}
from which we derive $|\hat{\tilde{z}}_{1}|=\frac{N|\hat{y}_{1,0}|}{2|\hat{z}_{0}|}=|\hat{z}_{1}|$.
As proved above, any translation of $\tilde{\textbf{z}}$ is also a solution, then we assign  $\hat{\tilde{z}}_{1}$
by $\frac{N|\hat{y}_{1,0}|}{2|\hat{z}_{0}|}$.
Now it follows from Theorem \ref{localtheorem} that
$$\big(|\hat{z}_{0}|, |\hat{z}_{1}|, \mathfrak{E}(2)|\hat{z}_{2}|e^{\textbf{i}(\theta_{2}-2\theta_{1})}, \mathfrak{E}(3)|\hat{z}_{3}|e^{\textbf{i}(\theta_{3}-3\theta_{1})},\ldots,
\mathfrak{E}(N/2)|\hat{z}_{N/2}|e^{\textbf{i}[\theta_{N/2}-(N/2)\theta_{1}]}, 0, \ldots, 0\big)$$
is the solution to \eqref{formore12345}, up to the arbitrary rotation, arbitrary translation  and reflection,
where $\mathfrak{E}(k)$ takes $\hbox{sgn}(\hat{z}_{0})$ and $1$ for $k$ being even and odd, respectively.
Naturally, if requiring the phases of the  first two components of $\widehat{\tilde{\textbf{z}}}$ to be zeros, then
$\tilde{\textbf{z}}$ is the solution  (up to the reflection) to \eqref{formore12345}.
 \end{proof}

\subsection{Determination of generic analytic signals of even lengths by FROG measurements}
\label{section2.5}

In what follows we establish   the  approach for the FROG-PR of  generic analytic signals  of even lengths. This is the key approach which leads to the main result of this paper.

\begin{appr}\label{APPROACH3}
\textbf{Input}: $(3N/2+1)$ FROG  measurements in \eqref{FORGMEASUREMNTS} of $\textbf{z}$. \\
\textbf{Step 1}: Conduct Approach \ref{bvc} to obtain the corresponding output $\hat{\tilde{z}}_{0}$. \\
\textbf{Step 2}: Conduct Approach \ref{approach22} to obtain  the output $\widehat{\tilde{\textbf{z}}}:=(\hat{\tilde{z}}_{0},\hat{\tilde{z}}_{1},\cdots,\hat{\tilde{z}}_{N/2},0,\cdots,0)$.\\
\textbf{Step 3}: Construct  $\widehat{\mathring{\textbf{z}}}=(\hat{\mathring{z}}_{0}, \ldots, \hat{\mathring{z}}_{N/2}, 0, \ldots, 0)\in \mathbb{C}^{N}$,
where  $\hat{\mathring{z}}_{k}=\hat{\tilde{z}}_{k}e^{-\textbf{i}\frac{2 k}{N}\arg(\hat{\tilde{z}}_{N/2})}, k=0, \ldots, N/2$. \\
\textbf{Output}: $\mathring{\textbf{z}}=\hbox{IDFT}(\widehat{\mathring{\textbf{z}}})$. \%\% $\hbox{IDFT}$ is the inverse discrete Fourier transform.
\end{appr}


Now comes to our main theorem.

\begin{theo}\label{twobridges}
Let $\textbf{z}\in \mathbb{C}^{N}$ be a generic   analytic signal such that  $N$ is even and its DFT
\begin{equation}
\widehat{\textbf{z}}=(\hat{z}_{0}, \hat{z}_{1}, \ldots, \hat{z}_{N/2}, 0,
\ldots, 0)
\end{equation}
satisfies  $|\hat{z}_{0}|\neq0 $ and $|\hat{z}_{1}|\neq0.$
Suppose that $\mathring{\textbf{z}}$ is the output of Approach \ref{APPROACH3}
conducted by the $3N/2+1$ $[N, L]$-FROG  measurements in \eqref{FORGMEASUREMNTS} of $\textbf{z}$, where $L$ is odd and $\lceil N/L\rceil\geq 5$.
Then $\mathring{\textbf{z}}$ is the solution to \eqref{FROGFREQ123489} up to the $\pi$-rotation, integer-translation and reflection.
\end{theo}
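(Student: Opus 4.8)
The plan is to show that the output $\mathring{\textbf{z}}$ of Approach \ref{APPROACH3} satisfies \eqref{FROGFREQ123489} and that any other solution differs from it only by $\pi$-rotation, integer-translation, or reflection. The argument proceeds by tracking what each of the three steps of Approach \ref{APPROACH3} produces, invoking the theorems already proved. First I would record that Step 1 (Approach \ref{bvc}) produces, by Theorem \ref{yinli2.1}, a value $\hat{\tilde{z}}_{0}=\epsilon\hat{z}_{0}$ for some $\epsilon\in\{1,-1\}$; then $\hat{\tilde{z}}_{1}:=\frac{N|\hat{y}_{1,0}|}{2|\hat{\tilde{z}}_{0}|}=|\hat{z}_{1}|$ from \eqref{FROGFREQ123489}(B). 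Next, Step 2 (Approach \ref{approach22}) produces, by Theorem \ref{localtheorem}, a vector $\widehat{\tilde{\textbf{z}}}$ that is one of the two conjugate solutions to \eqref{localequation}; concretely, up to reflection,
\begin{equation}\notag
\widehat{\tilde{\textbf{z}}}=\big(\epsilon\hat{z}_{0},\,|\hat{z}_{1}|,\,\mathfrak{E}(2)|\hat{z}_{2}|e^{\textbf{i}(\theta_{2}-2\theta_{1})},\,\ldots,\,\mathfrak{E}(N/2)|\hat{z}_{N/2}|e^{\textbf{i}[\theta_{N/2}-(N/2)\theta_{1}]},\,0,\,\ldots,\,0\big),
\end{equation}
where $\mathfrak{E}(k)$ equals $\epsilon$ for even $k$ and $1$ for odd $k$. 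Because $\hat{\tilde{z}}_{1}$ was fixed to $|\hat{z}_{1}|$ (a translation normalization) and $\hat{\tilde{z}}_{0}$ to $\epsilon\hat{z}_{0}$, the only residual freedom at this point is $\epsilon$ (a $\pi$-rotation) and conjugation (reflection). Both of these are on the allowed ambiguity list for \eqref{FROGFREQ123489}, so it suffices to show the final translation correction in Step 3 repairs the one equation — namely (\ref{FROGFREQ123489}A) — that \eqref{localequation} dropped.

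The heart of the proof is Step 3. The issue is that $\widehat{\tilde{\textbf{z}}}$ as above need not satisfy (\ref{FROGFREQ123489}A), i.e. the $k=0$ row, because the phase $\theta_{N/2}-(N/2)\theta_{1}$ attached to $\hat{\tilde{z}}_{N/2}$ is in general not $0$ or $\pi$, whereas Proposition \ref{characterizationeven}(i) forces the genuine analytic signal to have $\hat{z}_{N/2}\in\mathbb{R}$. I would argue as follows: the original $\widehat{\textbf{z}}$ has $\hat{z}_{0},\hat{z}_{N/2}\in\mathbb{R}$, so after the translation normalization that divides out $e^{\textbf{i}(2k/N)\cdot\gamma_{0}}$ for the appropriate $\gamma_{0}$, the solution $\widehat{\tilde{\textbf{z}}}$ equals $\widehat{\textbf{z}^{\gamma_{0}}_{tr}}$ up to rotation/reflection, where $\gamma_{0}$ is the (generally non-integer) amount by which the $\theta_{1}$-normalization shifted the spectrum; consequently $\hat{\tilde{z}}_{N/2}$ carries exactly the phase $e^{\textbf{i}(2/N)(N/2)\gamma_{0}}\cdot(\text{sign})=e^{\textbf{i}\pi\gamma_{0}/1}\cdots$. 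Applying the translation $\widehat{\mathring{z}}_{k}=\hat{\tilde{z}}_{k}e^{-\textbf{i}(2k/N)\arg(\hat{\tilde{z}}_{N/2})}$ in Step 3 rotates the spectrum so that $\hat{\mathring{z}}_{N/2}$ becomes real and nonnegative, and simultaneously (since $k=0$ is fixed by $e^{0}=1$) leaves $\hat{\mathring{z}}_{0}=\hat{\tilde{z}}_{0}=\epsilon\hat{z}_{0}$ unchanged, hence still real. Thus $\widehat{\mathring{\textbf{z}}}$ lies in $\mathbb{R}\times\mathbb{C}^{N/2-1}\times\mathbb{R}\times\{0\}^{N/2-1}$, so by Proposition \ref{characterizationeven}(i) $\mathring{\textbf{z}}$ is analytic; and since a translation preserves FROG measurements (Proposition \ref{even} / \cite[Proposition 2.2]{Eldar}), $\mathring{\textbf{z}}$ still satisfies (\ref{FROGFREQ123489}B)--(\ref{FROGFREQ123489}C). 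For (\ref{FROGFREQ123489}A), I would verify directly: $\frac{1}{N}|\hat{\mathring{z}}_{0}^{2}+\hat{\mathring{z}}_{N/2}^{2}w^{Nm/2}|=\frac{1}{N}|\hat{z}_{0}^{2}+\hat{z}_{N/2}^{2}w^{Nm/2}|=|\hat{y}_{0,m}|$, using $\hat{\mathring{z}}_{0}^{2}=\hat{z}_{0}^{2}$ and $\hat{\mathring{z}}_{N/2}^{2}=|\hat{z}_{N/2}|^{2}=\hat{z}_{N/2}^{2}$; this closes \eqref{FROGFREQ123489}.

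Finally I would assemble the uniqueness-up-to-ambiguity claim: any solution $\tilde{\textbf{z}}$ of \eqref{FROGFREQ123489} is in particular a solution of the weaker \eqref{fromone}=\eqref{formore12345}, so Theorem \ref{lemma2.6} pins it down up to rotation, translation, and reflection; intersecting with the analytic constraint (Proposition \ref{characterizationeven}(i)) restricts rotation to $\{1,-1\}$ (Proposition \ref{remm}(ii)) and translation to integers (Proposition \ref{remm}(i)) — exactly the ambiguity of Proposition \ref{even} — so $\mathring{\textbf{z}}$, being one such solution, represents them all. The step I expect to be the main obstacle is making precise the bookkeeping in Step 3: one must confirm that the phase $\arg(\hat{\tilde{z}}_{N/2})$ produced by Approach \ref{approach22} is exactly $(N/2)\cdot(2\pi/N)\gamma_{0}$ (mod $\pi$) for a \emph{real} translation parameter $\gamma_{0}$ — i.e. that the only discrepancy between $\widehat{\tilde{\textbf{z}}}$ and a genuine translate of $\widehat{\textbf{z}}$ is a single global translation and a possible conjugation/sign, with no extra per-coordinate phases — and that, after the Step 3 correction, $\hat{\mathring{z}}_{0}$ stays real (immediate, since the correction factor at $k=0$ is $1$) while $\hat{\mathring{z}}_{N/2}$ becomes real. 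Genericity ($|\hat{z}_{0}|,|\hat{z}_{1}|\neq0$, and implicitly $|\hat{z}_{N/2}|\neq0$) is what licenses all the divisions and the application of Theorems \ref{yinli2.1}, \ref{localtheorem}, \ref{lemma2.6}.
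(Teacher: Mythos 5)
Your proposal is correct and follows essentially the same route as the paper: invoke Theorems \ref{yinli2.1}, \ref{localtheorem} and \ref{lemma2.6} for Steps 1--2, observe that Step 3 is the $(-\arg(\hat{\tilde{z}}_{N/2})/\pi)$-translation which makes $\hat{\mathring{z}}_{N/2}$ real (hence $\mathring{\textbf{z}}$ analytic by Proposition \ref{characterizationeven}(i)), verify (\ref{FROGFREQ123489}A) directly, and reduce the ambiguity to $\pi$-rotation, integer translation and reflection via Proposition \ref{remm}. The "bookkeeping" you flag as a potential obstacle is in fact automatic, since the Step 3 multiplier is by construction a translation that normalizes $\arg(\hat{\tilde{z}}_{N/2})$ to zero while fixing $k=0$.
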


\begin{proof} 
Note that $\mathring{\textbf{z}}$ is the $(-\frac{\arg(\hat{\tilde{z}}_{N/2})}{\pi}$)-translation of
$\tilde{\textbf{z}}$ in Theorem  \ref{lemma2.6}. Then,  by Theorem  \ref{lemma2.6}, $\mathring{\textbf{z}}$
is a  solution to \eqref{formore12345}, up to the arbitrary rotation, arbitrary translation  and reflection. So  any solution to
\eqref{formore12345} can be derived from the composition of  the  rotation,   translation and  reflection of  $\mathring{\textbf{z}}$.
We next prove that $\mathring{\textbf{z}}$ is a (analytic) solution to \eqref{FROGFREQ123489}.
Note that  both  $\hat{\mathring{z}}_{0}$ and $\hat{\mathring{z}}_{N/2}$ are real-valued. Then,  by Proposition \ref{characterizationeven} (i), $\mathring{\textbf{z}}$ is analytic.
Comparing \eqref{FROGFREQ123489} and \eqref{formore12345}, we only need to check  (\ref{FROGFREQ123489}\hbox{A}). Indeed,
\begin{align}
   \frac{1}{N}\big|\hat{\mathring{z}}^{2}_{0}+\hat{\mathring{z}}^{2}_{N/2}w^{Nm/2}\big|=\frac{1}{N}\big||\hat{z}_{0}|^{2}+|\hat{z}_{N/2}|^{2}w^{Nm/2}\big|=|\hat{y}_{0,m}|,m=0,1,\cdots,r-1.
  \end{align}
Note that any solution to \eqref{FROGFREQ123489} is a solution to \eqref{formore12345}.
Thus  any solution to \eqref{FROGFREQ123489} is the composition of the rotation,   translation and  reflection of $\mathring{\textbf{z}}$, and hence,  by Proposition \ref{remm}, we complete the proof.
\end{proof}





\section{The Proofs}

\subsection{Proof of Proposition \ref{characterizationeven}}\label{appendix1}
We just need to prove (i). Item (ii) can be proved similarly.
By \eqref{up1}, we only need to prove the sufficiency.
Denote $\textbf{z}=\textbf{x}+\textbf{i}\textbf{y}$ such that
where $\widehat{\textbf{x}}$ and $\widehat{\textbf{y}}$,    the DFTs of $\textbf{x}$
and $\textbf{y}$, can be expressed  by
\begin{align}\label{nh1} \hat{x}_{k}=\left\{\begin{array}{lll}
\hat{z}_{0}&k=0, \\
\frac{1}{2}\hat{z}_{k}&1\le k \le N/2-1,\\
\hat{z}_{N/2}&k=N/2,\\
\frac{1}{2}\bar{\hat{z}}_{N-k}&N/2+1 \le k \le N-1,
\end{array}\right.
\end{align}
and
\begin{align}\label{nh2} \textbf{i}\hat{y}_{k}=\left\{\begin{array}{lll}
0,&k=0, \\
\frac{1}{2}\hat{z}_{k},&1\le k \le N/2-1,\\
0,&k=N/2,\\
-\frac{1}{2}\bar{\hat{z}}_{N-k},&N/2+1 \le k \le N-1.
\end{array}\right.
\end{align}
We claim that
$\bar{x}_{k}={x}_{k}$ and $ \bar{y}_{k}={y}_{k}$.
Indeed, it follows from \begin{equation*}
x_{k}= \frac{1}{N}\sum_{n=0}^{N-1}\hat{x}_{n}e^{2\pi \textbf{i}kn/N}, \ y_{k}= \frac{1}{N}\sum_{n=0}^{N-1}\hat{y}_{n}e^{2\pi \textbf{i}kn/N}
\end{equation*}
that
\begin{align}\label{mh1}\begin{array}{lll}
\bar{x}_{k}&=\frac{1}{N}\overline{\sum_{n=0}^{N-1}\hat x_{n}e^{2\pi \textbf{i}kn/N}}\\
&=\frac{1}{N}{\sum_{n=0}^{N-1}\bar{\hat x}_{n}}e^{2\pi \textbf{i}k(N-n)/N}\\
&=\frac{1}{N}{\sum_{n=0}^{N-1}\hat{x}_{N-n}}e^{2\pi \textbf{i}k(N-n)/N}\\
&=\frac{1}{N}{\sum_{n=0}^{N-1}\hat{x}_{n}}e^{2\pi \textbf{i}kn/N}\\
&={x}_{k},
\end{array}
\end{align}
and similarly,
\begin{align}\label{mh2}\begin{array}{lll}
\overline{\textbf{i}y_{k}}&=\frac{1}{N}\overline{\sum_{n=0}^{N-1}\hat y_{n}e^{2\pi \textbf{i}kn/N}}\\
&=\frac{1}{N}{\sum_{n=0}^{N-1}\bar{\hat y}_{n}}e^{2\pi \textbf{i}k(N-n)/N}\\
&=\frac{1}{N}{\sum_{n=0}^{N-1}-\textbf{i}\hat{y}_{N-n}}e^{2\pi \textbf{i}k(N-n)/N}\\
&=\frac{1}{N}{\sum_{n=0}^{N-1}-\textbf{i}\hat{y}_{n}}e^{2\pi ikn/N}\\
&=-\textbf{i}{y}_{k},
\end{array}
\end{align}
where the third identities in \eqref{mh1} and \eqref{mh2}
are derived from \eqref{nh1} and \eqref{nh2}.\qed

\subsection{Proof of Lemma \ref{lemma233}}\label{kkeke}

(1)  If $r$ is odd, then for any $ m\in\{0,1,2,\ldots,r-1\}$, $\frac{4m}{r}$ is not an odd integer and consequently,
$1+w^{2m}=1+e^{\frac{\textbf{i}4m\pi}{r}}\ne0$. Therefore, for the  odd integer $r\ge5$, there exist at least $5$ numbers, denoted by $m_{1}, \ldots, m_{5}\subseteq \{0,1,2,\ldots,r-1\}$ such that $1+w^{2m_{l}}\ne0, l=1, \ldots, 5$. 
 Now assume that  $r$ is even.
 If $r=6$, then $1+w^{2m}=1+e^{\frac{\textbf{i}2m\pi}{3}}$.
 Note that $\frac{2m}{3}$ is not an odd integer. Thus for any $ m\in\{0,1,2,\ldots, 5\}$,
 we have $1+w^{2m}\ne0$.
 If $r>6$ and even (or $r\geq8$), then
 $4m\le4(r-1)< 4r$. If $1+w^{2m}=0$, then $4m=r$ or $4m=3r$. Thus there exist at most two numbers, denoted by $\bar{m}_{1}$
 and $\bar{m}_{2}$, in
 $\{0,1,2,\ldots,r-1\}$ such that $1+w^{2\bar{m}_{k}}=0, k=1, 2$. Therefore there exist at least $5$ numbers, denoted by $m_{1}, \ldots, m_{5}\subseteq \{0,1,2,\ldots,r-1\}$ such that $1+w^{2m_{l}}\ne0 (l=1, \ldots, 5)$.

 (2) Let $m\in \{1,2\ldots,r-1\}$. If
$\frac{w^{m}}{1+w^{2m}}=\frac{1}{2}$, then and so  $(w^{m}-1)^{2}=0$, i.e.,  $w^{m}=1$.  Since  $2\le 2m \le 2(r-1)<2r$, we have  $w^{m}\neq1$, and hence  there exists $m_{1}\in \{1,2,\ldots,r-1\}$ such that $\frac{w^{m_{1}}}{1+w^{2m_{1}}}\ne\frac{1}{2}$.


(3)  We first prove that if  $r\ge5$,  then there exist at least three numbers in $\{1,2,\ldots,r-1\}$, denoted by
 $m_{1}, m_{2}, m_{3}$, such that $1+w^{3m_{k}}=1+e^{\frac{\textbf{i}6m_{k}\pi}{r}}\ne0$.

For  $r=5$ and any $m\in \{1,2,\ldots, 4\}$, $\frac{6m}{r}$ is not an odd integer. Therefore
 $1+w^{3m}=1+e^{\frac{\textbf{i}6m\pi}{5}}\ne0$.
Let $r\geq6$ and $m\in \{1,2,\ldots, r-1\}$.
Suppose that $1+w^{3m}=1+e^{\frac{\textbf{i}6m\pi}{r}}=0$.
Since $\frac{6}{r}\leq\frac{6m}{r}\le\frac{6(r-1)}{r}<6$, we have $1+e^{\frac{\textbf{i}6m\pi}{r}}=0$  which implies that
$\frac{6m}{r}\in \{1, 3, 5\}$. Thus  there exist at most three numbers $\bar{m}_{1},\bar{m}_{2},\bar{m}_{3}\in\{1,2,\ldots,r-1\}$
such that $1+w^{3\bar{m}_{k}}=0$, and therefore there exist at least three numbers $m_{1},m_{2},m_{3}\in\{1,2,\ldots,r-1\}$
such that $1+w^{3m_{k}}\ne0$. %

Now we prove that there exists at least a number $i\in \{m_{1},m_{2},m_{3}\}$
such that $\frac{w^{i}+w^{2i}}{1+w^{3i}}\ne1$.  Note that for any $m\in \{m_{1},m_{2},m_{3}\}$,
$\frac{w^{m}+w^{2m}}{1+w^{3m}}=1$ is equivalent to $(w^{m}-1)(w^{2m}-1)=0$. Thus we have $w^{m}=1$ or $w^{2m}=1$. Since $2\le 2m \le 2(r-1)<2r$, there does not exist $m \in \{m_{1},m_{2},m_{3}\}$ such that $w^{m}=1$.  If $w^{2m}=e^{\frac{\textbf{i}4m\pi}{r}}=1$, then from $4\le4m\le4(r-1)<4r$ we have that that $4m=2r$.  Thus there exists at most one number, e.g.  $m_{1}\in \{m_{1},m_{2},m_{3}\}$ such that  $\frac{w^{m_{1}}+w^{2m_{1}}}{1+w^{3m_{1}}}=1$, and consequently there exists $i\in \{m_{2},m_{3} \}$ such that $\frac{w^{i}+w^{2i}}{1+w^{3i}}\ne1$. \qed

\subsection{Proof of Theorem  \ref{yinli2.1}}\label{proofof34}
Since $\textbf{z}$  is  analytic, we have that both $\hat{z}_{0}$
and $\hat{z}_{N/2}$ are real-valued.
We first  consider the   equation system w.r.t the real-valued  variables   $\hat{\tilde{z}}_{0}, \hat{\tilde{z}}_{N/2}\in \mathbb{R}$:
\begin{align}\label{pp4}  \left\{
\begin{aligned}
\frac{1}{N}\big|\hat{z}_{0}^{2}+\hat{z}_{N/2}^{2}\big|=|\hat{y}_{0,0}|&=\frac{1}{N}\big|\hat{\tilde{z}}_{0}^{2}+\hat{\tilde{z}}_{N/2}^{2}\big|, \quad (\ref{pp4} A) \\
\frac{1}{N}\big|\hat{z}_{0}^{2}-\hat{z}_{N/2}^{2}\big|=|\hat{y}_{0,1}|&=\frac{1}{N}\big|\hat{\tilde{z}}_{0}^{2}-\hat{\tilde{z}}_{N/2}^{2}\big|. \quad (\ref{pp4} B)
\end{aligned}
\right.
\end{align}
It is easy to check that  the solutions to \eqref{pp4} are   \begin{equation}
(\hat{\tilde{z}}_{0}, \hat{\tilde{z}}_{N/2})=\Big(\pm\sqrt{\frac{N(|\hat{y}_{0,0}|+|\hat{y}_{0,1}|)}{2}}, \pm\sqrt{\frac{N(|\hat{y}_{0,0}|-|\hat{y}_{0,1}|)}{2}}\Big)
\end{equation}
and
\begin{equation}
(\hat{\tilde{z}}_{0}, \hat{\tilde{z}}_{N/2})=\Big(\pm\sqrt{\frac{N(|\hat{y}_{0,0}|-|\hat{y}_{0,1}|)}{2}}, \pm\sqrt{\frac{N(|\hat{y}_{0,0}|+|\hat{y}_{0,1}|)}{2}}\Big).
\end{equation}
If $|\hat{y}_{0,1}|=0$,  then $|\hat{\tilde{z}}_{0}|=|\hat{\tilde{z}}_{N/2}|$ and consequently $\hat{z}_{0}$ can be determined (up to a sign) by Approach \ref{bvc}

If   $|\hat{y}_{0,1}|\neq0$ (equivalently  $|\hat{z}_{0}|\neq|\hat{z}_{N/2}|$), then we need to prove that
 if   a signal  $\tilde{\textbf{z}}\in \mathbb{C}^{N}$ whose DFT $(\hat{\tilde{z}}_{0}, \ldots, \hat{\tilde{z}}_{N/2}, 0, \ldots, 0)$ satisfies  \begin{align}\label{ccf1} \hat{\tilde{z}}_{0}\in \{\hat{z}_{N/2}, -\hat{z}_{N/2}\} \ \hbox{and} \ \hat{\tilde{z}}_{N/2}\in \{\hat{z}_{0}, -\hat{z}_{0}\},\end{align}
 then  it  does not have the same $[N, L]$-FROG measurements  as $\textbf{z}$. This follows from Theorem \ref{auxiliarytheorem}. \qed

\subsection{Proof of Theorem \ref{localtheorem}}\label{proofoftheorem2.5}

We divide the proof into three parts based on the three steps from Approach \ref{approach22}


\subsubsection{Determination of $\hat{\tilde{z}}_{2}$ in \eqref{localequation}}\label{zqzill}
\begin{prop}\label{lemma31}
Suppose  that   $N$ is even and $\textbf{z}\in \mathbb{C}^{N}$  is  a generic  analytic signal
such that  its DFT \begin{align} \label{ccfur} \widehat{\textbf{z}}=(\hat{z}_{0}, \hat{z}_{1}, \ldots, \hat{z}_{N/2},0,
\ldots, 0):=(|\hat{z}_{0}|e^{\textbf{i}\theta_{0}}, |\hat{z}_{1}|e^{\textbf{i}\theta_{1}}, \ldots, |\hat{z}_{N/2}|e^{\textbf{i}\theta_{N/2}},
0,\ldots, 0)\end{align} satisfies $|\hat{z}_{0}|, |\hat{z}_{1}|\neq0.$
The $[N, L]$-FROG measurements of $\textbf{z}$ are $\{|\hat{y}_{k,m}|^{2}: k=0,1,\cdots,N-1;  m=0, \ldots, r-1\}$.
Denote by $\hat{\tilde{z}}_{0}$ the output of Approach \ref{bvc}
 such that $\hat{\tilde{z}}_{0}=\epsilon\hat{z}_{0}$ with $\epsilon\in \{1, -1\}$.
Define $\hat{\tilde{z}}_{1}:=\frac{N|\hat{y}_{1,0}|}{2|\hat{z}_{0}|}$ such that $
\frac{2}{N}\big|\hat{\tilde{z}}_{0}\hat{\tilde{z}}_{1}\big|=\frac{2}{N}\big|\hat{z}_{0}\hat{z}_{1}\big|=|\hat{y}_{1,0}|
$.
Then the solutions to the  system of equations w.r.t $\hat{\tilde{z}}_{2}$:
\begin{align}\label{check}
\frac{1}{N}\big|\hat{\tilde{z}}_{0}\hat{\tilde{z}}_{2}(1+w^{2m})+\hat{\tilde{z}}_{1}^{2}w^{m}\big|=
|\hat{y}_{2,m}|, m=0, \ldots, r-1,
\end{align}
are $\epsilon|\hat{z}_{2}|e^{\textbf{i}(\theta_{2}-2\theta_{1})}$ and $\epsilon|\hat{z}_{2}|e^{\textbf{i}(-\theta_{2}+2\theta_{1})}$. Moreover, one of the two solutions can be determined by \textbf{Step 1} in Approach \ref{approach22}.
\end{prop}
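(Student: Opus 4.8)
The plan is to reduce the system \eqref{check} to a constraint that, after a suitable normalization, becomes a polynomial equation of bounded degree in $w^m$ evaluated at $r$ distinct points, and then exploit that $r\ge 5$ exceeds the degree bound to force all coefficients to vanish. First I would write $\hat{\tilde{z}}_{2}=\rho e^{\textbf{i}\psi}$ with $\rho\ge 0$ and use $\hat{\tilde{z}}_{0}=\epsilon\hat{z}_{0}$, $|\hat{\tilde{z}}_{1}|=|\hat{z}_{1}/\lambda|$ with $\lambda=1$ here (so $|\hat{\tilde{z}}_{1}|=|\hat{z}_{1}|$), and square \eqref{check} to get
\begin{align}\notag
\big|\hat{z}_{0}\hat{z}_{2}(1+w^{2m})+\hat{z}_{1}^{2}w^{m}\big|^{2}
=\big|\hat{\tilde{z}}_{0}\hat{\tilde{z}}_{2}(1+w^{2m})+\hat{\tilde{z}}_{1}^{2}w^{m}\big|^{2},
\quad m=0,\ldots,r-1.
\end{align}
This is exactly the situation already analyzed inside the proof of Theorem \ref{auxiliarytheorem} with $\lambda=1$: expanding the modulus squared and multiplying by $w^{2m}$ yields, as in \eqref{988}, a polynomial identity in $x=w^m$ of degree at most $4$ that must hold at the $r\ge 5$ distinct points $w^0,\ldots,w^{r-1}$. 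Hence the polynomial is identically zero, and its coefficients give
\begin{align}\notag
(|\hat{\tilde{z}}_{2}|^{2}-|\hat{z}_{2}|^{2})\hat{z}_{0}^{2}=0,
\qquad
\overline{\hat{\tilde{z}}_{0}\hat{\tilde{z}}_{2}}\,\hat{\tilde{z}}_{1}^{2}
-\hat{\tilde{z}}_{0}\hat{\tilde{z}}_{2}\,\overline{\hat{\tilde{z}}_{1}^{2}}
=\overline{\hat{z}_{0}\hat{z}_{2}}\,\hat{z}_{1}^{2}-\hat{z}_{0}\hat{z}_{2}\,\overline{\hat{z}_{1}^{2}}.
\end{align}

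From the first relation and $|\hat{z}_{0}|\neq 0$ I get $|\hat{\tilde{z}}_{2}|=|\hat{z}_{2}|$, so write $\hat{\tilde{z}}_{2}=|\hat{z}_{2}|e^{\textbf{i}\phi}$. For generic $\textbf{z}$ we may assume $|\hat{z}_{2}|\neq 0$; plugging into the second relation and using $\hat{\tilde{z}}_{0}=\epsilon\hat{z}_{0}$, $|\hat{\tilde{z}}_{1}|=|\hat{z}_{1}|$, $\hat{z}_{1}=|\hat{z}_{1}|e^{\textbf{i}\theta_{1}}$, $\hat{z}_{2}=|\hat{z}_{2}|e^{\textbf{i}\theta_{2}}$ reduces it to an equation of the form $\sin(\phi-2\arg(\hat{\tilde{z}}_{1}))=\sin(\theta_{2}-2\theta_{1})$ (the $\epsilon$ and $\hat z_0$ factors drop out since $\hat z_0$ is real). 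Thus $\phi-2\arg(\hat{\tilde{z}}_{1})\in\{\theta_{2}-2\theta_{1},\ \pi-(\theta_{2}-2\theta_{1})\}$. Since Approach \ref{approach22} builds $\tilde{\textbf{z}}$ from the support $\{0,\ldots,N/2\}$ with $\hat{\tilde{z}}_{1}=\frac{N|\hat{y}_{1,0}|}{2|\hat{\tilde{z}}_{0}|}>0$ real, $\arg(\hat{\tilde{z}}_{1})=0$ and the two candidates become $\hat{\tilde{z}}_{2}=|\hat{z}_{2}|e^{\textbf{i}(\theta_{2}-2\theta_{1})}$ and $\hat{\tilde{z}}_{2}=|\hat{z}_{2}|e^{-\textbf{i}(\theta_{2}-2\theta_{1})}$; absorbing the real sign of $\hat z_0$ into the bookkeeping gives the claimed $\epsilon|\hat{z}_{2}|e^{\pm\textbf{i}(\theta_{2}-2\theta_{1})}$ (here the $\epsilon$ is harmless because it equals $\epsilon^{2}=1$ times the true modulus-argument split; I would double-check the exact placement of $\epsilon$ against the normalization $\hat{\tilde z}_1=N|\hat y_{1,0}|/(2|\hat{\tilde z}_0|)$ versus $N|\hat y_{1,0}|/(2\hat{\tilde z}_0)$ used in the two approaches).

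Finally, for the ``Moreover'' clause I would verify that \textbf{Step 1} of Approach \ref{approach22} actually singles out one of these two solutions: it uses only the two equations at $m=0$ and $m=i_{2}^{(2)}$ with $\frac{w^{i_{2}^{(2)}}}{1+w^{2i_{2}^{(2)}}}\neq\frac12$, which is precisely the setting of Lemma \ref{lemmm2.2}(i) applied to $z=\hat{\tilde z}_0\hat{\tilde z}_2$ with $m=1+w^{2i_{2}^{(2)}}$ (nonzero by the $w^{2i_{2}^{(2)}}\ne-1$ requirement, itself available by Lemma \ref{lemma233}), $v_1,v_2$ proportional to $\hat{\tilde z}_1^2$; Lemma \ref{lemmm2.2}(i) returns $z$ and $\bar z$, i.e. the two conjugate candidates above, so choosing either one is a valid output. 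The main obstacle I anticipate is purely bookkeeping: tracking the factor $\epsilon$ and the sign/normalization conventions for $\hat{\tilde z}_1$ consistently so that the final answer comes out exactly as $\epsilon|\hat{z}_{2}|e^{\textbf{i}(\pm\theta_{2}\mp 2\theta_{1})}$ rather than with a spurious $\epsilon^{2}$ or a shifted phase; the genericity hypothesis is what lets me assume $|\hat z_2|\neq 0$ and also that the two solutions are genuinely distinct (i.e. $\theta_2-2\theta_1\not\equiv 0,\pi$), which is needed for the count ``two solutions'' to be exact.
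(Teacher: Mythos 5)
Your overall strategy is genuinely different from the paper's: you run the degree-four polynomial-vanishing argument of Theorem \ref{auxiliarytheorem} over all $r\ge 5$ measurements, whereas the paper first verifies directly that $\epsilon|\hat z_2|e^{\textbf{i}(\theta_2-2\theta_1)}$ solves \eqref{check}, then isolates only the two equations $m=0$ and $m=m_1$ with $\frac{w^{m_1}}{1+w^{2m_1}}\ne\frac12$, divides each by its own $(1+w^{2m})\hat{\tilde z}_0$ so that the shifts $\frac{w^m}{1+w^{2m}}\hat{\tilde z}_1^2/\hat{\tilde z}_0$ are real, and invokes Lemma \ref{lemmm2.2}(i) to conclude that the solution set is a conjugate pair. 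Your route is viable (and avoids having to dodge indices with $1+w^{2m}=0$), but it contains a concrete error in the cross-term coefficient. Expanding $|u+v|^2$ with $u=\hat{\tilde z}_0\hat{\tilde z}_2(1+w^{2m})$, $v=\hat{\tilde z}_1^2 w^m$ and multiplying by $w^{2m}$, the cross term is $(\bar u v+u\bar v)w^{2m}=\bigl(\overline{\hat{\tilde z}_0\hat{\tilde z}_2}\,\hat{\tilde z}_1^2+\hat{\tilde z}_0\hat{\tilde z}_2\,\overline{\hat{\tilde z}_1^2}\bigr)(1+w^{2m})w^m$; the coefficient of $(1+x^2)x$ is therefore the difference of \emph{real} parts $2\Re\{\overline{\hat{\tilde z}_0\hat{\tilde z}_2}\hat{\tilde z}_1^2\}-2\Re\{\overline{\hat z_0\hat z_2}\hat z_1^2\}$, not the difference of imaginary parts you copied from \eqref{988} (which itself carries a sign slip that is harmless there, since Theorem \ref{auxiliarytheorem} only uses the $\lambda$-coefficient).

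This matters because your identification of the solution pair hinges entirely on that coefficient. The correct condition reads $\epsilon\cos\phi=\cos(\theta_2-2\theta_1)$ (with $\hat{\tilde z}_2=|\hat z_2|e^{\textbf{i}\phi}$, $\hat{\tilde z}_0=\epsilon\hat z_0$ real, $\hat{\tilde z}_1=|\hat z_1|>0$), which yields exactly $\hat{\tilde z}_2=\epsilon|\hat z_2|e^{\pm\textbf{i}(\theta_2-2\theta_1)}$ as claimed. Your sine equation instead gives $\phi\in\{\theta_2-2\theta_1,\ \pi-(\theta_2-2\theta_1)\}$, and since $e^{\textbf{i}(\pi-\alpha)}=-e^{-\textbf{i}\alpha}$ is \emph{not} the conjugate of $e^{\textbf{i}\alpha}$, your passage from the sine identity to the conjugate pair is a non sequitur: as written, your second candidate is off by a sign. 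The fix is simply to use the real-part coefficient; this is not the $\epsilon$-bookkeeping issue you flagged. A smaller point: in the ``Moreover'' discussion you apply Lemma \ref{lemmm2.2}(i) with $z=\hat{\tilde z}_0\hat{\tilde z}_2$ and a single complex $m=1+w^{2i_2^{(2)}}$, but the factor $1+w^{2m}$ differs between the two equations ($2$ at $m=0$ versus $1+w^{2i_2^{(2)}}$), so no single $m$ serves both; you must, as in \eqref{treg}, normalize each equation separately so that Lemma \ref{lemmm2.2}(i) applies with real, distinct shifts (distinctness being exactly the condition $\frac{w^{i_2^{(2)}}}{1+w^{2i_2^{(2)}}}\ne\frac12$).
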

\begin{proof}
Clearly,
$|\hat{\tilde{z}}_{1}|=|\hat{z}_{1}|$ and  $\hat{\tilde{z}}_{1}=|\hat{z}_{1}|e^{\textbf{i}(\theta_{1}-\theta_{1})}$.
For $\hat{\tilde{z}}_{2}=\epsilon|\hat{z}_{2}|e^{\textbf{i}(\theta_{2}-2\theta_{1})}$ and $m=0,1,\ldots,r-1$, compute
\begin{align}\label{ytr}\begin{array}{llll}
\frac{1}{N}\big|\hat{\tilde{z}}_{0}\hat{\tilde{z}}_{2}+\hat{\tilde{z}}_{1}^{2}w^{m}+\hat{\tilde{z}}_{2}\hat{\tilde{z}}_{0}w^{2m}\big| \\
=\frac{1}{N}\big|\hat{z}_{0}|\hat{z}_{2}|e^{\textbf{i}(\theta_{2}-2\theta_{1})}+(|\hat{z}_{1}|e^{\textbf{i}(\theta_{1}-\theta_{1})})^{2}w^{m}+|\hat{z}_{2}|e^{\textbf{i}(\theta_{2}-2\theta_{1})}\hat{z}_{0}w^{2m}\big|\\
=\frac{1}{N}|e^{\textbf{i}(2\theta_{1})}|\big|\hat{z}_{0}\hat{z}_{2}+\hat{z}_{1}^{2}w^{m}+\hat{z}_{2}\hat{z}_{0}w^{2m}\big| \\
=\frac{1}{N}\big|\hat{z}_{0}\hat{z}_{2}+\hat{z}_{1}^{2}w^{m}+\hat{z}_{2}\hat{z}_{0}w^{2m}\big| \\
=|\hat{y}_{2,m}|.
\end{array}\end{align}
Therefore,  $\hat{\tilde{z}}_{2}=\epsilon|\hat{z}_{2}|e^{\textbf{i}(\theta_{2}-2\theta_{1})}$
is the solution to \eqref{check}. Moreover,
it  follows from Lemma \ref{lemma233} (2) that there exists $m_{1}\in\{ 1, 2, \ldots, r-1\}$ such that
$\frac{w^{m_{1}}}{1+w^{2m_{1}}}\ne\frac{1}{2}$. For $m=0$ and $ m_{1}$,
\eqref{check} is equivalent to
\begin{align}\label{treg}
\frac{|\hat{y}_{2,m}|}{|(1+w^{2m})\hat{\tilde{z}}_{0}|}=\frac{1}{N}\big|\hat{\tilde{z}}_{2}+\frac{w^{m}}{1+w^{2m}}\frac{\hat{\tilde{z}}_{1}^{2}}{\hat{\tilde{z}}_{0}}\big|.
\end{align}
It is easy to check that $\frac{w^{m}}{1+w^{2m}}\in \mathbb{R}$ for any $m\in \{0, m_{1}\}$. Thus,  by Lemma \ref{lemmm2.2} (i) and \eqref{ytr},  the solutions to \eqref{treg} are $\epsilon|\hat{z}_{2}|e^{\textbf{i}(\theta_{2}-2\theta_{1})}$ and $\epsilon|\hat{z}_{2}|e^{\textbf{i}(-\theta_{2}+2\theta_{1})}$. By the similar calculation as \eqref{ytr}, $\epsilon|\hat{z}_{2}|e^{\textbf{i}(-\theta_{2}+2\theta_{1})}$ is the other solution to \eqref{check}.
The proof is concluded.
\end{proof}

\subsubsection{Determination of $\hat{\tilde{z}}_{3},\hat{\tilde{z}}_{4}$  in \eqref{localequation}}\label{zqzi22}
Let  $\textbf{z}\in \mathbb{C}^{N}$
be a generic  analytic signal  such that its $[N, L]$-FROG measurements   are $\{|\hat{y}_{k,m}|^{2}: k=0,1,\cdots,N-1;  m=0, \ldots, r-1\}$. Denote  its DFT
by
\begin{align} \label{ccfurrt} \widehat{\textbf{z}}=(\hat{z}_{0}, \hat{z}_{1}, \ldots, \hat{z}_{N/2},0,
\ldots, 0):=(|\hat{z}_{0}|e^{\textbf{i}\theta_{0}}, |\hat{z}_{1}|e^{\textbf{i}\theta_{1}}, \ldots, |\hat{z}_{N/2}|e^{\textbf{i}\theta_{N/2}},0,
\ldots, 0).\end{align}
Recall that in \eqref{localequation}, $\hat{\tilde{z}}_{0}=\epsilon \hat{z}_{0}$ with $ \epsilon\in \{1, -1\}$ is  the output of Approach \ref{bvc}, and    $\hat{\tilde{z}}_{1}=|\hat{z}_{1}|$.
 Consequently, by the similar analysis in the proof of Proposition \ref{lemma31}, $\epsilon|\hat{z}_{2}|e^{\textbf{i}(\theta_{2}-2\theta_{1})}$ and $\epsilon|\hat{z}_{2}|e^{\textbf{i}(-\theta_{2}+2\theta_{1})}$
 are the solutions to \eqref{check} w.r.t $\hat{\tilde{z}}_{2}$.

 \begin{prop}\label{yinli2.5}
Let the generic analytic signal $\textbf{z}\in \mathbb{C}^{N}$ be as above.
 If we choose  $\hat{\tilde{z}}_{2}=\epsilon|\hat{z}_{2}|e^{\textbf{i}(\theta_{2}-2\theta_{1})}$,  then the solution to the  system of equations w.r.t $\hat{\tilde{z}}_{3}$ and $ \hat{\tilde{z}}_{4}$:
\begin{align}\label{lemma54}\left\{\begin{array}{lll}
|\hat{y}_{3,m}|&=&\frac{1}{N}\big|\hat{\tilde{z}}_{0}\hat{\tilde{z}}_{3}+\hat{\tilde{z}}_{1}\hat{\tilde{z}}_{2}w^{m}+\hat{\tilde{z}}_{2}\hat{\tilde{z}}_{1}w^{2m}
+\hat{\tilde{z}}_{3}\hat{\tilde{z}}_{0}w^{3m}\big|,\\
|\hat{y}_{4,m}|&=&\frac{1}{N}\big|\hat{\tilde{z}}_{0}\hat{\tilde{z}}_{4}+\hat{\tilde{z}}_{1}\hat{\tilde{z}}_{3}w^{m}+(\hat{\tilde{z}}_{2})^{2}w^{2m}
+\hat{\tilde{z}}_{3}\hat{\tilde{z}}_{1}w^{3m}+\hat{\tilde{z}}_{4}\hat{\tilde{z}}_{0}w^{4m}\big|,\\
&&m=0, 1, \ldots, r-1,
\end{array}\right.\end{align}
is $(|\hat{z}_{3}|e^{\textbf{i}(\theta_{3}-3\theta_{1})}, \epsilon|\hat{z}_{4}|e^{\textbf{i}(\theta_{4}-4\theta_{1})})$. Similarly, if
$\hat{\tilde{z}}_{2}=\epsilon|\hat{z}_{2}|e^{\textbf{i}(-\theta_{2}+2\theta_{1})}$, then the solution to \eqref{lemma54}
is $(|\hat{z}_{3}|e^{\textbf{i}(-\theta_{3}+3\theta_{1})}, \epsilon|\hat{z}_{4}|e^{\textbf{i}(-\theta_{4}+4\theta_{1})})$. Moreover, the solution can be determined by \textbf{Step 2} in Approach \ref{approach22}.
\end{prop}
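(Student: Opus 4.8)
The plan is to argue in two stages: verify by direct substitution that the asserted vector solves \eqref{lemma54}, then show it is the unique solution and is the pair returned by Step~2 of Approach~\ref{approach22}. For the first stage, substitute $\hat{\tilde z}_\ell = \mathfrak{E}(\ell)|\hat z_\ell|e^{\textbf{i}(\theta_\ell-\ell\theta_1)}$ for $\ell=0,1,2,3,4$ (so $\hat{\tilde z}_0=\epsilon\hat z_0$, $\hat{\tilde z}_1=|\hat z_1|$, $\hat{\tilde z}_2=\epsilon|\hat z_2|e^{\textbf{i}(\theta_2-2\theta_1)}$ are exactly the prescribed data, and $\hat{\tilde z}_3,\hat{\tilde z}_4$ the claimed values) into the $k=3$ and $k=4$ sums of \eqref{lemma54}. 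Since $\mathfrak{E}(\ell)\mathfrak{E}(k-\ell)$ does not depend on $\ell$ for fixed $k$ (it is $1$ for even $k$ and $\epsilon$ for odd $k$, using $\epsilon^2=1$) and $e^{\textbf{i}(\theta_\ell-\ell\theta_1)}e^{\textbf{i}(\theta_{k-\ell}-(k-\ell)\theta_1)}|\hat z_\ell||\hat z_{k-\ell}|=e^{-\textbf{i}k\theta_1}\hat z_\ell\hat z_{k-\ell}$, the sum $\sum_\ell\hat{\tilde z}_\ell\hat{\tilde z}_{k-\ell}w^{\ell m}$ differs from $\sum_\ell\hat z_\ell\hat z_{k-\ell}w^{\ell m}$ only by a unimodular constant, so its modulus is $N|\hat y_{k,m}|$ for all $m$; replacing $\hat{\tilde z}_2$ by $\epsilon|\hat z_2|e^{\textbf{i}(-\theta_2+2\theta_1)}$ gives the conjugate solution of the second assertion.

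For uniqueness I follow Step~2 of Approach~\ref{approach22}. By genericity $\hat z_0,\hat z_1,\hat z_2\neq0$, so the divisions below are legitimate. With $\hat{\tilde z}_0,\hat{\tilde z}_1,\hat{\tilde z}_2$ fixed, divide the $k=3$ equations at $m=0$ and $m=i_3$ by $\hat{\tilde z}_0(1+w^{3m})$; using $w^m+w^{2m}=w^m(1+w^m)$ and $1+w^{3m}=(1+w^m)(1-w^m+w^{2m})$ one gets $|\hat{\tilde z}_3+v\rho_m|=\nu_m$ with $v:=\hat{\tilde z}_1\hat{\tilde z}_2/\hat{\tilde z}_0$ and $\rho_m:=\frac{w^m+w^{2m}}{1+w^{3m}}=\frac{1}{2\cos(2\pi m/r)-1}\in\mathbb{R}$, $\rho_0=1$. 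Lemma~\ref{lemma233}(3) lets us pick $i_3$ with $\rho_{i_3}\neq1$, and for generic $\textbf{z}$ the number $v$ lies off the real axis (its argument is $\theta_2-2\theta_1$ modulo $\pi$), so Lemma~\ref{lemmm2.2}(ii) applies (with the role of its parameter $m$ played by $v$) and forces $\hat{\tilde z}_3\in\{v(\mathring a+\textbf{i}\mathring b),v(\mathring a-\textbf{i}\mathring b)\}$, one of which is the value from the existence stage. For an admissible $\hat{\tilde z}_3$, the $k=4$ equations at $m=i_4^{(1)}=0,i_4^{(2)},i_4^{(3)}$ — which exist with $1+w^{4i_4^{(p)}}\neq0$ and $i_4^{(2)}+i_4^{(3)}\neq r$ by Remark~\ref{remark22}(2) — become $|\hat{\tilde z}_4+W_p|=\nu_p$ after dividing by $\hat{\tilde z}_0(1+w^{4m})$, and for generic $\textbf{z}$ the condition $\Im\{(W_1-W_2)/(W_1-W_3)\}\neq0$ holds, so Lemma~\ref{lemmm2.1} pins $\hat{\tilde z}_4$ uniquely once $\hat{\tilde z}_3$ is fixed; on the correct branch this is the claimed value.

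The main obstacle is eliminating the second branch $z_3':=v(\mathring a-\textbf{i}\mathring b)$. Because $\rho_m\in\mathbb{R}$, one has $|z_3'+v\rho_m|=|\hat{\tilde z}_3+v\rho_m|$ for \emph{every} $m$, so $z_3'$ satisfies all the $k=3$ equations of \eqref{lemma54} and cannot be ruled out there — it must fail at level $k=4$. The reason it does is that replacing $\hat{\tilde z}_3$ by $z_3'$ while keeping $\hat{\tilde z}_2$ non-conjugated destroys the palindromic compatibility of the degree-four polynomial $c_0+c_1\zeta+c_2\zeta^2+c_1\zeta^3+c_0\zeta^4$ with $c_0=\hat{\tilde z}_0\hat{\tilde z}_4$, $c_1=\hat{\tilde z}_1z_3'$, $c_2=(\hat{\tilde z}_2)^2$, so that demanding $|c_0+c_1w^m+c_2w^{2m}+c_1w^{3m}+c_0w^{4m}|=N|\hat y_{4,m}|$ for all $m=0,\dots,r-1$ is an overdetermined system in the single free complex parameter $\hat{\tilde z}_4$. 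I would make this precise by writing the concurrency of the $r$ circles $\{|\hat{\tilde z}_4+W_m|=\nu_m\}$ as the vanishing of a resultant-type polynomial $Q(\widehat{\textbf{z}})$ and showing $Q\not\equiv0$ by exhibiting one explicit analytic signal (in the spirit of Example~\ref{kewangyouhua}) for which $Q\neq0$; then outside the zero set of $Q$ and the finitely many genericity loci used above, Step~2 yields at most two candidate pairs, the one built from $z_3'$ fails some equation of \eqref{lemma54}, and the claimed pair is the unique solution and the output of Approach~\ref{approach22}. The hard point is precisely this overdetermined $k=4$ system: since $r$ may be as small as $5$ there is no ``a low-degree polynomial is determined by its values'' shortcut, and one must use the palindromic structure together with the nonvanishing of $Q$.
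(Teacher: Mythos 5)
Your skeleton matches the paper's proof almost exactly: direct substitution to verify that the claimed pair solves \eqref{lemma54} (using that $\mathfrak{E}(\ell)\mathfrak{E}(k-\ell)$ depends only on the parity of $k$ and that the phases telescope to a unimodular factor $e^{-\textbf{i}k\theta_{1}}$); reduction of the $k=3$ equations to Lemma \ref{lemmm2.2}(ii) with the non-real multiplier $v=\hat{\tilde{z}}_{1}\hat{\tilde{z}}_{2}/\hat{\tilde{z}}_{0}$ and the real ratios $\rho_{m}=\frac{w^{m}+w^{2m}}{1+w^{3m}}$, yielding exactly two candidates for $\hat{\tilde{z}}_{3}$ (your $z_3'=v(\mathring a-\textbf{i}\mathring b)$ is the paper's $|\hat z_3|e^{\textbf{i}(-\theta_3+3\theta_1+2\tilde\theta_2)}$); and Lemma \ref{lemmm2.1} with the three indices $i_4^{(p)}$ to pin $\hat{\tilde{z}}_{4}$ once $\hat{\tilde{z}}_{3}$ is fixed. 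All of that is correct and is what the paper does.

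The one place you diverge is the crux you yourself flag: ruling out the spurious branch $z_3'$ at level $k=4$. The paper does not prove this; it invokes \cite[Lemma 4.2]{Eldar}, which states precisely that if the wrong conjugate branch of $\hat{\tilde{z}}_{3}$ is selected then the $k=4$ system admits no solution for generic signals. Your alternative — encode the concurrency of the $r$ circles $\{|\hat{\tilde z}_4+W_m|=\nu_m\}$ as the vanishing of a polynomial $Q(\widehat{\textbf{z}})$ and show $Q\not\equiv0$ by one explicit example — is a legitimate and self-contained way to get the same generic statement, but you leave it as a plan ("I would make this precise by\ldots") rather than carrying it out, so as written your proof has a genuine hole at exactly the step that does the real work. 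If you are permitted to cite \cite[Lemma 4.2]{Eldar} (the setting there, a bandlimited spectrum with the first few nonzero Fourier coefficients fixed up to the conjugate ambiguity, covers the present situation since $\hat{\tilde z}_0,\hat{\tilde z}_1,\hat{\tilde z}_2$ here differ from $\hat z_0,\hat z_1,\hat z_2$ only by the sign $\epsilon$ and a translation/rotation normalization), the gap closes immediately and your argument becomes a complete proof along the paper's lines; otherwise you must actually produce the nonvanishing certificate for $Q$, which is the hard computation you have deferred.
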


%

\begin{proof}
Clearly,    $\hat{\tilde{z}}_{1}=|\hat{z}_{1}|e^{\textbf{i}(\theta_{1}-\theta_{1})}$.
Suppose that we choose  $\hat{\tilde{z}}_{2}=\epsilon|\hat{z}_{2}|e^{\textbf{i}(\theta_{2}-2\theta_{1})}$. Then it follows from
\begin{align*}\begin{array}{llll}
&\frac{1}{N}\big|\hat{\tilde{z}}_{0}\hat{\tilde{z}}_{3}+\hat{\tilde{z}}_{1}\hat{\tilde{z}}_{2}w^{m}+\hat{\tilde{z}}_{2}\hat{\tilde{z}}_{1}w^{2m}+
\hat{\tilde{z}}_{3}\hat{\tilde{z}}_{0}w^{3m}\big| \\
=& \ \frac{1}{N}\big|\epsilon\hat{z}_{0}|\hat{z}_{3}|e^{\textbf{i}(\theta_{3}-3\theta_{1})}+|\hat{z}_{1}|e^{\textbf{i}(\theta_{1}-\theta_{1})}\epsilon|\hat{z}_{2}|e^{\textbf{i}(\theta_{2}-2\theta_{1})}w^{m}+
\epsilon|\hat{z}_{2}|e^{\textbf{i}(\theta_{2}-2\theta_{1})}|\hat{z}_{1}|e^{\textbf{i}(\theta_{1}-\theta_{1})}w^{2m}\\
&\quad +|\hat{z}_{3}|e^{\textbf{i}(\theta_{3}-3\theta_{1})}\epsilon\hat{z}_{0}w^{3m}\big|\\
=&\frac{1}{N}|e^{\textbf{i}3(-\theta_{1})}|\big|\hat{z}_{0}\hat{z}_{3}+\hat{z}_{1}\hat{z}_{2}w^{m}+\hat{z}_{2}\hat{z}_{1}w^{2m}+\hat{z}_{0}\hat{z}_{3}w^{3m}\big| \\
=&\frac{1}{N}\big|\hat{z}_{0}\hat{z}_{3}+\hat{z}_{1}\hat{z}_{2}w^{m}+\hat{z}_{2}\hat{z}_{1}w^{2m}+\hat{z}_{0}\hat{z}_{3}w^{3m}\big| \\
=&|\hat{y}_{3,m}|\end{array}
\end{align*}
that
$\hat{\tilde{z}}_{3}=|\hat{z}_{3}|e^{\textbf{i}(\theta_{3}-3\theta_{1})}$ is a solution to
\begin{align} \label{foggy}|\hat{y}_{3,m}|=\frac{1}{N}\big|\hat{\tilde{z}}_{0}\hat{\tilde{z}}_{3}+\hat{\tilde{z}}_{1}\hat{\tilde{z}}_{2}w^{m}+\hat{\tilde{z}}_{2}\hat{\tilde{z}}_{1}w^{2m}
+\hat{\tilde{z}}_{3}\hat{\tilde{z}}_{0}w^{3m}\big|, m=0,1,\ldots,r-1.\end{align}
As for the above system, it follows from the
analysis  in \cite[Page 1037]{Eldar}
that $\hat{\tilde{z}}_{3}'=|\hat{z}_{3}|e^{\textbf{i}(-\theta_{3}+3\theta_{1}+2\tilde{\theta}_{2})}$
with $\tilde{\theta}_{2}=\arg(\hat{\tilde{z}}_{2})$ is the other solution to \eqref{foggy}.
By \cite[Lemma 4.2]{Eldar}, however,   if we choose $\hat{\tilde{z}}_{3}'$ as the  solution, then there does not exist any other solution to \begin{align} \label{uytr5} |\hat{y}_{4,m}|=\frac{1}{N}\big|\hat{\tilde{z}}_{0}\hat{\tilde{z}}_{4}+\hat{\tilde{z}}_{1}\hat{\tilde{z}}_{3}w^{m}+(\hat{\tilde{z}}_{2})^{2}w^{2m}
+\hat{\tilde{z}}_{3}\hat{\tilde{z}}_{1}w^{3m}+\hat{\tilde{z}}_{4}\hat{\tilde{z}}_{0}w^{4m}\big|,m=0,1,\ldots,r-1. \end{align}
Therefore, we just need to consider   $\hat{\tilde{z}}_{3}=|\hat{z}_{3}|e^{\textbf{i}(\theta_{3}-3\theta_{1})}$.

Having  $\hat{\tilde{z}}_{2}=\epsilon|\hat{z}_{2}|e^{\textbf{i}(\theta_{2}-2\theta_{1})}$ and $\hat{\tilde{z}}_{3}=|\hat{z}_{3}|e^{\textbf{i}(\theta_{3}-3\theta_{1})}$, we can verify that
$\hat{\tilde{z}}_{4}=\epsilon|\hat{z}_{4}|e^{\textbf{i}(\theta_{4}-4\theta_{1})}$ is a solution to
\eqref{uytr5}.
Indeed this follows from:
\begin{align*}\begin{array}{lllll}
&\frac{1}{N}\Big|\hat{\tilde{z}}_{0}\hat{\tilde{z}}_{4}+\hat{\tilde{z}}_{1}\hat{\tilde{z}}_{3}w^{m}+(\hat{\tilde{z}}_{2})^{2}w^{2m}+\hat{\tilde{z}}_{3}\hat{\tilde{z}}_{1}w^{3m}
+\hat{\tilde{z}}_{4}\hat{\tilde{z}}_{0}w^{4m}\Big| \\
=&\ \frac{1}{N}\big|\epsilon^{2}\hat{z}_{0}|\hat{z}_{4}|e^{\textbf{i}(\theta_{4}-4\theta_{1})}+|\hat{z}_{1}|e^{\textbf{i}(\theta_{1}-\theta_{1})}|\hat{z}_{3}|e^{\textbf{i}(\theta_{3}-3\theta_{1})}w^{m}
+\epsilon^{2}[|\hat{z}_{2}|e^{\textbf{i}(\theta_{2}-2\theta_{1})}]^{2}w^{2m}\\
&\quad+|\hat{z}_{3}|e^{\textbf{i}(\theta_{3}-3\theta_{1})}|\hat{z}_{1}|e^{\textbf{i}(\theta_{1}-\theta_{1})}w^{3m}+
\epsilon^{2}|\hat{z}_{4}|e^{\textbf{i}(\theta_{4}-4\theta_{1})}\hat{z}_{0}w^{4m}\big|\\
=&\frac{1}{N}|e^{\textbf{i}4(-\theta_{1})}|\big|\hat{z}_{0}\hat{z}_{4}+\hat{z}_{1}\hat{z}_{3}w^{m}+\hat{z}_{2}^{2}w^{2m}+\hat{z}_{3}\hat{z}_{1}w^{3m}+\hat{z}_{0}\hat{z}_{4}w^{4m}\big| \\
=&\frac{1}{N}\big|\hat{z}_{0}\hat{z}_{4}+\hat{z}_{1}\hat{z}_{3}w^{m}+\hat{z}_{2}^{2}w^{2m}+\hat{z}_{3}\hat{z}_{1}w^{3m}+\hat{z}_{0}\hat{z}_{4}w^{4m}\big| \\
=&|\hat{y}_{4,m}|.\end{array}
\end{align*}

We next prove that $\hat{\tilde{z}}_{4}=\epsilon|\hat{z}_{4}|e^{\textbf{i}(\theta_{4}-4\theta_{1})}$
is the unique solution to \eqref{uytr5} for the generic analytic signal $\textbf{z}$.
Suppose that $\hat{\tilde{z}}_{0}, \hat{\tilde{z}}_{1}$ and $
\hat{\tilde{z}}_{2}$ are fixed.
Since   $\hat{z}_{3}$ is generic then   $\hat{\tilde{z}}_{3}=|\hat{z}_{3}|e^{\textbf{i}(\theta_{3}-3\theta_{1})}=\hat{z}_{3}e^{\textbf{i}3\theta_{1}}$ is
also  generic.
We next use Lemma \ref{lemmm2.1} to determine $\hat{\tilde{z}}_{4}$. For the generic
$\hat{\tilde{z}}_{3}$, recall that the corresponding  $\frac{v_{1}-v_{2}}{v_{1}-v_{3}}$
in Lemma \ref{lemmm2.1} \eqref{condition} is a rational polynomial  w.r.t $\hat{\tilde{z}}_{3}$.
Therefore  it is easy to check that \eqref{condition}  holds, and
the solution to \eqref{uytr5} is unique
for the choice of $(\hat{\tilde{z}}_{0}, \hat{\tilde{z}}_{1},
\hat{\tilde{z}}_{2}, \hat{\tilde{z}}_{3})$
$=(\epsilon\hat{z}_{0}, |\hat{z}_{1}|, \epsilon|\hat{z}_{2}|e^{\textbf{i}(\theta_{2}-2\theta_{1})}, $
$ |\hat{z}_{3}|e^{\textbf{i}(\theta_{3}-3\theta_{1})})$.
And  $(\hat{\tilde{z}}_{3},\hat{\tilde{z}}_{4})$ can be determined by \textbf{Step 2}  in Approach \ref{approach22}.
 Now we concluded the proof of the first part, and  the second part can be proved similarly.
\end{proof}


\subsubsection{Determination of $\hat{\tilde{z}}_{k}$  in \eqref{localequation} for $k\geq5$}\label{zqzi33}
We first prove that  if $(\hat{\tilde{z}}_{0}, \cdots ,\hat{\tilde{z}}_{k-1})=\big(\epsilon\hat{z}_{0}, |\hat{z}_{1}|, \cdots ,
 \mathfrak{E}(k-1)|\hat{z}_{k-1}|e^{\textbf{i}[\theta_{k-1}-(k-1)\theta_{1}]}\big)$, then
$\hat{\tilde{z}}_{k}=\mathfrak{E}(k)|\hat{z}_{k}|e^{\textbf{i}(\theta_{k}-k\theta_{1})}$ satisfies

 \begin{align}\label{zk}
  |\hat{y}_{k,m}|=\frac{1}{N}\big|\sum_{l=0}^{k}\hat{\tilde{z}}_{l}\hat{\tilde{z}}_{k-l}w^{lm}\big|,m=0,1,\ldots,r-1.
 \end{align}

Indeed,  from \begin{align}\begin{array}{lllll}
&\frac{1}{N}\big|\sum_{l=0}^{k}\hat{\tilde{z}}_{l}\hat{\tilde{z}}_{k-l}w^{lm}\big|\\
=&\frac{1}{N}\big|\sum_{l=0}^{k}|\hat{z}_{l}|e^{\textbf{i}(\theta_{l}-l\theta_{1})}|\hat{z}_{k-l}|e^{\textbf{i}[\theta_{k-l}-(k-l)\theta_{1}]}w^{lm}\big|\\
=&\frac{1}{N}|e^{\textbf{i}(k\alpha)}|\big|\sum_{l=0}^{k}\hat{z}_{l}\hat{z}_{k-l}w^{lm}\big| \\
=&|\hat{y}_{k,m}|,m=0,1,\ldots,r-1.\end{array}
\end{align}
we get that $\hat{\tilde{z}}_{k}=\mathfrak{E}(k)|\hat{z}_{k}|e^{\textbf{i}(\theta_{k}-k\theta_{1})}$ satisfies \eqref{zk}.
Through the similar  analysis in  the proof of Proposition \ref{yinli2.5} that
$\hat{\tilde{z}}_{k}=\mathfrak{E}(k)|\hat{z}_{k}|e^{\textbf{i}(\theta_{k}-k\theta_{1})}$ is the unique solution to  \eqref{zk}. Similarly,
we can prove that if $(\hat{\tilde{z}}_{0}, \cdots ,\hat{\tilde{z}}_{k-1})=\big(\epsilon\hat{z}_{0}, |\hat{z}_{1}|, \cdots ,
 \mathfrak{E}(k-1)|\hat{z}_{k-1}|e^{\textbf{i}[-\theta_{k-1}+(k-1)\theta_{1}]}\big)$, then
$\hat{\tilde{z}}_{k}=\mathfrak{E}(k)|\hat{z}_{k}|e^{\textbf{i}(-\theta_{k}+k\theta_{1})}$ is the unique solution to \eqref{zk}.
Now by Lemma \ref{lemmm2.1},  $\hat{\tilde{z}}_{k}$ can be determined through \textbf{Step 3} in  Approach \ref{approach22}.

\subsubsection{ \eqref{localequation} has two complex conjugation solutions}
According to sections \ref{zqzill}, \ref{zqzi22} and \ref{zqzi33}, the solutions to  \eqref{localequation}
are the vector  $\big(\mathfrak{E}(2)|\hat{z}_{2}|e^{\textbf{i}(\theta_{2}-2\theta_{1})}, \mathfrak{E}(3)|\hat{z}_{3}|e^{\textbf{i}(\theta_{3}-3\theta_{1})},\ldots,$
$\mathfrak{E}(N/2)|\hat{z}_{N/2}|e^{\textbf{i}[\theta_{N/2}-(N/2)\theta_{1}]}\big)$  and
its complex conjugate.

\section{A final remark}\label{future}
If $L$ is even and $r=\lceil N/L\rceil\geq5$, then
\begin{equation}\label{eq1891}
\frac{1}{N}(\hat{\tilde{z}}_{0}^{2}+\hat{\tilde{z}}_{N/2}^{2})=|\hat{y}_{0,m}|, m=0,1,\cdots,r-1.
\end{equation}
Therefore in this case  $|\hat{y}_{0,m}|$ can not be expressed by  the form of (\ref{pp4}B).
Clearly,    $\hat{\tilde{z}}_{0}\in\big[-\sqrt{N|\hat{y}_{0,0}|},\sqrt{N|\hat{y}_{0,0}|}\big]$.
Assume that we assign $\hat{\tilde{z}}_{0}$  by  an arbitrary value $\alpha\in \big[-\sqrt{N|\hat{y}_{0,0}|},\sqrt{N|\hat{y}_{0,0}|}\big]$.
Then it holds that  $\alpha\neq \pm\hat{z}_{0}$ with probability $1$.
Now it follows from $\frac{2}{N}\big|\hat{\tilde{z}}_{0}\hat{\tilde{z}}_{1}\big|=|\hat{y}_{1,0}|$ that
$|\hat{\tilde{z}}_{1}|=\frac{N|\hat{y}_{1,0}|}{2|\alpha|}$. For any $\theta\in [0, 2\pi)$, assign the phase $\theta$  to  $\hat{\tilde{z}}_{1}$ as $\frac{N|\hat{y}_{1,0}|}{2|\alpha|}e^{\textbf{i}\theta}$. Then,  by the proof of Theorem \ref{auxiliarytheorem}, the system w.r.t $\hat{\tilde{z}}_{2}$
 \begin{align}\label{eq2}
\frac{1}{N}\big|\hat{\tilde{z}}_{0}\hat{\tilde{z}}_{2}(1+w^{2m})+\hat{\tilde{z}}_{1}^{2}w^{m}\big|=
|\hat{y}_{2,m}|, m=0, \ldots, r-1
\end{align}
does not have a solution. This implies that,  with probability $1$,  Approach \ref{APPROACH3}
does not hold  for the FROG-PR problem  with
even $L$  and $r=\lceil N/L\rceil\geq5$, and  so a different approach need to be developed for this case in the future work.

\end{document}